\let\C\relax
\let\U\relax
\tikzset{circle split part fill/.style  args={#1,#2}{%
 alias=tmp@name, %
  postaction={%
    insert path={
     \pgfextra{%
     \pgfpointdiff{\pgfpointanchor{\pgf@node@name}{center}}%
                  {\pgfpointanchor{\pgf@node@name}{east}}%
     \pgfmathsetmacro\insiderad{\pgf@x}
      \fill[#1] (\pgf@node@name.base) ([xshift=-\pgflinewidth]\pgf@node@name.east) arc
                          (0:180:\insiderad-\pgflinewidth)--cycle;
      \fill[#2] (\pgf@node@name.base) ([xshift=\pgflinewidth]\pgf@node@name.west)  arc
                           (180:360:\insiderad-\pgflinewidth)--cycle;            %
         }}}}}  
\tikzset{
sibling distance=1cm, level distance=1cm
}
\newcommand{\ialphabet}{\Sigma}
\newcommand{\OMIT}[1]{}
\newcommand{\sem}[2]{[\![#1 ]\!]_{#2}}
\renewcommand{\A}{\mathcal{A}}
\newcommand{\B}{\mathcal{B}}
\renewcommand{\C}{\mathcal{C}}
\renewcommand{\D}{\mathcal{D}}
\newcommand{\N}{\mathbb{N}}
\newcommand{\T}{\mathcal{T}}
\newcommand{\U}{\mathcal{U}}
\newcommand{\F}{\mathcal{F}}
\newcommand{\rec}{\mathit{Rec}}
\newcommand{\ram}{\exists^{\mathrm{ram}}}
\newcommand{\kram}[1]{\exists^{#1\text{-}\mathrm{ram}}}
\newcommand{\enc}{\mathrm{enc}}
\newcommand{\Enc}{\mathrm{Enc}}
\newcommand{\fcns}{\mathrm{fcns}}
\newcommand{\rewrite}[1]{\stackrel{#1}{\hookrightarrow}}
\newcommand{\rk}{\mathrm{rk}}
\newcommand{\dom}{\mathrm{dom}}
\newcommand{\nodes}{\mathrm{nodes}}
\newcommand{\holes}{\mathrm{holes}}
\newcommand{\svector}[2]{\begin{psmallmatrix} #1 \\ #2 \end{psmallmatrix}}
\newcommand{\pre}{<_{\mathsf{p}}}
\newcommand{\pref}{\le_{\mathsf{p}}}
\newclass{\TwoEXP}{2EXP}
\begin{document}
\sloppy %

\title{Ramsey Quantifiers over Automatic Structures: Complexity and Applications
    to Verification}

\author{Pascal Bergstr\"{a}{\ss}er}
\orcid{0000-0002-4681-2149}
\affiliation{%
  \institution{Department of Computer Science, TU Kaiserslautern}
  \city{Kaiserslautern}
  \country{Germany}
}

\author{Moses Ganardi}
\orcid{0000-0002-0775-7781}
\affiliation{%
  \institution{Max Planck Institute for Software Systems (MPI-SWS)}
  \city{Kaiserslautern}
  \country{Germany}
}

\author{Anthony W. Lin}
\orcid{0000-0003-4715-5096}
\affiliation{%
  \institution{Department of Computer Science, TU Kaiserslautern}
  \city{}
  \country{}
}
\affiliation{%
  \institution{Max Planck Institute for Software Systems (MPI-SWS)}
  \city{Kaiserslautern}
  \country{Germany}
}

\author{Georg Zetzsche}
\orcid{0000-0002-6421-4388}
\affiliation{%
  \institution{Max Planck Institute for Software Systems (MPI-SWS)}
  \city{Kaiserslautern}
  \country{Germany}
}

\begin{abstract}
    Automatic structures are infinite structures that are finitely represented by 
synchronized finite-state automata. This paper concerns specifically automatic 
structures over finite words and trees (ranked/unranked).
We investigate the ``directed version'' of Ramsey quantifiers, which
express the existence of an infinite  directed clique. This subsumes the standard
``undirected version'' of Ramsey quantifiers.
Interesting connections between Ramsey quantifiers and two problems in verification
are firstly observed:
(1) reachability with B\"{u}chi and generalized B\"{u}chi conditions in regular 
model checking can be seen as Ramsey quantification over transitive automatic
graphs (i.e., whose edge relations are transitive), (2) checking monadic 
decomposability (a.k.a. recognizability) of automatic relations can be 
viewed as Ramsey quantification
over co-transitive automatic graphs (i.e., the complements of whose edge 
relations are transitive). We provide a comprehensive complexity
landscape of Ramsey quantifiers in these three cases (general, transitive,
co-transitive), all between \NL~and \EXP. In turn, this yields a 
wealth of new results with precise complexity, e.g., %
verification of subtree/flat prefix rewriting, as well as monadic 
decomposability over tree-automatic relations. %
We also obtain substantially simpler proofs, e.g., for \NL~complexity for monadic
decomposability over word-automatic relations (given by DFAs).

\end{abstract}

\begin{CCSXML}
<ccs2012>
<concept>
<concept_id>10003752.10003790.10002990</concept_id>
<concept_desc>Theory of computation~Logic and verification</concept_desc>
<concept_significance>500</concept_significance>
</concept>
<concept>
<concept_id>10003752.10003766.10003776</concept_id>
<concept_desc>Theory of computation~Regular languages</concept_desc>
<concept_significance>500</concept_significance>
</concept>
<concept>
<concept_id>10003752.10003766.10003772</concept_id>
<concept_desc>Theory of computation~Tree languages</concept_desc>
<concept_significance>500</concept_significance>
</concept>
</ccs2012>
\end{CCSXML}

\ccsdesc[500]{Theory of computation~Logic and verification}
\ccsdesc[500]{Theory of computation~Regular languages}
\ccsdesc[500]{Theory of computation~Tree languages}

\keywords{Ramsey quantifier, automatic structures, recurrent reachability, monadic decomposability}

\maketitle

\section{Introduction}\label{sec:intro}

Automatic structures are infinite structures whose domains are \emph{regular
sets} (represented by finite automata over finite words/trees) and whose
relations are \emph{regular relations} (represented by synchronized finite
word/tree automata)~\cite{BG00,BG04}.  They have been intensively studied in
the logic and automata community, and have been also applied to infinite-state
verification, especially the area of regular model checking.
In this paper, we are interested in algorithmic aspects of the 
\emph{Ramsey quantifiers} \cite{Rubin08}, which state the existence of 
infinite cliques. [Ramsey quantifiers are also known in model theory as
Magidor-Malitz quantifiers, e.g., see \cite{Hayut17}.]
Similar clique-like objects were also considered in the proofs of two seemingly
different problems over automatic structures; namely, the problems of
verification of liveness properties in regular model checking \cite{TL08,TL10}, 
as well as monadic decomposability of regular relations (a.k.a. finite 
recognizability) \cite{barcelo2019monadic,CartonCG06}.
In this paper, we investigate a more general notion of quantifiers that
generalize the classic Ramsey quantifiers and these aforementioned clique-like 
objects studied in the context of liveness and monadic decomposability.
Through our new notion of quantifiers, we establish a comprehensive picture of 
the computational complexity landscape of Ramsey quantifiers over automatic
structures, as well the aforementioned verification problems.
We first discuss the state-of-the-art of these seemingly disconnected problems.

\paragraph{Ramsey Quantifiers over Automatic Structures}
Blumensath and Gr\"{a}del were the first to systematically study automatic
structures~\cite{BG00,BG04}. A fundamental fact is that, given a first-order
formula $\varphi(\bm{x})$ and a word/tree automatic structure $\mathfrak{A}$
(with domain $A$), one can effectively compute a synchronized word/tree
automaton representing the set $\sem{\varphi}{\mathfrak{A}} = \{ \bm{a} \in
A^{|\bm{x}|} \mid \mathfrak{A} \models \varphi(\bm{a}) \}$ of solutions of
$\varphi$. In other words, regular relations are \emph{effectively closed under
all first-order operations}. Consequently, first-order~(FO) model checking over
automatic structures is decidable. 

In the seminal paper \cite{BG00} on automatic structures, it was already
observed that FO can be enriched with the quantifier ``there exist infinitely
many'' --- i.e., $\exists^{\infty} x \colon \varphi(x,\bm{y})$, which is true iff there
exist infinitely many $x$ such that $\varphi(x,\bm{y})$ --- while preserving
the above effective closure property and decidability of model checking. In
fact, assuming nondeterministic automata as finite representations of
$\sem{\varphi}{\mathfrak{A}}$, one can compute
$\sem{\exists^{\infty} x \colon \varphi}{\mathfrak{A}}$ in
polynomial-time.  A few years later, in the case of word automatic structures, Rubin~\cite{Rubin08} studied Ramsey quantifiers, which generalize $\exists^{\infty}$
by enforcing that these infinitely many elements form an infinite undirected
clique, and showed that Ramsey quantifiers preserve regularity as well, meaning FO
extended with Ramsey quantifiers is still decidable. Upon closer inspection,
Rubin's construction runs in doubly exponential time. \emph{Is this optimal?}
\emph{Does the same extend to tree-automatic structures?}

\paragraph{Liveness in Regular Model Checking}
\emph{Regular model checking (RMC)} is a generic verification framework that exploits 
regular languages and relations (e.g., over finite/$\omega$-words 
or trees) as symbolic representations of infinite systems
\cite{tree-rmc,rmc-survey,LR21}.
Various
flavors of automata and transducers for representing the transition relations 
are employed, e.g., word/tree automatic relations (or subsets thereof), 
$\omega$-automatic relations (or subsets thereof), and rational relations.
Since safety and reachability are undecidable in RMC (e.g.,
over automatic graphs), one focus of RMC has been to develop 
acceleration/widening techniques, which are semi-algorithms for computing 
reachability sets (i.e.,  $post^*( S )$) and reachability relations (i.e., 
transitive closure $R^*$ of the edge relation in the graph), that may
terminate on many interesting cases. Some of these semi-algorithms have general
completeness and termination guarantee, e.g., bounded local-depth acceleration
for automatic relations are guaranteed to compute reachability relations for 
pushdown systems (PDS) and ground-tree 
rewrite systems (GTRS) \cite{tree-rmc,tree-rmc2}, while flattable acceleration
for Presburger-definable relations is guaranteed to compute reachability 
relations for reversal-bounded counter systems and 2-dimensional vector 
addition systems with states \cite{LS06}. 

Reachability sets/relations can be directly used to solve safety. The challenge 
of verifying liveness is the necessity to deal with genuinely 
infinite paths (with no repeated configurations). To and Libkin~\cite{TL08,TL10} showed that one can decide
liveness (in the form of recurrent reachability) over word- and tree-automatic 
graphs, when the transitive closure $R^*$ of the edge relation is additionally 
supplied (e.g., by the aforementioned acceleration methods or otherwise). The 
algorithm runs in time polynomial in the size of the problem, with $R^*$ 
supplied as part of inputs. Their technique uses a kind of ``staircase
argument'' combined with Ramsey's Theorem to construct a B\"{u}chi automaton that
represents some witnessing infinite runs.

\paragraph{Monadic Decomposability and Recognizability}
A classic task in the theory of finite-state transductions is the 
problem to decide whether a given regular relation $R \subseteq
(\ialphabet^*)^k$ is \emph{recognizable}, i.e., if it can be expressed as a
finite union of cartesian products of regular languages (in symbols:
$R = \bigcup_{i=1}^n L_{i,1} \times \cdots \times L_{i,k}$ for some $n \in \N$ and
regular sets $L_{i,j} \subseteq \ialphabet^*$). In the formal verification 
terminology~\cite{monadic-decomposition}, such a relation is said to be 
\emph{monadically decomposable}, i.e., that it can be expressed as a Boolean 
combination of monadic predicates. The first important result was by Stearns~\cite{Stearns67} and Valiant~\cite{Valiant75}: Their algorithms for checking regularity of deterministic pushdown automata imply that given
deterministic rational relation---i.e., a relation 
$R \subseteq \ialphabet^* \times
\ialphabet^*$ recognized by \emph{deterministic asynchronous automata}, which is strictly more
general than binary automatic relations---can be checked to be recognizable in
doubly exponential time. This decidability was extended to general $k$-ary
deterministic rational relations by Carton et al.~\cite{CartonCG06}, which
yields decidability as well for the subclass of automatic relations. As noted by
Löding and Spinrath \cite{LS19}, the complexity of the algorithm for automatic relations in
\cite{CartonCG06} runs in doubly exponential time. Using their new 
polynomial-time algorithm for checking regularity for deterministic visibly
pushdown automata, Löding and Spinrath showed that this could be improved
to single exponential time for binary automatic relations. The complexity for
automatic relations was fully settled by Barcel\'o et al.
\cite{barcelo2019monadic} by showing that this problem is \NL-complete (resp.
\PSPACE-complete) when $R$ is presented as a deterministic (resp.
nondeterministic) automaton. The proof technique in~\cite{barcelo2019monadic} is an extremely intricate refinement and analysis of the staircase argument used
by To and Libkin~\cite{TL08} for recurrent reachability for automatic relations.

\paragraph{Contributions} Our account of the state-of-the-art of the
aforementioned three research directions seems to suggest that there might be
some connections between them. To what extent are they connected? 
Is there a more fundamental notion that unifies them? 
These questions are hitherto 
open, but as we shall see in this paper the answer is a resounding yes. 
We pinpoint that the \emph{directed Ramsey
quantifiers}---which ask for the existence of infinite directed cliques (instead of infinite
undirected cliques as in \cite{Rubin08})---is a fundamental concept that
underlies the above three problems, and lets us study them under the same
umbrella, while inferring the optimal complexity and even new results. On the 
one hand, the directed Ramsey quantifiers subsume
the standard Ramsey quantifiers. 

On the other hand,
recurrent reachability over automatic graphs~\cite{TL08} can be
seen as a Ramsey quantifier over a transitive binary relation, whereas monadic
decomposability over automatic relations~\cite{CartonCG06,barcelo2019monadic,LS19} 
can be construed as a Ramsey quantifier over co-transitive binary relations. 
Our results are summarized in \Cref{fig:results}.

Firstly, from the proof by Barcel\'o et al. \cite{barcelo2019monadic},
it is possible to infer that the Ramsey quantifier can be evaluated on regular 
relations in $\NL$,
which substantially improves the doubly exponential-time algorithm of Rubin
\cite{Rubin08}. Unfortunately, their argument relies on an intricate Ramsey 
argument on the transition monoid of the automaton.
Our contribution is a substantially simpler argument that avoids the use of
the transition monoid altogether, which we show to generalize to the case of 
tree-regular relations (which is not the case with the proof of
\cite{barcelo2019monadic}).
More precisely, our approach divides the proof for regular relations into two 
steps:
(i) First, we argue that one can assume infinite cliques witnessed by accepting runs that form
a {\em comb of combs}.
(ii) Then, we argue that the runs can be ``merged'' together
so that it can be witnessed by a single run of a polynomial-size Büchi
automaton. This way, we obtain the same complexity as~\cite{barcelo2019monadic}.

For tree-regular relations we can easily extend step (i).
The comb of combs structure of the accepting runs can be witnessed by an alternating Büchi tree automaton,
which yields the complexity of $\EXP$ for the Ramsey quantifier on tree-regular relations.
However, step (ii) is provably impossible over tree-regular relations, since as we show, the infinite clique problem is $\EXP$-hard.
For the special cases of transitive and co-transitive relations we need further separate arguments
that enable us to evaluate the Ramsey quantifier in $\P$. The case for
transitive relations can be inferred from the proof in \cite{TL08}, but not so
for the co-transitive case.

Finally, we apply our results to decidability and complexity of 
recurrent reachability with generalized Büchi conditions, and automatic
structures over unranked trees. We show, for example, decidability (in fact
in polynomial-time) of recurrent reachability of subtree/flat prefix rewriting,
answering an open question by Löding and Spelten~\cite{loding2007transition}
and decidability (in fact, $\EXP$-completeness) of recurrent reachability with 
generalized Büchi conditions of ground tree rewrite systems, answering an open
question by Löding~\cite{Loding06}.

\paragraph{Organization} We provide a more detailed summary of our main results
in \cref{sec:results}. We fix notation and basic terminologies in
\cref{sec:preliminaries}. We then start with the word case in
\cref{sec:words} and proceed to the tree case in \cref{sec:trees}.
Applications and generalizations to unranked trees are given in, respectively,
\cref{sec:generalized} and \cref{sec:unranked}.

\begin{table*}

\small
\renewcommand\tabularxcolumn[1]{m{#1}}
\newcolumntype{C}{>{\centering\arraybackslash}X}
\renewcommand{\arraystretch}{1.2}

\begin{tabularx}{0.9\textwidth}{>{\hsize=0.28\hsize}C >{\hsize=0.3\hsize}C >{\hsize=0.42\hsize}C}
 & regular relations & tree-regular relations \\\toprule
\makecell{Automaton construction \\ for Ramsey quantifier} & logspace & \makecell{exponential time \\ poly-time for transitive relations} \\\midrule
\makecell{Recurrent reachability \\ \& infinite clique } & \NL-complete$^*$ & \makecell{\EXP-complete for NTA, D$\downarrow$TA \\ 
\P-complete for transitive$^*$ or \\ co-transitive relations or D$\uparrow$TA} \\\midrule
Recurrent reachability with generalized Büchi condition & \PSPACE-complete & \EXP-complete \\\midrule
Monadic decomposability & \makecell{\NL-complete for DFA$^*$ \\ \PSPACE-complete for NFA$^*$} & \makecell{\P-complete for D$\uparrow$TA, D$\downarrow$TA \\ \EXP-complete for NTA} \\\bottomrule
\end{tabularx}
\caption{Complexity results. Those marked with $^*$ were known, but we provide simpler proofs. The other results are new.
}
\label{fig:results}
\end{table*}

\section{Detailed summary of main results}
\label{sec:results}
To improve readability, we provide a detailed summary of our main results 
in this section before we take a deeper dive into the proofs. 
Unless otherwise specified, the completeness results mentioned in this section
(and \cref{fig:results}) hold for 
NFAs and DFAs in the word case and
NTAs, D$\uparrow$TAs, and D$\downarrow$TAs in the tree case.
We define the directed Ramsey quantifier:

\begin{definition}
Let $\mathfrak{A}$ be a structure with domain $A$.
The \emph{Ramsey quantifier} $\ram$
over an $\mathfrak{A}$-formula $\varphi$ with $k+2$ free variables
is defined for all $\bm{c} \in A^k$ by
$\mathfrak{A} \models \ram x,y \colon \varphi(x,y,\bm{c})$ if and only if
there is an infinite sequence $(a_i)_{i \ge 1}$ of pairwise distinct elements $a_i \in A$
so that $\mathfrak{A} \models \varphi(a_i,a_j,\bm{c})$ for all $1 \le i < j$.
\end{definition}
We deviate from the definition of the Ramsey quantifier found in the literature, see \cite{gradel2020automatic},
requiring $\mathfrak{A} \models \varphi(a_i,a_j,\bm{c})$ {\em for all} $i \neq j$, in the definition above.
Over (tree-)regular relations the two quantifier definitions can be simulated by each other, see \Cref{sec:app-undirected}.
Furthermore, there are also higher-dimensional versions $\kram{d}$ of the Ramsey quantifier,
which will not be considered in this paper.

\paragraph{Evaluating Ramsey quantifiers}
If $R$ is a binary \mbox{(tree-)} regular relation, then evaluating $\ram x,y \colon R(x,y)$ is the problem
of checking whether $R$ contains an \emph{infinite (directed) clique}, i.e., an infinite sequence
$(a_i)_{i \ge 1}$ of distinct elements of $A$ such that $(a_i,a_j) \in R$ for all $1 \le i < j$.
It follows from \cite{barcelo2019monadic} that the infinite clique problem over word-regular relations is $\NL$-complete.
We provide a much simpler proof by considering a slightly more general setting.
Instead of the infinite clique problem we consider the evaluation of the Ramsey quantifier on a $(k+2)$-ary
(tree-)regular relation $R \subseteq A^{k+2}$, i.e., compute an automaton for
$\sem{\ram x,y \colon R(x,y,\bm{z})}{} = \{ \bm{c} \in A^k \mid \ram x,y \colon R(x,y,\bm{c}) \}$.

\begin{restatable}{theorem}{wordramsey}
	\label{thm:word-ramsey}
	Given a regular relation $R \subseteq (\Sigma^*)^{k+2}$ by an NFA $\A$\footnotemark
	\footnotetext{In this and the following theorems, the parameter $k$ is part of the input.},
	one can construct an NFA for the relation $\sem{\ram x,y \colon R(x,y,\bm{z})}{}$ in logspace.
	In particular, the infinite clique problem over regular relations is in $\NL$.
\end{restatable}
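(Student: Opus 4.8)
The plan is to reduce the construction to detecting, by a polynomial-size B\"uchi automaton, an $\omega$-regular ``witness pattern'' along a single infinite spine, and then to wrap this into an NFA over the free variables $\bm z$ by a standard projection-plus-lasso argument. Concretely, I fix a candidate value $\bm c$ and view the binary relation $R_{\bm c}(x,y) := R(x,y,\bm c)$; an infinite clique for $\bm c$ is an infinite set of pairwise distinct words that are pairwise related. Since there are infinitely many distinct words over a finite alphabet, K\"onig's lemma applied to their prefix tree yields an infinite word $\alpha \in \Sigma^\omega$ (the \emph{spine}) together with a subsequence of the clique whose agreement with $\alpha$ grows without bound. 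The goal of the core argument is to show that such a clique exists iff $\alpha$ and $\bm c$ admit a certain recurring run pattern of $\A$, checkable by a B\"uchi automaton $\B$ that reads the convolution $\bm c \otimes \alpha$ and tracks only a bounded tuple of $\A$-states. Granting this, I obtain the target NFA $\mathcal N$ as follows: $\mathcal N$ reads $\bm c$ while simulating $\B$ and guessing $\alpha$ in lockstep; once $\bm c$ is exhausted (its tracks enter padding) it continues $\B$ on the still-unread suffix of $\alpha$ via $\varepsilon$-transitions, and verifies acceptance by the usual lasso gadget (guess the base state $s$ and an accepting state on the cycle, walk $s \to \cdots \to s$ by $\varepsilon$-moves through the accepting state, and accept). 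Every transition is locally computable from $\A$'s transition table and from $k$, so $\mathcal N$ is produced by a logspace transducer. For the binary case $k=0$ there is no $\bm z$ to read and membership is exactly B\"uchi-nonemptiness of $\B$, giving the claimed $\NL$ bound.

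\emph{Step (i): a comb-of-combs normal form.} Given a clique, I extract the spine $\alpha$ and pass to a subsequence $(a_i)$ so that each $a_i$ shares a prefix $\alpha[1..p_i]$ with $p_i$ strictly increasing and then branches off along a \emph{tooth} $a_i[p_i+1..]$. Fixing an accepting run $\rho_{ij}$ of $\A$ on $a_i \otimes a_j \otimes \bm c$ for each $i<j$, I observe that up to position $p_i$ every such run reads the \emph{same} input prefix, so only its state at that position -- an element of the finite state set of $\A$ -- varies with $j$. Repeatedly applying the pigeonhole principle to these states (and to the analogous states at which the right track later peels off onto its own tooth) I thin the clique to one on which a bounded \emph{profile} -- the diagonal state where the two tracks separate, the state reached after a tooth has been consumed, and the acceptance information seen -- is identical for all elements. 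This is the outer comb; the inner comb records, for a fixed left element, how its run behaves as the right element ranges over the ever-longer remaining members. Crucially this uses only finite pigeonhole on $\A$-states and never the transition monoid of $\A$, which is what keeps the argument elementary and later portable to trees.

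\emph{Step (ii): merging the runs.} The comb-of-combs exhibits a single profile realized infinitely often, so the infinitely many pairwise runs can be overlaid onto one infinite run of a B\"uchi automaton $\B$ that, as it reads $\alpha$, maintains just (a) the \emph{diagonal thread}, namely the state of $\A$ after reading $(\alpha,\alpha,\bm c)$ up to the current position, shared by every pair still on the spine, and (b) a single \emph{peeled thread} standing for a generic left element that has already branched off, whose run is continued against the advancing spine and its own padded tooth. The B\"uchi condition demands that the profile recur, i.e.\ that the peeled thread infinitely often reaches the acceptance pattern; one checks $\B$ has only polynomially many states. For the converse, from an accepting run of $\B$ one pumps the recurring segment of $\alpha$ to manufacture infinitely many teeth all carrying the same profile and reads off, for each pair, an accepting run of $\A$ on the corresponding convolution; distinctness of the constructed $a_i$ is automatic since the branch points $p_i$ are strictly increasing. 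This yields the equivalence underlying the wrapper above.

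\emph{Main obstacle.} The difficulty lies entirely in the teeth, that is, in the inner comb: whereas the shared spine prefix is summarized by a single diagonal state, each run must still traverse an \emph{unbounded} tooth on one track while the other track keeps advancing along the spine, and one merged run must certify \emph{all} pairwise memberships simultaneously. The key is therefore to show that the post-branch evolution of a run is captured by a bounded amount of state and made uniform across the clique by pigeonhole, so that a single peeled thread suffices and no powerset construction -- hence no exponential blow-up -- is needed. Making this bounded summary correct for every pair $(a_i,a_j)$ at once, rather than for one fixed left index at a time, is precisely the point that the comb-of-combs normal form of Step (i) is designed to resolve.
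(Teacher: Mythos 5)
Your proposal takes essentially the same route as the paper's proof: extract a comb (spine plus teeth) from the clique via K\"onig's lemma (\cref{lem:comb}), uniformize the pairwise accepting runs into a comb-of-combs normal form in which the continuation runs of already-peeled elements coincide (\cref{lem:comb-of-combs,lem:easy-combs}), merge them into a constant number of threads tracked by a polynomial-size B\"uchi automaton (\cref{prop:combs-buechi}), and extract the NFA by a lasso gadget plus projection over the $\bm{z}$-tracks. The only point where your write-up understates the required combinatorics is the uniformization step: the objects that must be made uniform (the common run prefix on the diagonal $\beta_n \otimes \beta_n$, shared by all pairs $(i,j)$ with $i,j>n$, and the initial state of the continuation run $\mu_{i,j}$, which must be made independent of the left index $i$) are indexed by \emph{pairs}, so plain per-element pigeonhole --- your ``profile identical for all elements'' --- does not suffice, and one needs Ramsey's theorem for pairs (equivalently, a careful iterated-pigeonhole-with-diagonalization), which is exactly what the paper's \cref{lem:comb-of-combs,lem:easy-combs} supply.
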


We show that the complexity of the infinite clique problem increases from $\NL$ to $\EXP$ 
when considered over tree-regular relations given by NTAs or D$\downarrow$TAs.
Let $\T_\Sigma$ denote the set of ranked trees over alphabet $\Sigma$.
\begin{restatable}{theorem}{treecomplete}
\label{thm:tree-lowerbound}
The infinite clique problem over tree-regular relations $R \subseteq \T_\Sigma \times \T_\Sigma$ is 
$\EXP$-complete if $R$ is given as NTA or D$\downarrow$TA, and
$\P$-complete if $R$ is given as D$\uparrow$TA.
\end{restatable}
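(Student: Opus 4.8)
The plan is to prove matching upper and lower bounds for each of the three representations, with the entire contrast governed by whether the automaton is deterministic \emph{bottom-up}.

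\textbf{Upper bounds.} For NTA (and hence D$\downarrow$TA, which is a special NTA of the same size) I would reuse step~(i) from the word case: a Ramsey argument on the product automaton shows that any infinite clique can be rearranged into one witnessed by accepting runs forming a comb of combs. The new ingredient for trees is that such a witness is recognised by an \emph{alternating B\"uchi tree automaton} $\mathcal C$ of size polynomial in $\A$, whose universal branching follows the branching of the underlying input trees while its B\"uchi component guesses the spine runs and checks that the repeated ``teeth'' close up into a genuine accepting pattern. Since emptiness of alternating B\"uchi tree automata is in $\EXP$, this yields the $\EXP$ upper bound. For D$\uparrow$TA the decisive feature is that the bottom-up run on every tree is \emph{unique}, so the comb-of-combs analysis incurs no subset-construction blow-up: the relevant profiles of trees can be computed bottom-up, and the existence of an infinite clique reduces to a reachability-plus-cycle condition in a configuration graph of polynomial size, solvable in $\P$. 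The one subtlety here is the requirement that the clique consist of \emph{distinct} trees, which I would enforce by demanding a pumpable component that generates infinitely many pairwise-distinct witnesses.

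\textbf{Lower bounds.} For the $\EXP$ lower bound I would reduce from an $\EXP$-complete problem whose branching mirrors the alternation in the comb-of-combs characterisation --- e.g.\ the acceptance problem for alternating polynomial-space Turing machines, equivalently non-emptiness of alternating tree automata. The branching of the machine's computation is encoded into the branching of the trees forming the clique, so that an infinite clique exists exactly when the machine accepts, with a padding counter supplying the infinitely-many-distinct requirement. The principal obstacle is pushing this already through D$\downarrow$TA: top-down determinism severely restricts the recognisable languages, so I must arrange the encoding so that every transition of $R$ is \emph{forced} by the node labels read from the root downwards, while bottom-up the relation still faithfully simulates the alternating acceptance condition; since any D$\downarrow$TA is an NTA, this simultaneously gives $\EXP$-hardness for NTA. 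For the $\P$ lower bound, $\P$-hardness for D$\uparrow$TA follows by reduction from a standard $\P$-complete problem such as the monotone circuit value problem: the circuit is encoded as a tree whose bottom-up evaluation the D$\uparrow$TA simulates, and an infinite clique exists iff the circuit evaluates to true.

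I expect the D$\downarrow$TA hardness encoding to be the crux: reconciling top-down determinism of the \emph{representation} with the expressive power needed to simulate alternation requires careful bookkeeping so that the forced top-down run nevertheless leaves, when read bottom-up, enough freedom to encode both the existential and universal moves of the computation.
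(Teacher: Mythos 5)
Your upper-bound sketch follows the paper's route: comb-of-combs plus a polynomial-size alternating B\"uchi tree automaton for NTAs (emptiness in $\EXP$), and exploitation of the uniqueness of bottom-up runs to obtain a polynomial-size nondeterministic B\"uchi tree automaton for D$\uparrow$TAs. The $\P$-hardness for D$\uparrow$TAs is also fine in spirit (the paper reduces from D$\uparrow$TA non-emptiness rather than from circuit value, but these are interchangeable), modulo the standard device for manufacturing infinitely many distinct witnesses.

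The genuine gap is in the $\EXP$-hardness reduction, which is the heart of the theorem. You say you will ``encode the branching of the machine's computation into the branching of the trees forming the clique,'' but you never identify the one mechanism that makes such a reduction possible at all: how the purely \emph{pairwise} conditions $(t_i,t_j)\in R$ of an infinite clique can be made to aggregate into a single conjunctive (intersection / alternation) constraint. This is precisely where the word case breaks down --- an analogous reduction over words would give $\PSPACE$-hardness, contradicting the $\NL$ upper bound --- so the mechanism must be genuinely tree-specific. The paper's mechanism is: the clique elements are growing binary skeletons decorated with $\Sigma$-trees; $R$ demands that at every \emph{minimal new} node of $t_j$ relative to $t_i$ an auxiliary automaton $\A'$ accepts, where $\A'$ cycles through fresh states $p_1,\dots,p_n$ along the skeleton and spawns $q_i$ from $p_i$. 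Along a single path $v_1,\dots,v_n$ in $t_n$, the pairwise constraints $(t_{i-1},t_n)\in R$ for $i=1,\dots,n$ then launch $n$ pairwise \emph{disjoint} runs that all funnel into the decoration at $v_n$, forcing one tree into $\bigcap_{i} L(\A_{q_i})$. Without an idea of this kind, nothing in your encoding prevents a single NTA run (checkable in $\P$) from witnessing all the pairwise constraints simultaneously, and the padding counter only supplies distinctness, not hardness. Relatedly, the difficulty you flag as the crux --- reconciling top-down determinism with alternation for D$\downarrow$TAs --- dissolves in the paper's approach: intersection non-emptiness is already $\EXP$-hard for D$\downarrow$TAs, and the construction preserves top-down determinism, so no direct simulation of alternation inside a D$\downarrow$TA-recognizable relation is needed.
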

For the exponential lower bound, we present a reduction from intersection nonemptiness for NTAs and D$\downarrow$TAs.  
This is surprising, because an analogue reduction in the word case does not
exist: This would yield a $\PSPACE$ lower bound for the infinite clique problem
over words, but the latter belongs to $\NL$.

For the exponential upper bound of \cref{thm:tree-lowerbound}, we prove the tree analogue of \cref{thm:word-ramsey}.
It even holds when the relation $R$ is given as an alternating tree automaton (ATA),
which allows us to apply it to recurrent reachability with {\em generalized Büchi condition}.

\begin{restatable}{theorem}{treeramsey}\label{thm:tree-ramsey}
Given an ATA (D$\uparrow$TA) $\A$ for a tree-regular relation $R \subseteq (\T_\Sigma)^{k+2}$,
one can construct in exponential (polynomial) time an NTA for the relation $\sem{\ram x,y \colon R(x,y,\bm{z})}{}$.
\end{restatable}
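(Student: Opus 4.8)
The plan is to prove the tree analogue of \cref{thm:word-ramsey}, following the two-step strategy outlined in the introduction: first establish that any infinite directed clique can be witnessed by accepting runs forming a ``comb of combs,'' and then encode the existence of such a structure as the (non)emptiness of a suitable tree automaton. The key conceptual difficulty is that in the tree setting an accepting run is itself a labeled tree, so the merging step (ii) from the word case is no longer possible by a simple pumping argument; the introduction already warns that step (ii) is provably impossible here. The resolution is to use the extra expressive power of \emph{alternating} Büchi tree automata to directly certify the comb-of-combs structure rather than collapsing it into a single run.

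First I would make the comb-of-combs structure precise. For each element $a_i$ of the clique we have an accepting run of $\A$; the synchronized convolution encodes pairs $(a_i,a_j)$, and by a Ramsey-type argument on the finitely many ``profiles'' that the runs can exhibit along their common prefixes, I would extract a subsequence so that the runs agree on a nested family of prefixes and the transitions used beyond each branching point stabilize. Concretely, the infinite clique induces, along each branch of the domain trees, an infinite word-level comb to which the word-case analysis of \cref{thm:word-ramsey} can be applied, while the branching of the trees is organized into an outer comb; this is the ``comb of combs.'' Because this step is purely combinatorial and only relies on regularity plus finiteness of state sets, it extends from words to trees with essentially the same reasoning, which is why the introduction says step (i) extends easily.

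Next I would design an alternating tree automaton $\B$ over the alphabet for tuples $\bm{c}$ whose accepting runs correspond exactly to witnesses of $\ram x,y\colon R(x,y,\bm{c})$. The automaton guesses the common skeleton of the clique elements and uses alternation to simultaneously verify, along every relevant branch, that the guessed runs can be extended into infinitely many pairwise compatible accepting runs of $\A$; the Büchi acceptance condition enforces that each individual run is accepting and that infinitely many distinct elements are produced. Starting from an ATA presentation of $R$, the product with the comb-of-combs bookkeeping yields an alternating Büchi tree automaton of size polynomial in $|\A|$. Translating this ATA into an equivalent NTA incurs a single exponential blow-up (via the standard simulation of alternating tree automata by nondeterministic ones), giving the exponential-time construction in the general case.

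The main obstacle, and the point requiring the most care, is the correctness of the alternating encoding: I must ensure that the branchings chosen by the alternating automaton genuinely correspond to an \emph{infinite} set of pairwise distinct elements all related by $R$, rather than merely to a finite or inconsistent collection of local runs. This is where the comb-of-combs normal form from step (i) is essential, as it guarantees that a successful alternating run can always be ``unfolded'' back into a bona fide infinite directed clique. For the D$\uparrow$TA case the argument specializes: determinism of the bottom-up automaton pins down the run on each subtree, so the guessing collapses and the whole construction can be carried out deterministically in polynomial time, avoiding the alternation-removal blow-up and yielding the polynomial-time bound stated in the theorem.
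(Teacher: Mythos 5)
Your proposal follows the paper's proof essentially step for step: a Ramsey/pigeonhole normal form putting the accepting runs of the clique into a comb-of-combs shape, a polynomial-size alternating B\"uchi tree automaton certifying that shape on an encoding of the comb, a single-exponential alternation removal to obtain the NTA, and, for D$\uparrow$TAs, the observation that uniqueness of runs makes the pieces $\mu_{i,j},\nu_{i,j}$ depend only on $j$, so that alternation (and its blow-up) can be avoided. The only quibbles are cosmetic: the paper's ``comb of combs'' is the trie of convolutions $v_i \otimes v_j$ arising from a vertical context-forest decomposition of the trees (not a branch-by-branch reduction to the word case), and the D$\uparrow$TA construction yields a polynomial-size \emph{nondeterministic} B\"uchi tree automaton rather than a deterministic one.
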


If we make further assumptions on the relation $R$, we obtain a better complexity for NTAs.
We say that a $(k+2)$-ary relation $R$ over $A$ is \emph{transitive}
if the binary relation $\{(a,b) \mid (a,b,\bm{c}) \in R\}$ is transitive for all $\bm{c} \in A^k$.
\begin{restatable}{theorem}{transitiveramsey}\label{thm:transitive-ramsey}
Given an NTA $\A$ for a transitive tree-regular $R \subseteq (\T_\Sigma)^{k+2}$, one can construct in polynomial time an NTA for the relation $\sem{\ram x,y \colon R(x,y,\bm{z})}{}$.
In particular, the infinite clique problem over transitive tree-regular relations is in $\P$.
\end{restatable}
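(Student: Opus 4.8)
The plan is to exploit transitivity to reduce the infinite-clique problem to a purely \emph{local} condition on consecutive elements, which in turn lets us replace the alternating automaton underlying \cref{thm:tree-ramsey} by a nondeterministic one of polynomial size.

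First I would establish the following reduction. Fix the parameters $\bm{c}$ and write $R_{\bm{c}} = \{(a,b) \mid (a,b,\bm{c}) \in R\}$, which is transitive by assumption. I claim that $\mathfrak{A} \models \ram x,y \colon R(x,y,\bm{c})$ holds iff there is an infinite sequence of \emph{pairwise distinct} trees $a_1, a_2, \ldots$ with $(a_i,a_{i+1}) \in R_{\bm{c}}$ for all $i \ge 1$. The forward direction is immediate. For the converse, transitivity of $R_{\bm{c}}$ upgrades the consecutive edges $(a_i,a_{i+1})$ to all pairs $(a_i,a_j)$ with $i < j$, and distinctness is assumed, so the sequence is an infinite directed clique. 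This is exactly the step that fails in the general non-transitive case and forces alternation there: without transitivity, a witness at a branching node of the comb of combs must certify relatedness to \emph{all} later elements simultaneously, whereas here it need only certify relatedness to its single successor.

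Second I would normalize the witness into a single comb. Since there are only finitely many trees of each size and the $a_i$ are distinct, their sizes are unbounded; passing to a subsequence (still a clique, hence still a valid consecutive chain) I may assume $|a_1| < |a_2| < \cdots$. Reading $a_i \otimes a_{i+1} \otimes \bm{c}$, an accepting run of $\A$ splits into a part over the common domain and a part over the \emph{extra} subtrees of the larger argument; abstracting each $a_i$ by the profile of states that $\A$ assumes at its frontier, consecutive runs can be glued along a single infinite spine --- a comb rather than a comb of combs --- precisely because only the successor edge must be maintained. Over the input $\bm{c}$, the output NTA $\B$ then computes (in the style of \cref{thm:tree-ramsey}, but without alternation) the finite set of reachable frontier profiles of $\A$ together with the transition relation on profiles induced by extending one tree to a larger $R_{\bm{c}}$-successor. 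Detecting an infinite distinct consecutive chain amounts to a recurrence (Büchi) condition on this finite profile graph: a reachable recurrent cycle through profiles that forces strict growth, which simultaneously guarantees infinitely many genuinely larger, hence distinct, trees along the spine. Because profiles are subsets of states of $\A$ and everything is driven nondeterministically, $\B$ has size polynomial in $\A$ and can be produced in polynomial time, yielding the $\P$ upper bound; this mirrors, and can partly be inferred from, the recurrent-reachability argument of \cite{TL08}.

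The main obstacle is the second step: making the comb normalization rigorous for trees and proving that the consecutive-edge condition is genuinely witnessable along one infinite spine by a \emph{polynomial} family of profiles, i.e.\ that transitivity collapses the alternation of the general construction into nondeterminism while the Büchi recurrence condition faithfully captures ``infinitely many distinct trees''. Showing the profile abstraction to be both sound (every recurrent profile cycle yields a real infinite distinct chain) and complete (every infinite clique induces such a cycle), uniformly in $\bm{c}$, is where the real work lies.
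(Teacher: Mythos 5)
Your first step---transitivity lets you verify only the consecutive edges $(a_i,a_{i+1})$ and recover the full clique afterwards---is exactly the observation the paper exploits, and your handling of the $(k{+}2)$-ary case by convolving with the parameters and projecting is also the paper's. The genuine gap is in your second step, and it is not merely a matter of ``making things rigorous'': the one concrete reason you give for polynomiality is wrong. You abstract each $a_i$ by the profile of states that $\A$ assumes at the frontier of $a_i$ inside $a_{i+1}$ and build a transition graph on profiles; but a frontier has unboundedly many positions, so a profile is at best an arbitrary subset of $Q$, of which there are exponentially many---``profiles are subsets of states, hence $\B$ is polynomial'' does not follow. Worse, even gluing only the consecutive runs along one spine is not automatic: a node $v$ that first appears in $a_m$ lies in $\dom(a_i\otimes a_{i+1})$ for \emph{every} $i\ge m-1$, so infinitely many of the runs pass through $v$, and unless they are forced to coincide there no finite-state device of any size can check them all simultaneously.

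The paper closes precisely this hole with \cref{lem:tree-run-comb}: repeated pigeonhole and Ramsey arguments normalize the accepting runs into the coherent form $\rho(s_i,s_j)=\kappa_1\cdots\kappa_{i-1}\lambda_i\mu_{i,i+1}\cdots\mu_{i,j-1}\nu_{i,j}$, compatible with the comb decomposition, so that all runs share the prefix $\kappa_1\cdots\kappa_{i-1}$ on the common region. Transitivity is then used exactly where you predicted: the mode simulating the $\mu_{i,j}$ pieces (needed only for non-consecutive pairs) is deleted from the alternating B\"uchi tree automaton of \cref{lem:tree-buchi}, after which each input node carries at most \emph{three} run nodes, and a powerset construction restricted to subsets of size at most three yields a nondeterministic B\"uchi tree automaton of polynomial size (\cref{lem:transitive-buchi}). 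So your plan identifies the right place to exploit transitivity, but the run-coherence lemma and the resulting constant width bound (three, not ``subsets of $Q$'') are the actual content of the proof and are absent from your argument; your treatment of distinctness via strict growth is fine and corresponds to the paper's requirement that the context forests $\beta_i$ be nontrivial.
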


A binary relation $R \subseteq A \times A$ is {\em co-transitive} if its complement $(A \times A) \setminus R$ is a transitive relation.

\begin{restatable}{theorem}{cotransitive}
\label{thm:cotransitive}
The infinite clique problem over co-transitive tree-regular relations
$R \subseteq \mathcal{T}_\Sigma \times \mathcal{T}_\Sigma$ given as NTA is $\P$-complete.
\end{restatable}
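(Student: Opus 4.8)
The plan is to exploit co-transitivity as a strong structural constraint that collapses the quadratically many clique constraints into a linear chain of purely \emph{positive} local conditions, thereby avoiding both the alternation used in \cref{thm:tree-ramsey} and the (exponential) complementation of the input NTA. Writing $S(a)=\{b : (a,b)\in R\}$ for the set of $R$-successors of $a$, and unfolding co-transitivity into the contrapositive form ``$(x,z)\in R$ implies $(x,y)\in R$ or $(y,z)\in R$, for every $y$'', I would first establish two lemmas. \textbf{(L1, nesting)}: for all $a,b$, either $(a,b)\in R$ or $S(a)\subseteq S(b)$; thus a missing edge forces an inclusion of successor sets. \textbf{(L2, forcing)}: if $a_1,a_2,\dots$ satisfy $(a_i,a_{i+1})\in R$ and $(a_{i+1},a_i)\notin R$ for every $i$, then $(a_i,a_j)\in R$ for all $i<j$, by induction on $j$ applying co-transitivity to $x=a_i$, $z=a_j$, $y=a_{j+1}$ (the missing back-edge $(a_{j+1},a_j)\notin R$ kills the wrong disjunct). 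A convenient side effect of (L2) is that the two local conditions already force the $a_i$ to be pairwise distinct, so any such sequence is automatically an infinite clique.

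Next I would show that deciding the infinite clique reduces to two sub-questions. Applying Ramsey's theorem to the ``back-edges'' $(a_j,a_i)$ of an arbitrary infinite clique yields a subsequence on which either all back-edges are absent or all are present. Hence $R$ has an infinite clique iff there is \emph{(B)} a sequence as in (L2) (a strictly forward $R$-path), or \emph{(A)} an infinite \emph{bidirectional} clique, i.e.\ an infinite set $S$ with $S\times S\setminus\Delta\subseteq R$. For case~(B), I would phrase the existence of an infinite strictly-forward path as a recurrent-reachability instance: the recurrence set $\{a : (a,b)\in R \text{ for infinitely many }b\}$ is $\exists^{\infty}$-definable and hence computable in polynomial time (as recalled in the introduction), while forward edges can be guessed using the nondeterminism of the NTA. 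The only non-positive ingredient is the back-non-edge $(a_{i+1},a_i)\notin R$; this I would eliminate using (L1), which turns the sequence into a descending chain $S(a_1)\supseteq S(a_2)\supseteq\cdots$ witnessed by concrete elements, so that the algorithm only ever \emph{asserts} membership in $R$ and never complements the automaton, with tree pumping supplying infinitely many distinct witnesses.

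For case~(A) I would again lean on co-transitivity: for a bidirectional clique $S$ and any external element $c$, co-transitivity (applied to a pair of $S$) forces $c$ to be a universal $R$-predecessor or a universal $R$-successor of $S$. Together with (L1) this pins down enough of the global shape that the existence of an infinite complete set of trees can be certified by a polynomial-size NTA that guesses the set, uses the $\exists^{\infty}$ primitive to certify infiniteness, and uses a product/pumping argument to certify mutual connectivity — crucially, all of this again asserts only positive $R$-facts, so no complement of $\A$ is ever taken.

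The main obstacle is precisely case~(A). Generic undirected Ramsey quantification over NTAs is $\EXP$-hard (this is the phenomenon underlying \cref{thm:tree-lowerbound}), so the entire point is that co-transitivity must collapse the bidirectional case down to $\P$; making this collapse effective — rather than merely structural — without ever complementing the NTA, and while guaranteeing (via pumping on the tree automaton) that the guessed set is genuinely infinite and pairwise connected, is the delicate part. Once both cases are realized by polynomial-size nondeterministic tree automata whose emptiness is tested in $\P$, combining them yields the claimed polynomial-time decision procedure.
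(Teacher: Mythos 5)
There is a genuine gap here, and it sits exactly where you locate it yourself: your case~(A) is left as an admitted obstacle, and your case~(B) does not actually avoid the negative conditions it sets out to avoid. More importantly, the whole decomposition into ``bidirectional clique vs.\ strictly forward path'' is a purely relational analysis that never touches the tree structure, and the tree structure is where the difficulty lives. Recall why \cref{thm:tree-lowerbound} gives $\EXP$-hardness for general NTAs: the accepting runs $\rho(t_i,t_j)$ on the convolutions cannot be merged into a single run of a small Büchi tree automaton, because the forests $\alpha_i,\beta_i$ in a tree comb branch in many directions. Your case~(B) reduces to detecting an infinite transitive path, i.e.\ to recurrent reachability over a relation derived from $R$ --- but recurrent reachability over tree-regular relations given by NTAs is itself $\EXP$-complete (\cref{cor:rec-reach}), so this is circular unless the derived relation has special structure you have not established; and the derived relation $R\cap\bar R^{-1}$ (forward edge plus missing back-edge) already requires complementing the NTA, which is exponential. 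Your proposed fix via (L1) replaces the non-edge $(a_{i+1},a_i)\notin R$ by the \emph{consequence} $S(a_{i+1})\subseteq S(a_i)$, which is an implication in the wrong direction and is in any case not a condition a polynomial-size NTA can check positively. For case~(A) you offer no construction at all, and the ``universal predecessor/successor'' observation, while roughly true up to one exceptional element, does not by itself yield an automaton.

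The paper's proof uses co-transitivity in a quite different, and essentially geometric, way. Given an infinite clique, one takes the common context $c$ of $t_1$ with the later trees, writes $t_j=c[t_j^1,\dots,t_j^n]$, and interpolates between $t_2$ and $t_3$ through the trees $s_k=c[t_3^1,\dots,t_3^k,t_2^{k+1},\dots,t_2^n]$. Since $(s_0,s_n)=(t_2,t_3)\in R$ and $\bar R$ is transitive, some consecutive pair $(s_{k-1},s_k)$ must lie in $R$, which localises the clique to a \emph{single} hole $v^k$ of $c$ (\cref{lem:cotrans-step}). Iterating this yields an infinite clique generated by a \emph{monadic} comb generator (\cref{lem:cotrans-cliques}): all the contexts $\beta_i$ have exactly one hole, so the comb lives along a single infinite branch and the comb-of-combs collapses to the word-like picture, where the four runs $\kappa_j,\lambda_j,\mu_j,\nu_j$ can be simulated by a polynomial-size nondeterministic Büchi tree automaton with state set $Q_\bot^4$, exactly as in the word case. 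This structural collapse onto one branch is the missing idea in your proposal; without it, neither of your two cases admits a polynomial-size automaton, and the $\P$ bound does not follow.
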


In \cref{sec:unranked} we show by a reduction that the Ramsey quantifier can be evaluated
over unranked tree-regular relations with the same complexity as in the ranked case.

\paragraph{Recurrent reachability}

Since reachability in automatic graphs is in general undecidable \cite{BG04},
we will instead use {\em transitive paths}, i.e., infinite sequences $(a_i)_{i \ge 1}$ with $(a_i,a_j) \in R$
for all $1 \le i < j$.
Given sets $L_1, \dots, L_k \subseteq A$ we write $\rec(L_1, \dots, L_k)[R]$ for the set of all initial vertices $a_1$
of transitive paths $(a_i)_{i \ge 1}$ that visit each set $L_j$ infinitely often.
{\em Recurrent reachability with generalized Büchi condition} is the problem of testing
$a_1 \in \rec(L_1, \dots, L_k)[R]$ for a given (tree-)regular relation $R \subseteq A \times A$,
(tree-)regular languages $L_1, \dots, L_k \subseteq A$, and an initial element $a_1 \in A$.
If $k = 1$ this problem is simply called {\em recurrent reachability}.

Since the infinite clique problem and recurrent reachability are logspace
equivalent (\cref{prop:clique-rec}), we obtain: 
\begin{corollary}\label{cor:rec-reach}
Recurrent reachability is $\NL$-complete over regular relations.
It is $\EXP$-complete over tree-regular relations given by NTAs or D$\downarrow$TAs, and
$\P$-complete if the tree-regular relations are transitive or given by D$\uparrow$TAs.
\end{corollary}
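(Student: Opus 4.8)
The plan is to read off \cref{cor:rec-reach} directly from the logspace equivalence between recurrent reachability and the infinite clique problem (\cref{prop:clique-rec}), transporting to recurrent reachability the complexity bounds already established for the infinite clique problem. For the upper bounds I would compose the logspace reduction from recurrent reachability \emph{into} the infinite clique problem with the relevant construction and then solve the resulting clique instance: over regular relations this gives membership in $\NL$ via \cref{thm:word-ramsey}; over tree-regular relations presented as NTAs or D$\downarrow$TAs it gives $\EXP$ via \cref{thm:tree-lowerbound} (equivalently via the automaton construction of \cref{thm:tree-ramsey}); over D$\uparrow$TAs it gives $\P$ via \cref{thm:tree-lowerbound}; and over transitive tree-regular relations it gives $\P$ via \cref{thm:transitive-ramsey}. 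Throughout, recurrent reachability is the $k=1$ special case, matching the basic clique quantifier $\ram x,y\colon R(x,y)$.

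For the matching lower bounds I would use \cref{prop:clique-rec} in the reverse direction, reducing the infinite clique problem \emph{into} recurrent reachability. Since the infinite clique problem is $\NL$-complete over regular relations (following \cite{barcelo2019monadic}), and $\EXP$-hard for NTA/D$\downarrow$TA-presented and $\P$-hard for D$\uparrow$TA-presented tree-regular relations (both by \cref{thm:tree-lowerbound}), composing with this reduction immediately yields $\NL$-, $\EXP$-, and $\P$-hardness of recurrent reachability in the respective cases, matching the upper bounds above.

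The step I expect to be the main obstacle is not the transfer itself but checking that the two reductions of \cref{prop:clique-rec} preserve the structural assumptions under which the clique bounds hold. Concretely, the reduction from recurrent reachability into the infinite clique problem must (i) keep the relation transitive whenever the input relation is transitive, so that \cref{thm:transitive-ramsey} applies, and (ii) preserve the deterministic model (D$\uparrow$TA versus D$\downarrow$TA), since the tree bounds are sensitive to the direction of determinism. I would verify this by inspecting the concrete encoding in \cref{prop:clique-rec}: it absorbs the recurrence set $L_1$ and the initial vertex $a_1$ into the relation by intersecting with regular languages and attaching a single fresh source, operations realizable by product constructions that preserve both the determinism direction and, for transitive inputs, transitivity.

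The one remaining point is the $\P$-hardness for the transitive case, which is not contained verbatim in the cited theorems, as \cref{thm:transitive-ramsey} states only the $\P$ upper bound. I would close this by arranging a single logspace reduction from a $\P$-complete source problem (e.g.\ emptiness of bottom-up tree automata, or the monotone circuit value problem) into recurrent reachability whose output relation is simultaneously transitive and recognizable by a D$\uparrow$TA; this yields the $\P$-hardness for both the transitive and the D$\uparrow$TA cases at once, completing all four completeness claims.
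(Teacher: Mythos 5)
Your upper bounds and your patch for the transitive/D$\uparrow$TA $\P$-hardness follow the paper's route. The upper bounds come, exactly as in the paper, from the recurrent-reachability-to-clique direction of \cref{prop:clique-rec} together with \cref{thm:word-ramsey,thm:tree-ramsey,thm:transitive-ramsey}; the preservation of transitivity and of determinism that you say you would check is precisely what \cref{prop:clique-rec} asserts and what its proof verifies. Your proposed direct reduction for the $\P$-hardness of the transitive and D$\uparrow$TA cases is also essentially the paper's: it reduces nonemptiness for D$\uparrow$TAs via the relation $R=\{(a,b)\mid a=c \text{ or } a=b\}$, which is transitive and D$\uparrow$TA-recognizable, with $c\in\rec(L)[R]$ iff $L\neq\emptyset$.

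The genuine gap is the $\EXP$-hardness for relations given by D$\downarrow$TAs. You derive all lower bounds by composing the known hardness of the infinite clique problem with the reduction from the clique problem \emph{into} recurrent reachability. But \cref{prop:clique-rec} claims preservation of determinism only for the \emph{opposite} direction, and the paper's clique-to-reachability reduction over trees cannot preserve it: to rule out revisiting a tree, it marks a strictly growing ``primed'' root-to-leaf path and must check that each tree contains \emph{exactly one} such path (and it adds all pairs $(t_0,t)$), which the paper implements with an NTA that nondeterministically follows the primed branch. This check is not realizable by a D$\downarrow$TA, whose transitions cannot depend on the children's labels, so composing D$\downarrow$TA-hardness of the clique problem with this reduction only yields $\EXP$-hardness of recurrent reachability for NTA inputs. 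The paper closes the D$\downarrow$TA case by a different move: it adapts the direct reduction from intersection nonemptiness for D$\downarrow$TAs (the construction behind \cref{thm:tree-lowerbound}, whose decorated-tree relation is D$\downarrow$TA-recognizable) so that it targets recurrent reachability itself, simply by setting the target language to all trees. You would need such a direct argument, or a determinism-preserving clique-to-reachability reduction, to complete this case; the same caveat would apply to the D$\uparrow$TA $\P$-hardness if you obtained it by composition, but your final paragraph already repairs that one.
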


We also apply \cref{thm:word-ramsey,thm:tree-ramsey} to obtain tight upper
bounds for recurrent reachability with generalized Büchi condition. The lower
bounds result from a reduction from intersection nonemptiness.
\begin{restatable}{theorem}{generalbuchi}\label{thm:general-buchi}
Recurrent reachability with generalized Büchi condition is $\PSPACE$-complete over regular relations,
and $\EXP$-complete over tree-regular relations.
\end{restatable}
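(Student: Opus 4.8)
\emph{Upper bounds.}
I would derive the upper bounds from \cref{thm:word-ramsey,thm:tree-ramsey} and the lower bounds from intersection nonemptiness. For the upper bounds I would first recast $a_1 \in \rec(L_1,\dots,L_k)[R]$ as an infinite-clique question: it asks for an infinite clique $(a_i)_{i\ge1}$ of $R$ beginning at $a_1$ that meets every $L_j$ infinitely often. Since any subsequence of a clique is again a clique, such a clique exists iff there is an infinite clique whose elements, in path order, run cyclically through $L_1,L_2,\dots,L_k,L_1,\dots$. I would record this by colouring: augment the domain to $A\times\{1,\dots,k\}$ and put $((a,c),(b,d))$ into a new relation $R'$ iff $(a,b)\in R$, $a\in L_c$ and $b\in L_d$, restricting additionally to $R$-successors of $a_1$ by a logspace product with the automata for $R$ and the $L_j$. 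The residual requirement — that every colour recur infinitely often — is a conjunction of $k$ Büchi conditions and cannot be expressed pairwise inside $R'$; I would therefore push it into an alternating acceptance condition that universally branches over $j\in\{1,\dots,k\}$, with branch $j$ checking recurrence of colour $j$. This presents the instance as a Ramsey-quantifier evaluation whose relation is given by an alternating automaton.

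For tree-regular relations this alternating automaton is of polynomial size, and since \cref{thm:tree-ramsey} accepts an ATA as input it yields an NTA for the set of admissible start vertices in exponential time; testing membership of $a_1$ is then polynomial in that NTA, giving $\EXP$. For regular relations the analogous object is an alternating word automaton, whose determinization to an NFA costs one exponential but can be generated on the fly in polynomial space; as the construction of \cref{thm:word-ramsey} is in logspace and the ensuing infinite-clique/membership test is in $\NL$, executing these over the exponentially large NFA stays in $\PSPACE$. This accounts for the jump from the $\NL$ bound for ordinary recurrent reachability to $\PSPACE$ here.

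\emph{Lower bounds.}
For the matching lower bounds I would reduce from intersection nonemptiness, which is $\PSPACE$-complete for NFAs and $\EXP$-complete for NTAs. Given $A_1,\dots,A_k$, I encode elements as triples (a witness $w$, a tag $j$, and padding ensuring infinitely many distinct elements per $(w,j)$), let $R$ relate two elements exactly when their witnesses coincide — a (tree-)automatic relation all of whose transitive paths share a single witness — and set $L_j$ to be the elements with tag $j$ whose witness lies in $L(A_j)$. Adding a universal initial vertex $\bot$ with edges to all elements, a transitive path from $\bot$ that meets every $L_j$ infinitely often exists iff some common $w$ satisfies $w\in L(A_j)$ for all $j$, i.e.\ iff $\bigcap_j L(A_j)\neq\emptyset$. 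The reduction is logspace and identical for words and trees.

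\emph{Main obstacle.}
The delicate part is the upper bound, namely encoding the generalized Büchi condition compatibly with the Ramsey machinery. The textbook ``counter modulo $k$'' reduction of generalized to single Büchi fails here, because the clique relation constrains all pairs simultaneously whereas the modular colour-advance is inherently a constraint on consecutive positions; this is why the condition must live in an alternating acceptance rather than inside $R'$, and is exactly what drives the complexity up to $\PSPACE$ (words) and $\EXP$ (trees) instead of leaving it in $\NL$/$\P$. Pinning down the interplay between the logspace construction of \cref{thm:word-ramsey}, the exponential cost of handling alternation, and the resulting space bound is the step requiring the most care.
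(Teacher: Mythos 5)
Your lower bound is fine and close in spirit to the paper's (the paper uses the even simpler relation $(a,b)\in R$ iff $a=c$ or $a=b$, which forces every transitive path from $c$ to be eventually constant; your witness-equality-with-tags variant works too). Your complexity accounting for the word case (an exponential-size NFA evaluated on the fly, composed with the $\NL$ clique test, stays in $\PSPACE$) also matches the paper. The gap is in the core of the upper bound: you correctly identify that the recurrence of every colour cannot be expressed pairwise inside $R'$, but your proposed fix---pushing the $k$ recurrence conditions into ``an alternating acceptance condition that universally branches over $j$''---is not something \cref{thm:word-ramsey} or \cref{thm:tree-ramsey} can consume. Those theorems evaluate the plain Ramsey quantifier over a relation on \emph{finite} words/trees; the alternation admitted in \cref{thm:tree-ramsey} is alternation in the transition structure of a finite-tree automaton presenting $R$, not a generalized-Büchi acceptance condition imposed on the infinite clique. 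There is no interface for ``infinite clique subject to $k$ recurrence constraints,'' and supporting one would require redoing the entire comb-of-combs construction with generalized Büchi acceptance---a nontrivial extension you do not argue for.

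The missing idea is a blocking trick. Observe that $a_0\in\rec(L_1,\dots,L_k)[R]$ iff there is a transitive path $(a_i)_{i\ge 1}$ from $a_0$ with $a_i\in L_{((i-1)\bmod k)+1}$, and group the path into consecutive $k$-tuples. Define a binary relation $R'$ on $k$-tuples where $\bigl((w_1,\dots,w_k),(w_{k+1},\dots,w_{2k})\bigr)\in R'$ iff all $2k+1$ elements $a_0,w_1,\dots,w_{2k}$ form a directed clique in $R$ and $w_i\in L_i$ for $i\in[1,k]$. Then $a_0\in\rec(L_1,\dots,L_k)[R]$ iff $\ram x,y\colon R'(x,y)$ or $\exists x\colon R'(x,x)$ holds (the second disjunct covers paths with finitely many distinct elements, since the Ramsey quantifier demands pairwise distinctness---a point your write-up also omits). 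This reduces the generalized condition to the \emph{plain} Ramsey quantifier over a single relation. Over words, $R'$ is an intersection of $O(k^2)$ regular relations whose product NFA is exponential but can be handled on the fly in $\PSPACE$, as you anticipated. Over trees, the paper keeps the automaton for $R'$ polynomial-size by taking the intersection with alternating tree automata (linear-size intersection) together with an alphabet-reduction encoding tuples of symbols as paths, and then invokes \cref{thm:tree-ramsey} on that ATA to get $\EXP$. So alternation does appear, but only to compress the presentation of the finite-tree relation $R'$, not to express recurrence.
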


\paragraph{Monadic Decomposability}

Recall that a relation $R \subseteq A^k$ is \emph{monadically decomposable} if it is of the form
$\bigcup_{i=1}^n A_{i,1} \times \dots \times A_{i,k}$
for some $n \in \N$ and \mbox{(tree-)regular} languages $A_{i,j}$.
The traditional approach to deciding monadic decomposability~\cite{ginsburg1966bounded,libkin2003variable,CartonCG06,LS19,barcelo2019monadic} is to associate with $R$
certain equivalence relations $\sim_j$ for $1\le j\le k$ such that $R$ is monadically decomposable if and only if each $\sim_j$ has finite index. 
An equivalence relation has infinite index if and only if there exist infinitely many elements that are pairwise in different equivalence classes
which is witnessed by an infinite clique in the complement relation.
Therefore, monadic decomposability amounts to checking that $\sim_j$'s complement $\not\sim_j$ does not have an infinite clique for any $j$. 
If $R$ is given by a DFA (resp.\ NFA), then one can construct an NFA for each $\not\sim_j$ in logspace (resp.\ in $\PSPACE$) and thus \cref{thm:word-ramsey} yields a tight upper bound:
\begin{restatable}{corollary}{wordrecognizable}\label{cor:word-recognizable}
Given a regular relation $R \subseteq (\Sigma^*)^k$ by a DFA (resp. NFA),
it is $\NL$-complete (resp. $\PSPACE$-complete) to decide whether $R$ is monadically decomposable.
\end{restatable}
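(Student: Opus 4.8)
The plan is to make the reduction sketched before the statement precise and then invoke \cref{thm:word-ramsey}. For each coordinate $1 \le j \le k$ I would work with the coordinate-$j$ Myhill--Nerode equivalence $\sim_j$ on $\Sigma^*$, where $u \sim_j v$ holds iff the two sections agree: for every $\bm{w} \in (\Sigma^*)^{k-1}$, the tuple obtained by inserting $u$ into position $j$ and $\bm{w}$ into the remaining positions lies in $R$ exactly when the tuple obtained by inserting $v$ does. The first step is the classical characterization (see \cite{CartonCG06,barcelo2019monadic}) that $R$ is monadically decomposable iff every $\sim_j$ has finite index. The second, elementary, step is that $\sim_j$ has infinite index iff there is an infinite set of pairwise $\sim_j$-inequivalent words, i.e.\ an infinite clique of the complement relation $\not\sim_j$. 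Hence $R$ is monadically decomposable iff no $\not\sim_j$ contains an infinite clique, and it remains to produce each $\not\sim_j$ as an automaton and apply \cref{thm:word-ramsey}.

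Next I would build an NFA for $\not\sim_j$ within the claimed bounds by reading its definition as a first-order combination of $R$: cylindrify $R$ with a fresh track, rename tracks so that one copy of $R$ reads $u$ in position $j$ and another copy reads $v$ there, take the symmetric difference of the two resulting $(k+1)$-ary relations, and existentially project away the $k-1$ tracks carrying $\bm{w}$. From a DFA all of these are logspace: cylindrification and renaming are syntactic, the symmetric difference is the standard product DFA, and the projection is the usual nondeterministic guessing of the projected tracks, so $\not\sim_j$ is obtained as a polynomial-size NFA in logspace. From an NFA the only change is that the symmetric difference forces a complementation, hence a subset construction, which is carried out in \PSPACE{} and yields an NFA of exponential size.

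Finally I would feed $\not\sim_j$ into \cref{thm:word-ramsey}, which puts the infinite-clique test in \NL{} in the size of the input NFA. In the DFA case $\not\sim_j$ is polynomial and logspace-constructible, so composing the logspace reduction with the \NL{} test keeps the test for a fixed $j$ in \NL; since decomposability says that \emph{no} $j$ produces a clique, one guesses a witnessing $j$, runs the clique test, and complements using closure of $\NL$ under complement, staying in \NL. In the NFA case $\not\sim_j$ is exponential, but each of its states is a polynomial-size subset of the states of $\A$ and its transitions are \PSPACE-computable, so running the \NL{} clique algorithm over it uses space logarithmic in its exponential size, i.e.\ polynomial, giving \PSPACE{} overall. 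The matching lower bounds (\NL-hardness for DFAs and \PSPACE-hardness for NFAs) are already known~\cite{barcelo2019monadic}, so I would simply cite them.

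I expect the main obstacle to be the construction of the $\not\sim_j$ automata rather than the clique reduction, which is clean once \cref{thm:word-ramsey} is in hand. Concretely, the existential projection must handle the convolution padding correctly --- ensuring the padding symbols on the remaining tracks occur only as a suffix --- so that the output is a genuine convolution NFA; and in the NFA case one must verify that the determinization underlying the symmetric difference can be generated on the fly and interleaved with the \NL{} clique algorithm, so that nothing larger than \PSPACE{} is ever materialized.
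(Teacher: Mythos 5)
Your proposal is correct and takes essentially the same route as the paper: characterize monadic decomposability via finite index of the section equivalences $\sim_j$, observe that infinite index is exactly an infinite clique in $\not\sim_j$, build an NFA for $\not\sim_j$ in logspace from a DFA (resp.\ in \PSPACE{} via determinization from an NFA), and feed it to \cref{thm:word-ramsey}. The only deviations are cosmetic: you use coordinate-wise section equivalences on single words where the paper uses prefix-block equivalences on $j$-tuples (both are standard and both characterize decomposability), and you cite the known lower bounds from \cite{barcelo2019monadic} where the paper re-derives them by a reduction from universality in \cref{sec:app-recognizable}.
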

While this approach yields optimal complexity for words, this is, unexpectedly, not the case for trees.
For a tree-regular relation given as D$\downarrow$TA or D$\uparrow$TA (resp.\ NTA), one can
also construct an NTA for each $\not\sim_j$ in logspace (resp.\ $\PSPACE$).
Then, applying \cref{thm:tree-ramsey} would yield an $\EXP$ (resp.\ $\TwoEXP$)
algorithm. However, perhaps surprisingly, monadic decomposability for trees has much lower complexity:
\begin{restatable}{corollary}{recognizable}\label{cor:recognizable}
Given a tree-regular relation $R \subseteq \T_\Sigma^k$ by a D$\downarrow$TA or D$\uparrow$TA (resp. NTA),
it is $\P$-complete (resp. $\EXP$-complete) to decide whether $R$ is monadically decomposable.
\end{restatable}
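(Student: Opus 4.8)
The plan is to reduce monadic decomposability to the infinite clique problem via the equivalence relations $\sim_j$, exactly as the excerpt sketches, and then combine this reduction with the tree-automaton complexity bounds already proved (\cref{thm:cotransitive} and \cref{thm:tree-ramsey}). Recall that $R$ is monadically decomposable iff each $\sim_j$ has finite index, and $\sim_j$ has \emph{infinite} index iff the complement relation $\not\sim_j$ contains an infinite clique. So the upper bounds will follow by (i) constructing a tree automaton recognizing each $\not\sim_j$, and (ii) testing it for an infinite clique.

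For the upper bounds, first I would spell out the construction of $\not\sim_j$. The relation $x \sim_j y$ asserts that $x$ and $y$ are interchangeable in the $j$-th coordinate, i.e.\ for all $\bm{u}$ we have $R(\dots,x,\dots) \leftrightarrow R(\dots,y,\dots)$; this is a $\Pi_1$ condition whose complement $\not\sim_j$ is $\Sigma_1$ (there \emph{exists} a witnessing tuple). The key observation I would exploit is that $\not\sim_j$ is \emph{co-transitive}: its complement $\sim_j$ is an equivalence relation, hence transitive. Thus for a relation given by a D$\downarrow$TA or D$\uparrow$TA one first builds an NTA for each $\not\sim_j$ in logspace (this is where determinism of the input is essential, so that complementation/projection stays polynomial), and then invokes \cref{thm:cotransitive}, which decides the infinite clique problem over \emph{co-transitive} tree-regular relations given as NTA in $\P$. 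This yields the $\P$ upper bound for D$\uparrow$TA and D$\downarrow$TA. For an NTA input, constructing each $\not\sim_j$ requires complementation and therefore costs exponential time, producing an NTA of exponential size; running the polynomial-time co-transitive clique test on it gives the $\EXP$ upper bound. The crucial point is that we route through \cref{thm:cotransitive} rather than the generic \cref{thm:tree-ramsey}, which is exactly what avoids the naive $\EXP$/$\TwoEXP$ blowup flagged in the excerpt.

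For the matching lower bounds, I would reduce from the known-hard problems to monadic decomposability. For the $\P$-hardness with deterministic inputs, I would reduce from a $\P$-complete problem (emptiness/membership for tree automata, which is $\P$-complete) by encoding it as the finiteness of a single equivalence relation's index, arranging the instance so that non-decomposability holds iff the source instance is positive. For the $\EXP$-hardness with NTAs, the natural route is a reduction from intersection nonemptiness of NTAs (the same $\EXP$-complete problem used for \cref{thm:tree-lowerbound}), converting a positive intersection instance into a relation whose $\not\sim_j$ admits an infinite clique, hence into a non-monadically-decomposable relation. Concretely one builds an automatic relation whose non-interchangeability in some coordinate encodes acceptance by all the $\NTA$'s simultaneously, so that an infinite supply of pairwise-inequivalent trees exists precisely when the intersection is nonempty and infinite.

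The main obstacle I anticipate is the lower-bound side, specifically ensuring that the constructed $\not\sim_j$ genuinely has \emph{infinite} index (not merely nonempty complement) exactly when the source instance is positive, and doing so while keeping the target a bona fide automatic relation of the right automaton type (deterministic for the $\P$-hardness, general NTA for the $\EXP$-hardness). One must manufacture infinitely many pairwise-distinguishable trees from a single nonemptiness witness, which typically requires padding the witness with an unbounded regular family of markers that the relation can distinguish; verifying that this padding preserves both co-transitivity and the intended equivalence-class structure is the delicate part. The upper-bound side is comparatively routine once the reduction to the co-transitive infinite clique problem is set up, since \cref{thm:cotransitive} does the heavy lifting; the only care needed there is the size bookkeeping for the NTA complementation in the nondeterministic case.
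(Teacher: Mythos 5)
Your upper-bound argument is exactly the paper's: build an automaton for each $\not\sim_j$ (logspace from a deterministic input, exponential via determinization from an NTA), observe that $\not\sim_j$ is co-transitive because $\sim_j$ is an equivalence relation, and invoke \cref{thm:cotransitive} instead of \cref{thm:tree-ramsey}. That half is fine and needs no further comment.

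The gap is in the lower bounds, which you leave as a sketch built on the wrong source problem for the $\EXP$ case. You propose reducing from intersection nonemptiness of NTAs, constructing a relation ``whose non-interchangeability in some coordinate encodes acceptance by all the NTAs simultaneously.'' The obstacle is that a polynomial-size NTA cannot verify membership in $\bigcap_{i=1}^n L(\A_i)$; the paper's \EXP-hardness proof for the \emph{infinite clique} problem (\cref{thm:tree-lowerbound}) circumvents this only by distributing the $n$ run checks over $n$ different pairs of a directed clique sharing a common node, and it is not clear how to transplant that trick into a statement about the index of $\sim_j$, where the witnesses of inequivalence are existentially quantified tuples rather than clique edges. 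You correctly flag this as ``the delicate part'' but do not resolve it, and as stated the construction does not exist. The paper avoids the issue entirely by reducing from \emph{universality} (\cref{lem:hardness}): given $L$, set $R_L := \{(u \otimes v, w) \mid u \in L \text{ or } v = w\}$, which a single NTA (or deterministic automaton, via a product for the disjunction) recognizes directly; if $L$ is universal then $R_L$ is a full product, and otherwise intersecting with the decomposable relation $\{(u_0 \otimes v, w) \mid v,w\}$ for some $u_0 \notin L$ leaves the diagonal $\{(u_0 \otimes v, v)\}$, which is not decomposable. Since universality is $\EXP$-complete for NTAs and $\P$-complete for D$\uparrow$TAs, this one construction covers both of those cases; only D$\downarrow$TAs need a separate reduction (from emptiness, \cref{lem:det-hard}, padding a single witness with a chain of $\#$-symbols to make an infinite subrelation of the diagonal), which is close to what you suggest for the deterministic case. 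So the missing idea is the universality-based reduction; your intersection-nonemptiness route would require a genuinely new construction that you have not supplied.
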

To get the $\P$ (resp.\ $\EXP$) algorithm, we exploit the co-transitivity of each $\not\sim_j$ and apply \cref{thm:cotransitive} instead of \cref{thm:tree-ramsey}.
This shows the importance of the co-transitivity notion: In the word case, monadic decomposability requires only the generic clique detection, but the tree case is more nuanced---we need one algorithm for the general case and a specialized algorithm for co-transitive relations.

\section{Preliminaries}\label{sec:preliminaries}

We assume familiarity with the basic models of (non)deter\-ministic and alternating finite automata on words and trees
as well as with standard complexity classes (e.g., \NL, \P, \PSPACE, \EXP).
We refer the reader to the textbooks \cite{Kozen,tata} for more details.
We often abbreviate a finite or infinite sequence of elements $a_1, a_2, \dots$
by a boldface letter $\bm{a}$. 

\paragraph{Trees}
A {\em tree domain} is a nonempty set $D \subseteq \N^*$ such that
(i) $D$ is prefix closed, i.e., $uv \in D$ implies $u \in D$,
(ii) for all $v \in \N^*$ and $j \le i$ if $vi \in D$, then $vj \in D$, and
(iii) each node $v \in D$ has only finitely many {\em children} $vi \in D$ where $i \in \N$.
An {\em unranked tree} over an alphabet $\Sigma$ is a function $t \colon \dom(t) \to \Sigma$
where $\dom(t)$ is a finite tree domain.
A {\em ranked alphabet} is a finite alphabet $\Sigma$ where every symbol $a \in \Sigma$
has a rank $\rk(a) \in \N$.
A {\em ranked tree} is an unranked tree $t$ such that
every node $v \in \dom(t)$ has $\rk(t(v))$ many children.
We denote the set of all ranked and unranked trees over $\Sigma$ by $\T_\Sigma$ and $\U_\Sigma$, respectively.

Let $x \notin \Sigma$ be a {\em variable}.
The set $\C_\Sigma$ of all {\em contexts} over $\Sigma$
contains all unranked trees over $\Sigma \cup \{x\}$
such that every node $u \in \dom(t)$ with $t(u) = x$ is a leaf, called \emph{hole}.
We partition $\dom(t) = \nodes(t) \cup \holes(t)$ into nodes and holes.
The size of a context $t$ is $|\nodes(t)|$.
For contexts $s, t_1, \dots, t_n$ with $|\holes(s)| = n$ we denote by $s[t_1,\dots,t_n]$
the context obtained by replacing the $i$-th hole in lexicographic order by $t_i$.
For two contexts $s_1,s_2 \in \C_\Sigma$, we call $s_1$ a {\em prefix} of $s_2$, denoted by $s_1 \pref s_2$,
if $s_1[t_1, \dots, t_n] = s_2$ for some contexts $t_1, \dots, t_n$.
If $n >0$ and each context $t_i$ has size at least one, then $s_1$ is a {\em proper prefix} of $s_2$, denoted by $s_1 \pre s_2$.

We define an \emph{infinite unranked tree} and an \emph{infinite ranked tree} as in the finite case but with infinite domains.
We denote the set of all finite and infinite unranked trees over the alphabet $\Sigma$ by $\U_\Sigma^\infty$ and
the set of all finite and infinite ranked trees over $\Sigma$ by $\T_\Sigma^\infty$.

\paragraph{Regular and tree-regular languages}
A \emph{nondeterministic finite automaton} (NFA) over the alphabet $\Sigma$ is a tuple $\A = (Q,\Sigma,\Delta,q_0,F)$ where $Q$ is a finite set of states, $\Delta \subseteq Q \times (\Sigma \cup \{\varepsilon\}) \times Q$ is a transition relation, $q_0 \in Q$ is an initial state, and $F \subseteq Q$ is a set of final states.
We denote by $L(\A) \subseteq \Sigma^*$ the {\em regular language} recognized by $\A$. 
In our algorithms, the alphabet $\Sigma$ is \emph{not} part of the representation of an automaton. 
Instead, we will always work with the subalphabet of all symbols occurring in the transitions. 
This will be important later when the implicitly given alphabet is significantly smaller.

A \emph{nondeterministic (top-down) tree automaton} (NTA) over the ranked alphabet $\Sigma$ is a tuple $\A = (Q,\Sigma,\Delta,q_0)$ where $Q$ is a finite set of states, $q_0 \in Q$ is an initial state, and $\Delta \subseteq \bigcup_{a \in \Sigma} Q \times \{a\} \times Q^{\rk(a)}$ is a transition relation.
A \emph{run} of $\A$ on a tree $t \in \T_\Sigma$ is a tree $\rho \in \T_Q$ with $\dom(\rho) = \dom(t)$ such that
$\rho(\varepsilon) = q_0$ and
$(\rho(u),t(u),\rho(u1),\dots,\rho(ur)) \in \Delta$ for all nodes $u \in \dom(\rho)$ with $\rk(t(u)) = r$.
As before, $L(\A)$ is the set of trees recognized by $\A$, i.e., the set of all trees $t$
such that there exists a run of $\A$ on $t$.
A set of trees is called \emph{tree-regular} if there is an NTA that recognizes it.
We will also use the notions of \emph{deterministic finite automata} (DFA),
\emph{deterministic bottom-up} (D$\uparrow$TA), and \emph{deterministic top-down tree automata} (D$\downarrow$TA).
Moreover, \emph{alternating automata} will be formally introduced in later sections.

\paragraph{Regular and tree-regular relations}
Let $\Sigma$ be a finite alphabet and let $\Sigma_\bot = \Sigma \cup \{\bot\}$ where $\bot \notin \Sigma$ is a fresh symbol.
For words $w_1, \dots, w_k \in \Sigma^*$ with $w_i = a_{i,1} \dots a_{i,n_i}$ and $n := \max\{n_i \mid 1 \le i \le k\}$ we define their convolution
\[
	w_1 \otimes \dots \otimes w_k := \begin{bmatrix} w_1 \\ \vdots \\ w_k \end{bmatrix} := \begin{pmatrix} a_{1,1}' \\ \vdots \\ a_{k,1}' \end{pmatrix} \dots \begin{pmatrix} a_{1,n}' \\ \vdots \\ a_{k,n}' \end{pmatrix} \in (\Sigma_\bot^k)^*
\]
where $a_{i,j}' = a_{i,j}$ if $j \le n_i$ and $a_{i,j}' = \bot$ otherwise.
A relation $R \subseteq (\Sigma^*)^k$ is {\em recognized} by an NFA $\A$ if 
$L(A) = \{ w_1 \otimes \dots \otimes w_k \mid (w_1, \dots, w_k) \in R \}$.
In that case we call $R$ {\em regular}.

We extend the definitions to tree-regular relations.
For an alphabet $\Sigma$ we set again $\Sigma_\bot = \Sigma \cup \{\bot\}$ where $\bot \notin \Sigma$ is a fresh symbol.
Let $\varepsilon$ be the \emph{empty tree} with $\dom(\varepsilon) := \emptyset$.
Given $k$ trees $t_1, \dots, t_k \in \U_\Sigma \cup \{\varepsilon\}$ we define their convolution
$t = t_1 \otimes \dots \otimes t_k \in \U_{\Sigma^k_\bot} \cup \{\varepsilon\}$ with $\dom(t) = \bigcup_{i=1}^k \dom(t_i)$ and $t(v) = (t'_1(v), \dots, t'_k(v))$ 
where $t'_i(v) = t_i(v)$ if $v \in \dom(t_i)$ and $t'_i(v) = \bot$ otherwise.
Observe that the degree of a node $v$ in $t_1 \otimes \dots \otimes t_k$ is the maximum degree of $v$ in a tree $t_i$ such that $v \in \dom(t_i)$.
Similar to the word case, we also write the convolution of trees as a column vector.
If all $t_i$ are ranked trees, then also $t$ is a ranked tree with $\rk(a_1,\dots,a_k) := \max\{\rk(a_i) \mid 1 \le i \le k\}$ for all $(a_1,\dots,a_k) \in \Sigma_\bot^k$ where $\rk(\bot) := 0$.
A relation $R \subseteq \T_\Sigma^k$ is {\em recognized} by an NTA $\A$
if the tree language $\{ t_1 \otimes \dots \otimes t_k \mid (t_1, \dots, t_k) \in R \}$ is recognized by $\A$.
In that case we call $R$ {\em tree-regular}.

Regular and tree-regular relations are effectively closed under first-order operations
(Boolean operations and projections).
A relational structure $\mathfrak{A}$ is {\em automatic (tree-automatic)} if its universe and all its relations are regular (tree-regular).

\section{Word-automatic structures}\label{sec:words}

\tikzstyle{nt}=[color=black!80, fill=black, fill opacity=0.05]
\tikzstyle{context}=[color=blue!80, fill=blue, fill opacity=0.1]
\tikzstyle{tree}=[color=red!80, fill=red, fill opacity=0.3]
\tikzstyle{contextfade}=[color=blue!80, top color=blue, bottom color=white, fill=white, fill opacity=0.1]
\tikzstyle{treefade}=[color=red!80, top color=red, bottom color=white, fill=white, fill opacity=0.3]
\tikzstyle{par}=[circle,draw=none,fill=black,inner sep=1.5pt, fill opacity=1]

\newcommand{\ra}{\texttt{\textcolor{red}{a}}}
\newcommand{\rb}{\texttt{\textcolor{red}{b}}}
\newcommand{\ba}{\texttt{\textcolor{blue}{a}}}
\newcommand{\bb}{\texttt{\textcolor{blue}{b}}}
\newcommand{\obb}{\textcolor{blue}{\overline{\texttt{b}}}}
\newcommand{\bo}{{\bot}}

\newcommand{\mra}{\makebox[7pt]{\textcolor{red}{$\mathtt{a}$}}}
\newcommand{\mrb}{\makebox[7pt]{\textcolor{red}{$\mathtt{b}$}}}
\newcommand{\mba}{\makebox[7pt]{\textcolor{blue}{$\mathtt{a}$}}}
\newcommand{\mbb}{\makebox[7pt]{\textcolor{blue}{$\mathtt{b}$}}}
\newcommand{\mbo}{\makebox[7pt]{\bot}}

\begin{figure}
\centering
\begin{tikzpicture}[xscale=0.9,yscale=0.9,anchor=base, baseline]

\tikzstyle{p}=[fill, circle, inner sep = 1pt]

\node[p] (0) at (0,0) {};
\node[p] (1) at (1,0) {};
\node[p] (2) at (2,0) {};
\node[p] (3) at (3,0) {};
\node[p] (4) at (4,0) {};
\node[p] (5) at (5,0) {};
\node[p] (6) at (6,0) {};
\node[p] (7) at (7,0) {};
\node[p] (8) at (8,0) {};
\node[p] (9) at (9,0) {};

\node[p] (1') at (1,0.5) {};
\node[p] (2') at (2,1) {};
\node[p] (4') at (4,0.5) {};
\node[p] (5') at (5,1) {};
\node[p] (6') at (6,0.5) {};
\node[p] (7') at (7,1) {};

\draw (0) edge node [yshift=-.8em] {$\ba$} (1);
\draw (1) edge node [yshift=-.8em] {$\bb$} (2);
\draw (2) edge node [yshift=-.8em] {$\ba$} (3);
\draw (3) edge node [yshift=-.8em] {$\ba$} (4);
\draw (4) edge node [yshift=-.8em] {$\ba$} (5);
\draw (5) edge node [yshift=-.8em] {$\bb$} (6);
\draw (6) edge node [yshift=-.8em] {$\ba$} (7);
\draw (7) edge node [yshift=-.8em] {$\bb$} (8);
\draw (8) edge node [yshift=-.8em] {$\ba$} (9);

\draw (0) edge node [above] {$\rb$} (1');
\draw (1') edge node [above] {$\ra$} (2');

\draw (3) edge node [above] {$\rb$} (4');
\draw (4') edge node [above] {$\rb$} (5');

\draw (5) edge node [above] {$\ra$} (6');
\draw (6') edge node [above] {$\rb$} (7');

\end{tikzpicture}

\vspace{1em}

$
\begin{bmatrix} \rb \\ \ba \end{bmatrix}
\begin{bmatrix} \ra \\ \bb \end{bmatrix}
\begin{bmatrix} \bo \\ \ba \end{bmatrix}
\#
\begin{bmatrix} \rb \\ \ba \end{bmatrix}
\begin{bmatrix} \rb \\ \ba \end{bmatrix}
\#
\begin{bmatrix} \ra \\ \bb \end{bmatrix}
\begin{bmatrix} \rb \\ \ba \end{bmatrix}
\begin{bmatrix} \bo \\ \bb \end{bmatrix}
\begin{bmatrix} \bo \\ \ba \end{bmatrix}
\#
\dots
$

\caption{An example word comb $\texttt{ba}, \texttt{ababb}, \texttt{abaaaab}, \dots$ and its encoding as an infinite word.}
\label{fig:comb}
\end{figure}
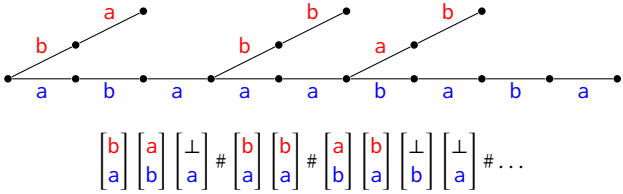

We first consider Ramsey quantifiers over word-regular relations.
We show that if $R \subseteq (\Sigma^*)^{k+2}$ is a regular relation, 
then an automaton for
$\{ \bm{c} \mid \ram x,y \colon R(x,y,\bm{c}) \}$
can be constructed in logarithmic space (\cref{thm:word-ramsey}).

\paragraph{Word combs}
The first step is to observe that, when looking for infinite cliques in $R$,
one can restrict to combs:
An infinite sequence $\bm{v}$ of words is called a {\em comb}
if there exist infinite sequences $\bm{\alpha}$ and $\bm{\beta}$ of words
with $v_i = \beta_1 \dots \beta_{i-1} \alpha_i$
and $1 \le |\alpha_i| \le |\beta_i|$ for all $i \ge 1$.
The pair $(\bm{\alpha},\bm{\beta})$ is called a \emph{generator} of $\bm{v}$.
We remark that the choice of the generator is not unique.
Any infinite subsequence of a comb is again a comb.
In fact, the following lemma is well-known, see \cite[Lemma~5.1]{KuskeL10}.

\begin{lemma}
	\label{lem:comb}
	Any sequence $\bm{w}$ of pairwise distinct words $w_i$ over a finite alphabet $\Sigma$ contains a comb as a subsequence. 
\end{lemma}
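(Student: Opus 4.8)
The plan is to grow the comb along a single infinite ``spine'' read off from a branch of the prefix tree of the $w_i$, and to drop the teeth of the comb onto that spine one at a time. Since a comb is by definition an infinite sequence, I may assume $\bm w$ is infinite; as $\Sigma$ is finite and the $w_i$ are pairwise distinct, their lengths are unbounded (there are only finitely many words of each bounded length). First I would form the trie $T$ whose nodes are all prefixes of all $w_i$, ordered by the prefix relation. Then $T$ is finitely branching (each node has at most $|\Sigma|$ children) and, by unboundedness of the lengths, infinite.

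Next I would extract a \emph{fat} branch. Call a node $u$ of $T$ \emph{fat} if it is a prefix of infinitely many $w_i$. The root $\varepsilon$ is fat, and whenever $u$ is fat it has a fat child: at most one $w_i$ equals $u$, so infinitely many words with prefix $u$ are proper extensions of $u$, and as $u$ has only finitely many children, infinitely many of them pass through the same child. Following fat children from the root yields an infinite word $\xi \in \Sigma^\omega$ whose every finite prefix $\xi_n$ (its length-$n$ prefix) is fat, i.e.\ a prefix of infinitely many $w_i$. Equivalently, one may produce $\xi$ by repeatedly refining an infinite index set: given an infinite set of indices whose words share a common prefix $p$, pigeonholing on the $(|p|+1)$-st letter yields an infinite subset sharing a longer common prefix.

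Finally I would select the subsequence and read off a generator. Inductively I pick indices $i_1 < i_2 < \cdots$ and cut positions $0 = m_1 < m_2 < \cdots$: choose $i_1$ with $w_{i_1} \neq \varepsilon$; having chosen $i_j$, set $m_{j+1} := |w_{i_j}|$ and, using that $\xi_{m_{j+1}+1}$ is fat, choose $i_{j+1} > i_j$ so that $\xi_{m_{j+1}}$ is a proper prefix of $w_{i_{j+1}}$. By construction $\xi_{m_j}$ is always a proper prefix of $w_{i_j}$. Now put $v_j := w_{i_j}$, let $\alpha_j$ be the suffix of $w_{i_j}$ after the prefix $\xi_{m_j}$, and let $\beta_j$ be the factor of $\xi$ strictly between positions $m_j$ and $m_{j+1}$, so that $\xi_{m_j} = \beta_1 \cdots \beta_{j-1}$ and $v_j = \beta_1 \cdots \beta_{j-1}\alpha_j$. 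Here $|\alpha_j| = |w_{i_j}| - m_j \ge 1$ and $|\beta_j| = m_{j+1} - m_j = |w_{i_j}| - m_j = |\alpha_j|$, hence $1 \le |\alpha_j| \le |\beta_j|$, so $(v_j)_{j\ge 1}$ is a comb and, being $(w_{i_j})_{j\ge 1}$, a subsequence of $\bm w$.

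The step I expect to be the main obstacle is meeting the length constraint $|\alpha_i| \le |\beta_i|$ while keeping every tooth nonempty: a carelessly chosen spine could be overtaken in length by an individual $w_{i_j}$ whose tooth is very long, destroying $|\alpha_i|\le|\beta_i|$. This is exactly what the fat branch buys us—because arbitrarily long prefixes of $\xi$ are shared by infinitely many words, I can always push the next cut $m_{j+1}$ out beyond $|w_{i_j}|$ and still find a later word properly extending $\xi_{m_{j+1}}$, which forces $|\beta_j| \ge |\alpha_j|$ and simultaneously guarantees the next tooth is nonempty.
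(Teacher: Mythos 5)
Your proof is correct: the infinite ``fat'' branch $\xi$ is exactly a K\H{o}nig's-lemma path through the prefix tree, and the inductive choice of cut points $m_{j+1}=|w_{i_j}|$ yields $1\le|\alpha_j|=|\beta_j|$, which is all the comb definition demands. The paper itself only cites this lemma rather than proving it, but your argument is precisely the word-case specialization of the paper's proof of the tree analogue (\cref{lem:tree-comb}), which likewise extracts an infinite branch of the (context) prefix tree via K\H{o}nig's Lemma and then interleaves the selection of indices with pushing the next cut point out far enough to satisfy the comb condition.
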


In contrast to arbitrary infinite sequences of words,
combs can be encoded naturally by infinite words.
If $(\bm{\alpha},\bm{\beta})$ is a generator we call the infinite word
\[
	\enc(\bm{\alpha},\bm{\beta}) = \begin{bmatrix} \alpha_1 \\ \beta_1 \end{bmatrix} \# \begin{bmatrix} \alpha_2 \\ \beta_2 \end{bmatrix} \# \cdots \in ((\Sigma_\bot \times \Sigma) \cup \{\#\})^\omega
\]
the {\em encoding of $(\bm{\alpha},\bm{\beta})$}, or also an encoding of $\bm{v}$.

\paragraph{Comb of combs}
The next goal would be to construct a Büchi automaton which reads an encoding
of a comb $\bm{v}$ and verifies that $v_i \otimes v_j$ has an accepting run $\rho(v_i,v_j)$ for all $1 \le i < j$.
In general, this is challenging since it is not clear how a finite automaton can keep track of infinitely many runs
(let alone, an automaton of polynomial size).
Instead we will show that every infinite clique contains an infinite subclique
whose accepting runs can be arranged in a dag of constant width,
and can therefore be recognized by a polynomial-sized Büchi automaton.

Consider a comb $\bm{v}$ with generator $(\bm{\alpha}, \bm{\beta})$.
First observe that the convolutions $v_i \otimes v_j$ can be written as
\begin{equation}\label{eq:decomposition}
	\begin{bmatrix}v_i \\ v_j\end{bmatrix} = \begin{bmatrix}\beta_1 \\ \beta_1\end{bmatrix} \dots \begin{bmatrix}\beta_{i-1} \\ \beta_{i-1}\end{bmatrix} \begin{bmatrix}\alpha_i \\ \beta_i\end{bmatrix} 
	\begin{bmatrix}\varepsilon \\ \beta_{i+1}\end{bmatrix} \dots \begin{bmatrix}\varepsilon \\ \beta_{j-1}\end{bmatrix} \begin{bmatrix}\varepsilon \\ \alpha_j\end{bmatrix}
\end{equation} 
and can hence be arranged in a trie displayed in \Cref{fig:comb-of-combs}, that we call {\em comb of combs}.
The next insight is that we can ensure that the accepting runs $\rho(v_i,v_j)$ on the convolutions $v_i \otimes v_j$
match this comb of combs structure, after replacing $\bm{v}$ by an infinite subsequence.
Roughly speaking, the runs look like as if the automaton for $R$ would be {\em deterministic}.
For example, all runs $\rho(v_1,v_j)$ for $j > 1$ share a common prefix which is a run on $\alpha_1 \otimes \beta_1$,
and all runs $\rho(v_i,v_j)$ for $1 < i < j$ share a common prefix which is a run on $\beta_1 \otimes \beta_1$.

A \emph{run} of an NFA $\A = (Q,\Sigma,\Delta,q_{\mathrm{in}},F)$ on a nonempty word $a_1 \dots a_n \in \Sigma^*$ (runs on the empty word are not needed here) is
a sequence $(q_0,a_1,q_1) (q_1,a_2,q_2) \dots (q_{n-1},a_n,q_n)$ of triples in $Q \times \Sigma \times Q$
such that there exists a path from $q_{i-1}$ to $q_i$ in $\A$ labeled with $a_i$ for all $1 \le i \le n$.
A run is \emph{accepting} if $q_0 = q_{\mathrm{in}}$ and $q_n \in F$.

We define a \emph{decomposition} of a word $w \in \Sigma^*$ as $w = u_1 \dots u_n$ where $u_i \in \Sigma^*$.
The decomposition in \cref{eq:decomposition} is called the $(\bm{\alpha},\bm{\beta})$-decomposition of $v_i \otimes v_j$.
We say that a decomposition of a run $\rho = \rho_1 \dots \rho_n$ of an NFA is \emph{compatible} with a decomposition $w = u_1 \dots u_n$ of a word if 
$\rho_i$ is a run on $u_i$ for all $i \in [1,n]$.

We say that a generator $(\bm{\alpha},\bm{\beta})$ of a comb $\bm{v}$ is \emph{coarser} than a generator $(\bm{\gamma},\bm{\delta})$ of a comb $\bm{w}$
if there exist indices $k_1 < k_2 < \dots$ such that
$v_i = w_{k_i}$ and
$\beta_1 \dots \beta_i = \delta_1 \dots \delta_{k_i}$
for all $i \ge 1$.
In this case we also say that $(\bm{\alpha},\bm{\beta})$ is the coarsening of $(\bm{\gamma},\bm{\delta})$ defined by the subsequence $\bm{v}$ of $\bm{w}$.

\begin{lemma}\label{lem:comb-of-combs}
Let $\bm{w}$ be a comb generated by $(\bm{\gamma},\bm{\delta})$ that forms an infinite clique in a regular relation $R \subseteq \Sigma^* \times \Sigma^*$ given as an NFA $\A$. 
There exist a coarsening $(\bm{\alpha},\bm{\beta})$ of $(\bm{\gamma},\bm{\delta})$ that generates a comb $\bm{v}$, 
accepting runs $\rho(v_i,v_j)$ of $\A$ on $v_i \otimes v_j$, and runs $\kappa_i, \lambda_i, \mu_{i,j}, \nu_{i,j}$ such that
\[\rho(v_i,v_j) = \kappa_1 \dots \kappa_{i-1} \lambda_i \mu_{i,i+1} \dots \mu_{i,j-1} \nu_{i,j}\]
is a decomposition compatible with
the $(\bm{\alpha},\bm{\beta})$-decomposition of $v_i \otimes v_j$
for all $i < j$.
\end{lemma}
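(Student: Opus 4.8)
The plan is to start from the comb supplied by \cref{lem:comb} and then, by passing to an infinite subsequence (a coarsening) and re-choosing the accepting runs, force the runs to look as if $\A$ were deterministic, so that they glue into the comb-of-combs skeleton. Concretely, I would write $\bm w$ with generator $(\bm\gamma,\bm\delta)$ and, for every $i<j$, fix one accepting run $\rho_{i,j}$ of $\A$ on $w_i\otimes w_j$, decomposed into blocks along the $(\bm\gamma,\bm\delta)$-decomposition as in \cref{eq:decomposition}. Each such run passes through only a bounded amount of data at the distinguished block boundaries: the state entering the tooth block $\begin{bmatrix}\gamma_i\\\delta_i\end{bmatrix}$, the state leaving it, and the state entering the final tooth $\begin{bmatrix}\varepsilon\\\gamma_j\end{bmatrix}$. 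Since $\A$ has finitely many states, these profiles take finitely many values.

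The engine is the infinite Ramsey theorem applied to the two-element subsets $\{i,j\}$, coloured by the profile of $\rho_{i,j}$ just described. A monochromatic infinite set $I$ yields, uniformly across all retained pairs, a common state entering each tooth block and---crucially---a single common state $\sigma^\ast$ at which every final tooth is read. Taking the coarsening $(\bm\alpha,\bm\beta)$ defined by $I$ therefore makes the branch-off points uniform: every tooth run $\nu_{i,j}$ should leave the spine from the same state $\sigma^\ast$, which is exactly what allows infinitely many teeth to hang off one spine. It then remains to turn these uniform boundary states into genuinely \emph{shared} runs, i.e.\ to produce one diagonal run $\kappa_1\kappa_2\cdots$ common to all rows and, for each row $i$, one infinite spine run $\lambda_i\mu_{i,i+1}\mu_{i,i+2}\cdots$ off which the teeth branch. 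I would obtain this by a second round of extraction that makes the run-segment read on each group of consecutive $\delta$-blocks behave uniformly between retained indices, so that the diagonal states $p_\ell$ and the spine states $r_{i,\ell}$ become well defined and the spine returns to $\sigma^\ast$ at every retained boundary; the runs $\kappa_\ell,\lambda_i,\mu_{i,\ell},\nu_{i,j}$ are then read off the uniform segments and concatenated, giving accepting runs with exactly the decomposition in the statement, compatible with the $(\bm\alpha,\bm\beta)$-decomposition of the coarsened comb $\bm v$.

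The main obstacle is precisely this second step: a single nondeterministic run per row must pass through the common pre-tooth state $\sigma^\ast$ at every retained boundary while still being extendable to an infinite spine. A na\"{\i}ve K\"onig/pigeonhole argument does not suffice here---a state from which a tooth completes can be a dead end for the spine, so that the finite accepting runs peel off at ever later boundaries and never share an infinite prefix. The way I would close this gap is to combine the Ramsey uniformity with the coarsening so that the run-segment read between two consecutive retained boundaries repeats the same behaviour, an idempotent-like loop returning to $\sigma^\ast$; this is the point at which earlier proofs invoke the transition monoid, and replacing that invocation by the elementary comb-of-combs bookkeeping (tracking only the fixed boundary states $p_\ell,r_{i,\ell},\sigma^\ast$ rather than full segment transformations) is the real content of the lemma. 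Once the spine is uniform, compatibility with \cref{eq:decomposition} and acceptance of each $\rho(v_i,v_j)$ follow by construction.
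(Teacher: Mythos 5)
Your first Ramsey step is fine as far as it goes, but the place where you yourself locate ``the real content of the lemma'' is exactly where your proposal stops being a proof. Colouring the pair $\{i,j\}$ by three boundary states controls only the entry/exit of the $i$-th tooth block and the entry of the final tooth; it says nothing about the unboundedly many intermediate spine boundaries of $\rho_{i,j}$, and it yields equality of \emph{states}, whereas the lemma needs literally shared \emph{run segments} $\kappa_\ell,\lambda_i,\mu_{i,\ell}$. You defer both points to a ``second round of extraction'' whose central difficulty (an infinite spine returning to $\sigma^\ast$ at every retained boundary) you propose to close with ``idempotent-like loops'' --- i.e.\ precisely the transition-monoid argument this lemma is designed to avoid, and which you do not actually carry out. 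As written, the argument has a genuine gap at its key step.

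The gap is also self-inflicted: the obstacle you describe (teeth completing from states that are dead ends for the spine, runs peeling off at ever later boundaries) only arises because you fix the accepting runs once and then try to extract a common infinite prefix from that fixed family. The paper instead proceeds column by column and applies the pigeonhole principle to the \emph{run segments themselves}: on a fixed finite block such as $\varepsilon\otimes\beta_n$ the NFA has only finitely many runs, so among the infinitely many accepting runs on $v_i\otimes v_j$ with $j>n$ infinitely many share the same segment $\mu_{i,n}$ on that block; one keeps exactly those indices, coarsens the generator, and moves on ($\lambda_n$ by another pigeonhole on $\alpha_n\otimes\beta_n$, $\kappa_n$ by Ramsey on the segments read on $\beta_n\otimes\beta_n$). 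Because each stage fixes finitely many segments, never revisits them, and retains an infinite index set containing $w_1,\dots,w_n$, every spine prefix is by construction a prefix of infinitely many accepting runs, so no boundary state is ever a dead end and no K\H{o}nig-style limit argument is needed. Note also that \cref{lem:comb-of-combs} still allows $\mu_{i,j}$ and $\nu_{i,j}$ to depend on $i$; the cross-row uniformisation (a single $\sigma^\ast$, segments independent of $i$) that your Ramsey colouring aims at is the content of the \emph{next} lemma (\cref{lem:easy-combs}), and trying to obtain it here in one shot is what manufactures your obstacle.
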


\begin{proof}
Let $\bm{w}$ be a comb generated by $(\bm{\alpha},\bm{\beta})$ that forms an infinite clique in $R$ and 
$\rho(w_i,w_j)$ be an accepting run of $\A$ on $w_i \otimes w_j$ for all $1 \le i < j$.
We establish the run structure as illustrated in \cref{fig:buchi-simulation} column-wise.

Assume we already defined $(\kappa_i)_{i < n}$, $(\lambda_i)_{i < n}$, $(\mu_{i,j})_{i < j < n}$, and $(\nu_{i,j})_{i < j < n}$ for some $n \ge 1$ such that
\begin{align*}
\rho(w_i,w_j) &= \kappa_1 \dots \kappa_{i-1} \lambda_i \mu_{i,i+1} \dots \mu_{i,n-1} \tau(w_i,w_j)\\
\rho(w_{i'},w_{j'}) &= \kappa_1 \dots \kappa_{n-1} \sigma(w_{i'},w_{j'})
\end{align*}
for runs $\tau(w_i,w_j)$, $\sigma(w_{i'},w_{j'})$ for all $1 \le i \le n-1 < j$ and $n-1 < i' < j'$.
For all $1 \le i < n$ define $\nu_{i,n} := \tau(w_i,w_n)$.

We now define $\mu_{i,n}$ successively for each $1 \le i < n$.
In step $i$ we apply the pigeonhole principle to get an infinite subsequence $\bm{v}$ of $\bm{w}$ starting with $w_1,\dots,w_n$
such that all runs $\tau(v_i,v_j)$ for $j > n$ have a common prefix 
$\mu_{i,n}$ which is a run on $\varepsilon \otimes \beta_n$.
At the end of step $i$ we replace $\bm{w}$ by $\bm{v}$ and we replace $(\bm{\alpha},\bm{\beta})$ by the coarsening defined by $\bm{v}$.

Next we define $\lambda_n$.
By the pigeonhole principle there exists an infinite subsequence $\bm{v}$ of $\bm{w}$ starting with $w_1,\dots,w_n$
such that all runs $\sigma(v_n,v_j)$ for $j > n$ have a common prefix
$\lambda_n$ which is a run on $\alpha_n \otimes \beta_n$.
Again we replace $\bm{w}$ by $\bm{v}$ and $(\bm{\alpha},\bm{\beta})$ by the coarsening defined by $\bm{v}$.

Finally, by Ramsey's theorem there is an infinite subsequence $\bm{v}$ of $\bm{w}$ starting with $w_1,\dots,w_n$ 
such that all runs $\sigma(v_i,v_j)$ for $n < i < j$ have a common prefix
$\kappa_n$ which is a run on $\beta_n \otimes \beta_n$.
We replace $\bm{w}$ by $\bm{v}$ and $(\bm{\alpha},\bm{\beta})$ by the coarsening defined by $\bm{v}$.

In the limit we obtain the desired decomposition of the runs $\rho(v_i,v_j)$ and the generator $(\bm{\alpha},\bm{\beta})$ of a comb $\bm{v}$ that is coarser than the initial generator of $\bm{w}$.
\end{proof}

It is not hard to see that such a comb of combs structure can be simulated by an alternating Büchi automaton,
which would only yield a $\PSPACE$-solution for the infinite clique problem.
The following key lemma states that the runs $\mu_{i,j}$, $\nu_{i,j}$ from \Cref{lem:comb-of-combs}
can be chosen independently from $i$,
which reduces the width of the run dag of the alternating automaton to a constant.

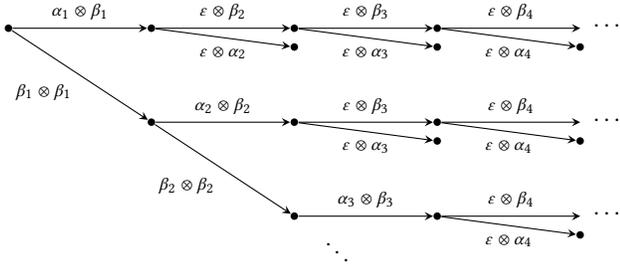
\begin{figure}

\centering

\begin{tikzpicture}[xscale=1.9,yscale=1.25]

\tikzstyle{p}=[fill, circle, inner sep = 1pt]
\tikzstyle{i}=[draw = none, minimum size = 0, inner sep = 0, outer sep = 0]
\tikzstyle{e}=[>=stealth,->]

\node[p] (00) at (0,0) {};
\node[p] (01) at (1,0) {};
\node[p] (11) at (1,-1) {};
\node[p] (12) at (2,-1) {};
\node[p] (22) at (2,-2) {};
\node[p] (23) at (3,-2) {};
\node at (2.3,-2.3) {$\ddots$};

\draw[e] (00) edge node [above] {\scriptsize $\alpha_1 \otimes \beta_1$} (01);
\draw[e] (00) edge node [below left] {\scriptsize $\beta_1 \otimes \beta_1$} (11);
\draw[e] (11) edge node [above] {\scriptsize $\alpha_2 \otimes \beta_2$} (12);
\draw[e] (11) edge node [below left] {\scriptsize $\beta_2 \otimes \beta_2$} (22);
\draw[e] (22) edge node [above] {\scriptsize $\alpha_3 \otimes \beta_3$} (23);

\node[p] (02) at (2,0) {};
\node[p] (03) at (3,0) {};
\node[i] (04) at (4,0) {};
\node at (4.2,0) {$\cdots$};
\node[p] (01a) at (2,-0.2) {};
\node[p] (02a) at (3,-0.2) {};
\node[p] (03a) at (4,-0.2) {};

\draw[e] (01) edge node [above] {\scriptsize $\varepsilon \otimes \beta_2$} (02);
\draw[e] (02) edge node [above] {\scriptsize $\varepsilon \otimes \beta_3$} (03);
\draw[e] (03) edge node [above] {\scriptsize $\varepsilon \otimes \beta_4$} (04);
\draw[e] (01) edge node [below] {\scriptsize $\varepsilon \otimes \alpha_2$} (01a);
\draw[e] (02) edge node [below] {\scriptsize $\varepsilon \otimes \alpha_3$} (02a);
\draw[e] (03) edge node [below] {\scriptsize $\varepsilon \otimes \alpha_4$} (03a);

\node[p] (13) at (3,-1) {};
\node[i] (14) at (4,-1) {};
\node at (4.2,-1) {$\cdots$};
\node[i] (15) at (5,-1) {};
\node[p] (12a) at (3,-1.2) {};
\node[p] (13a) at (4,-1.2) {};

\draw[e] (12) edge node [above] {\scriptsize $\varepsilon \otimes \beta_3$} (13);
\draw[e] (13) edge node [above] {\scriptsize $\varepsilon \otimes \beta_4$} (14);
\draw[e] (12) edge node [below] {\scriptsize $\varepsilon \otimes \alpha_3$} (12a);
\draw[e] (13) edge node [below] {\scriptsize $\varepsilon \otimes \alpha_4$} (13a);

\node[i] (24) at (4,-2) {};
\node at (4.2,-2) {$\cdots$};
\node[i] (25) at (5,-2) {};
\node[p] (23a) at (4,-2.2) {};

\draw[e] (23) edge node [above] {\scriptsize $\varepsilon \otimes \beta_4$} (24);
\draw[e] (23) edge node [below] {\scriptsize $\varepsilon \otimes \alpha_4$} (23a);

\end{tikzpicture}

\caption{If $\bm{v}$ is a comb of the form $v_i = \beta_1 \dots \beta_{i-1} \alpha_i$
then the convolutions $v_i \otimes v_j$ for all $i < j$ form a {\em comb of combs}.}
\label{fig:comb-of-combs}

\end{figure}

\begin{lemma}\label{lem:easy-combs}
If $\bm{w}$ is an infinite clique in a regular relation $R \subseteq \Sigma^* \times \Sigma^*$ given as an NFA $\A$, then there exist
a generator $(\bm{\alpha},\bm{\beta})$ for a subsequence $\bm{v}$ of $\bm{w}$,
accepting runs $\rho(v_i,v_j)$ of $\A$ on $v_i \otimes v_j$, and runs $\kappa_i, \lambda_i, \mu_{j}, \nu_{j}$ such that
\[\rho(v_i,v_j) = \kappa_1 \dots \kappa_{i-1} \lambda_i \mu_{i+1} \dots \mu_{j-1} \nu_{j}\]
is a decomposition compatible with
the $(\bm{\alpha},\bm{\beta})$-decomposition of $v_i \otimes v_j$
for all $i < j$.
\end{lemma}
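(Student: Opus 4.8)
The plan is to start from a comb: by \cref{lem:comb} the clique $\bm w$ contains a comb subsequence, and applying \cref{lem:comb-of-combs} to it yields accepting runs with the ``comb of combs'' structure, i.e.\ a decomposition $\rho(v_i,v_j)=\kappa_1\dots\kappa_{i-1}\lambda_i\mu_{i,i+1}\dots\mu_{i,j-1}\nu_{i,j}$. The goal is to collapse this into a single horizontal spine so that the pieces reading $\varepsilon\otimes\beta_\ell$ and $\varepsilon\otimes\alpha_j$ cease to depend on the row $i$. The enabling observation is that in every block $\mu_{i,\ell}$ and $\nu_{i,j}$ the first component of the convolution has already been exhausted, so the automaton reads only letters of the form $\svector{\bot}{b}$; hence the transition behaviour in such a block depends solely on the word read ($\beta_\ell$, resp.\ $\alpha_j$) and on the state at which the block is entered, and never directly on $i$. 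Thus the \emph{only} obstruction to merging the runs of different rows is a mismatch of the states at the block boundaries, and the task reduces to forcing these boundary states to agree.

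Write $q_{i,\ell}$ for the state of $\rho(v_i,v_j)$ reached after the block $\varepsilon\otimes\beta_\ell$ (defined for $i<\ell$) and $s_i$ for the state reached after $\lambda_i$, i.e.\ the point at which row $i$ has consumed its first component and is about to enter the horizontal part. First I would apply the infinite Ramsey theorem to the colouring that assigns to each pair $i<\ell$ the state $q_{i,\ell}\in Q$; passing to an infinite monochromatic subsequence and coarsening the generator accordingly (using that $q_{i,\ell}$ restricts correctly to subsequences) makes all these states equal to a single common value $\hat p$, so at every column boundary every active row is in state $\hat p$. If in addition the entry state $s_\ell$ of each fresh row can be forced to equal $\hat p$, then every row joins the spine already at its first horizontal step: one takes $\mu_\ell$ to be a single run on $\varepsilon\otimes\beta_\ell$ from $\hat p$ to $\hat p$ and $\nu_j$ to be a single accepting run on $\varepsilon\otimes\alpha_j$ out of $\hat p$, and splicing the row-specific prefix $\kappa_1\dots\kappa_{i-1}\lambda_i$ with the now row-independent suffix $\mu_{i+1}\dots\mu_{j-1}\nu_j$ is an accepting run precisely because all boundary states match. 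Passing to the limit of the nested subsequences (a diagonalisation) then produces the final comb $\bm v$ with the decomposition claimed in the statement.

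The hard part will be exactly this alignment of the entry states $s_\ell$ with the common horizontal state $\hat p$, and here the obstacle is genuine. The state $s_\ell$ is reached by reading $\alpha_\ell\otimes\beta_\ell$, with the first component still present, whereas $\hat p$ is reached by reading blocks $\svector{\bot}{\cdot}$; the two are produced by feeding the automaton syntactically different inputs, so simply pigeonholing $s_\ell$ to a constant need not make that constant equal to $\hat p$. As \cref{fig:comb-of-combs} shows, failure here keeps a separate ``first column'' thread per row and leaves the run dag of unbounded width. To overcome it I would again exploit that once the first component of $v_\ell$ is consumed the run proceeds purely on $\svector{\bot}{\cdot}$-letters: along an infinite horizontal read the state sequence lives in the finite set $Q$, so some state recurs, and by coarsening the comb between two recurrences one can arrange that reading a merged column acts as a self-loop at a fixed state with which the fresh row's entry state can be identified. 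The remaining subtlety, which I expect to be the main technical difficulty, is to carry out this recurrence argument \emph{simultaneously} for all rows rather than one at a time; this calls for interleaving the selection with a further Ramsey/pigeonhole step so that the recurring state is uniform across rows, all while respecting the coarsening of the generator.
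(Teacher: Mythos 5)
Your proposal follows the paper's route exactly up to the decisive step: start from \cref{lem:comb} and \cref{lem:comb-of-combs}, observe that the horizontal pieces $\mu_{i,\ell}$ and $\nu_{i,j}$ read only letters of the form $\svector{\bot}{b}$ and can therefore be replaced by row~$1$'s copies provided the states at the block boundaries agree, and apply Ramsey's theorem to the boundary states $q_{i,\ell}$ to make them all equal to a single state $\hat p$. You then correctly isolate the crux --- the entry state $s_\ell$ (the final state of $\lambda_\ell$) must also equal $\hat p$ --- but you do not close it. The recurrence/self-loop idea you sketch does not do the job: making every merged column a $\hat p \to \hat p$ self-loop of the spine says nothing about $s_\ell$, which is produced by reading $\alpha_\ell \otimes \beta_\ell$ rather than a $\svector{\bot}{\cdot}$-block, and your text ends by deferring exactly this alignment as ``the main technical difficulty.'' So the argument is incomplete at the one point where \cref{lem:easy-combs} goes beyond \cref{lem:comb-of-combs}.

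The missing idea is that one need not insist that row $\ell$ join the spine at its own column. The paper colours the pair $(i,j)$ with $i \le j$ by the initial state of $\mu_{i,j+1}$, so that the entry state $s_i$ is just the diagonal instance $(i,i)$ of the same colouring, extracts a subsequence $k_1 < k_2 < \cdots$ on which all these states coincide, and then the splice $\tilde\lambda_i \tilde\mu_{i+1}$ is valid because $\mu_{k_i,k_i+1}$ and $\mu_{k_1,k_i+1}$ start in the same state. If one is worried (legitimately, and this is precisely your sticking point) that the infinite Ramsey theorem only constrains the off-diagonal colours, the clean patch is to interleave: after extracting $k_1 < k_2 < \cdots$ with $q_{k_a,k_b} = \hat p$ for all $a < b$, take $w_{k_{2i-1}}$ as the $i$-th row of the new comb but place the $i$-th column boundary at $k_{2i}$, absorbing the blocks $\mu_{k_{2i-1},k_{2i-1}+1} \cdots \mu_{k_{2i-1},k_{2i}}$ into $\tilde\lambda_i$. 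Every row then enters the spine at an \emph{off-diagonal} boundary state $q_{k_{2i-1},k_{2i}} = \hat p$, every splice point carries the single colour $\hat p$, and the coarsened generator still satisfies the comb conditions. This concrete selection is what your proposal is missing; without it the boundary states of the fresh rows and of the spine are never actually matched.
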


\begin{proof}
Suppose that $R$ has an infinite clique.
Then there exist an infinite clique $\bm{w}$ in $R$ generated by $(\bm{\alpha},\bm{\beta})$ and
runs $\kappa_i, \lambda_i, \mu_{i,j}, \nu_{i,j}$ for $i < j$ as in \Cref{lem:comb-of-combs}.

It remains to ensure that $\mu_{i,j} = \mu_{i',j}$ and $\nu_{i,j} = \nu_{i',j}$ for all $i < i' < j$.
To do so, consider the initial state of the run $\mu_{i,j}$.
By Ramsey's theorem there exist indices $k_1 < k_2 < \dots$ such that
all runs $\mu_{k_i,k_j}$ have the same initial state.
We define $v_i = w_{k_i}$ for all $i \ge 1$, and
\begin{align*}
	\tilde \kappa_1 &= \kappa_1 \cdots \kappa_{k_2-1}
	& \tilde \lambda_1 &= \kappa_1 \cdots \kappa_{k_1-1} \lambda_{k_1} \mu_{k_1,k_1+1} \cdots \mu_{k_1,k_2-1}\\
	\tilde \kappa_i &= \kappa_{k_i} \cdots \kappa_{k_{i+1}-1} 
	& \tilde \lambda_i &= \lambda_{k_i} \mu_{k_i,k_i+1} \cdots \mu_{k_i,k_{i+1}-1} \\
	\tilde \mu_i &= \mu_{k_1,k_i} \cdots \mu_{k_1,k_{i+1}-1}
	& \tilde \nu_i &= \nu_{k_1,k_i}
\end{align*}
for all $i \ge 2$.
Observe that the composition $\tilde \lambda_i \tilde \mu_{i+1}$ forms a valid run
since $\mu_{k_i,k_{i+1}}$ and $\mu_{k_1,k_{i+1}}$ have the same initial state.
Then $\tilde \kappa_1 \dots \tilde \kappa_{i-1} \tilde \lambda_i \tilde \mu_{i+1} \dots \tilde \mu_{j-1} \tilde \nu_j$ is an accepting run
on $v_i \otimes v_j$.
Furthermore, this run decomposition is compatible with
the $(\bm{\delta},\bm{\gamma})$-decomposition of $v_i \otimes v_j$
where the generator $(\bm{\delta}, \bm{\gamma})$ is defined as
\begin{align*}
\delta_1 &= \beta_1 \cdots \beta_{k_2-1}
& \gamma_1 &= \beta_1 \cdots \beta_{k_1-1} \alpha_{k_1}\\
\delta_i &= \beta_{k_i} \cdots \beta_{k_{i+1}-1}
&\gamma_i &= \alpha_{k_i}
\end{align*}
for all $i \ge 2$.
This concludes the proof.
\end{proof}

A \emph{nondeterministic Büchi automaton} (NBA)
has the same format as an NFA $\B = (Q,\Sigma,\Delta,q_0,F)$.
An infinite word $w \in \Sigma^\omega$ is {\em accepted} by $\B$
if there exists an accepting run $(q_0,a_1,q_1) (q_1,a_2,q_2) \dots \in \Delta^\omega$
with $w = a_1 a_2 \dots$ and $q_i \in F$ for infinitely many $i \ge 0$.

\begin{restatable}{proposition}{combsbuechi}
	\label{prop:combs-buechi}
	Given an NFA $\A$ for a relation $R \subseteq (\Sigma^*)^2$,
	one can construct in logarithmic space a Büchi automaton $\B$ over the alphabet $(\Sigma_\bot \times \Sigma) \cup \{\#\}$
	such that:
	\begin{itemize}
	\item If $\bm{w}$ is an infinite clique in $R$
	then $\B$ accepts an encoding of a comb $\bm{v}$ which is a subsequence of $\bm{w}$.
	\item If $\B$ accepts an encoding of a comb $\bm{w}$
	then $\bm{w}$ is an infinite clique in $R$.
	\end{itemize}
\end{restatable}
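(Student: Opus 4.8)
The plan is to realize $\B$ as a bounded-width parallel simulation of $\A$, delegating all the combinatorial work to \Cref{lem:easy-combs}. I start from the shape of runs that lemma guarantees: any infinite clique contains a subcomb $\bm{v}$ with a generator $(\bm{\alpha},\bm{\beta})$ and accepting runs $\rho(v_i,v_j)=\kappa_1\cdots\kappa_{i-1}\lambda_i\mu_{i+1}\cdots\mu_{j-1}\nu_j$ in which $\mu_\ell$ and $\nu_\ell$ are \emph{independent of $i$}. Tracking which $\A$-states the pieces share, the entire (a priori two-dimensional) family of runs collapses onto just two threads: a \emph{diagonal} thread $p_0=q_{\mathrm{in}},p_1,p_2,\dots$ and a \emph{horizontal} thread $e_1,e_2,\dots$, connected by
\begin{align*}
\kappa_\ell &\colon p_{\ell-1}\xrightarrow{\beta_\ell\otimes\beta_\ell}p_\ell, & \lambda_\ell &\colon p_{\ell-1}\xrightarrow{\alpha_\ell\otimes\beta_\ell}e_\ell,\\
\mu_\ell &\colon e_{\ell-1}\xrightarrow{\varepsilon\otimes\beta_\ell}e_\ell, & \nu_\ell &\colon e_{\ell-1}\xrightarrow{\varepsilon\otimes\alpha_\ell}F.
\end{align*}
The crucial point, forced by $i$-independence of $\mu_\ell$, is that $\lambda_\ell$ and $\mu_\ell$ must land in the \emph{same} state $e_\ell$ (the start state of $\mu_{\ell+1}$). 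Since all this lives in a single pair $(p_\ell,e_\ell)\in Q\times Q$, it can be carried across block boundaries by a polynomial-size automaton.

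Concretely, $\B$ reads $\enc(\bm{\alpha},\bm{\beta})=\begin{bmatrix}\alpha_1\\\beta_1\end{bmatrix}\#\begin{bmatrix}\alpha_2\\\beta_2\end{bmatrix}\#\cdots$ block by block, storing $(p_{\ell-1},e_{\ell-1})$ at the boundary before block $\ell$. While scanning one block $\begin{bmatrix}\alpha_\ell\\\beta_\ell\end{bmatrix}$ it simulates, in a single left-to-right pass, four runs of $\A$ on the four projections above: at a position carrying the symbol $(a,b)$ it advances $\kappa$ on $(b,b)$, $\lambda$ on $(a,b)$, $\mu$ on $(\bot,b)$, and --- as long as $a\neq\bot$ --- $\nu$ on $(\bot,a)$; when the top coordinate first turns to $\bot$ (or at the block's end) it checks that $\nu$ has reached a state in $F$. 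At the end of the block it sets $p_\ell$ to the final state of $\kappa$ and $e_\ell$ to the common final state of $\lambda$ and $\mu$, killing the branch unless these two coincide. The first block is special (no $\mu_1,\nu_1$; $\kappa_1,\lambda_1$ start in $q_{\mathrm{in}}$). During a block $\B$ only needs the at most four current $\A$-states of the simulated runs, so $\B$ has polynomially many states in $|\A|$; the $\varepsilon$-transitions of $\A$ are absorbed as $\varepsilon$-transitions of $\B$ in the standard way, and (optionally) the syntactic comb constraints $1\le|\alpha_\ell|\le|\beta_\ell|$ are checked on the fly, so $\B$ is constructible in logspace. For the Büchi condition I declare accepting exactly the states reached immediately after a $\#$, so that $\B$ accepts precisely when it completes infinitely many blocks, each passing its $\nu$-check.

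For soundness, suppose $\B$ accepts an input encoding a comb $\bm{w}$ with generator $(\bm{\alpha},\bm{\beta})$, i.e.\ $w_i=\beta_1\cdots\beta_{i-1}\alpha_i$. An accepting run of $\B$ yields threads $(p_\ell),(e_\ell)$ and the verified pieces $\kappa_\ell,\lambda_\ell,\mu_\ell,\nu_\ell$. For every $i<j$ I assemble $\kappa_1\cdots\kappa_{i-1}\lambda_i\mu_{i+1}\cdots\mu_{j-1}\nu_j$; by \cref{eq:decomposition} this is a run of $\A$ on $w_i\otimes w_j$, and it is accepting because $\nu_j$ ends in $F$. Hence $(w_i,w_j)\in R$ for all $i<j$, and since the $w_i$ have strictly increasing length and are therefore distinct, $\bm{w}$ is an infinite clique. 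Conversely, if $R$ has an infinite clique, \Cref{lem:easy-combs} supplies a subcomb $\bm{v}$ and runs of exactly the shape above; feeding $\B$ the encoding of $\bm{v}$ and guessing the threads $(p_\ell),(e_\ell)$ and pieces produced by the lemma yields an accepting run, so $\B$ accepts an encoding of the comb $\bm{v}$, which is a subsequence of the original clique.

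The genuine work is already contained in \Cref{lem:easy-combs}; the only real obstacle left here is the observation that a \emph{single} piece $\nu_j$ simultaneously certifies all pairs $(v_i,v_j)$ with $i<j$. This is exactly what the convergence of every horizontal thread into the common state $e_{j-1}$ provides: for each $i$, the prefix $\kappa_1\cdots\kappa_{i-1}\lambda_i\mu_{i+1}\cdots\mu_{j-1}$ ends in $e_{j-1}$, where $\nu_j$ can take over. The companion point --- that forcing $\lambda_\ell$ and $\mu_\ell$ to agree on $e_\ell$ is both necessary for this merge and available from the lemma --- is what keeps the width, and hence $|\B|$, constant. Everything else (the four-fold parallel simulation, the absorption of $\varepsilon$-moves, and the logspace bookkeeping) is routine.
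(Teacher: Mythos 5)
Your construction is correct and is essentially the paper's own proof: a four-component Büchi automaton that simulates the pieces $\kappa,\lambda,\mu,\nu$ from \Cref{lem:easy-combs} in parallel over each block of the encoding, enforces the merge of the $\lambda$- and $\mu$-threads into the common state at each $\#$, and accepts iff infinitely many blocks are completed with the $\nu$-component reaching $F$. The only differences are immaterial bookkeeping choices (where exactly the $\nu$-run's acceptance is checked within a block, and whether all states or only post-$\#$ states are declared Büchi-accepting).
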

\begin{proof}
Given an NFA $\A = (Q,\Sigma_\bot^2,q_{\mathrm{in}},\Delta,F)$, we add to $\A$ a fresh state $\bot$
and transitions $\bot \xrightarrow{(a,b)} q$ for all $(a,b) \in \Sigma_\bot^2$ and $q \in Q_\bot$ where $Q_\bot := Q \cup \{\bot\}$.

The Büchi automaton $\B = (Q_\bot^4,\Sigma_\bot^2,q_{\mathrm{in}}^\B,\Delta^\B,Q_\bot^4)$
simulates the runs $\kappa_j,\lambda_j,\mu_j,\nu_j$ from \Cref{lem:easy-combs}
in the four components.
Its initial state is $q_{\mathrm{in}}^\B = (q_{\mathrm{in}},q_{\mathrm{in}},\bot,\bot)$
and it contains the following transtions:
\begin{itemize}
\item $(p,s,q,t) \xrightarrow{(a,b)}_{\B} (p',s',q',t')$ for all transitions $p \xrightarrow{(b,b)}_{\A} p'$, $s \xrightarrow{(a,b)}_{\A} s'$,
	$q \xrightarrow{(\bot,b)}_{\A} q'$, $t \xrightarrow{(\bot,a)}_{\A} t'$,
\item $(p,q,q,t) \xrightarrow{\#}_{\B} (p,p,q,q)$ for all $p,q \in Q$, $t \in F$,
\item $(p_1,p_2,p_3,p_4) \xrightarrow{\varepsilon}_\B (q_1,q_2,q_3,q_4)$
if $p_i = q_i$ or $p_i \xrightarrow{\varepsilon}_\A q_i$ for all $i \in [1,4]$.
\end{itemize}
The desired Büchi automaton is a product automaton of $\B$
and a Büchi automaton which verifies that the input word is a valid comb encoding $\enc(\bm{\alpha},\bm{\beta})$.
\end{proof}

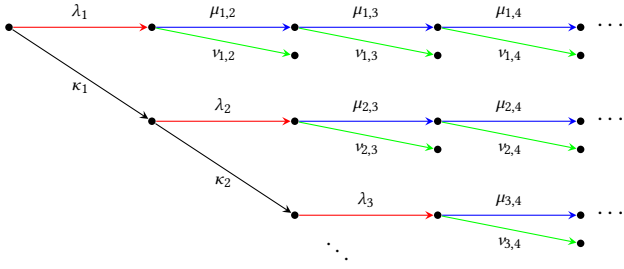
\begin{figure}
\centering

\begin{tikzpicture}[xscale=1.9,yscale=1.25]

\tikzstyle{p}=[fill, circle, inner sep = 1pt]
\tikzstyle{i}=[draw = none, minimum size = 0, inner sep = 0, outer sep = 0]
\tikzstyle{e}=[>=stealth,->]

\node[p] (00) at (0,0) {};
\node[p] (01) at (1,0) {};
\node[p] (11) at (1,-1) {};
\node[p] (12) at (2,-1) {};
\node[p] (22) at (2,-2) {};
\node[p] (23) at (3,-2) {};
\node at (2.3,-2.3) {$\ddots$};

\draw[e, red] (00) edge node[above, black]{\scriptsize $\lambda_1$} (01);
\draw[e] (00) edge node[below]{\scriptsize $\kappa_1$} (11);
\draw[e, red] (11) edge node[above, black]{\scriptsize $\lambda_2$} (12);
\draw[e] (11) edge node[below]{\scriptsize $\kappa_2$} (22);
\draw[e, red] (22) edge node[above, black]{\scriptsize $\lambda_3$} (23);

\node[p] (02) at (2,0) {};
\node[p] (03) at (3,0) {};
\node[p] (04) at (4,0) {};
\node at (4.2,0) {$~\cdots$};
\node[p] (01a) at (2,-0.3) {};
\node[p] (02a) at (3,-0.3) {};
\node[p] (03a) at (4,-0.3) {};

\draw[e, blue] (01) edge node[above, black]{\scriptsize $\mu_{1,2}$} (02);
\draw[e, blue] (02) edge node[above, black]{\scriptsize $\mu_{1,3}$} (03);
\draw[e, blue] (03) edge node[above, black]{\scriptsize $\mu_{1,4}$} (04);
\draw[e, green] (01) edge node[below, black]{\scriptsize $\nu_{1,2}$} (01a);
\draw[e, green] (02) edge node[below, black]{\scriptsize $\nu_{1,3}$} (02a);
\draw[e, green] (03) edge node[below, black]{\scriptsize $\nu_{1,4}$} (03a);

\node[p] (13) at (3,-1) {};
\node[p] (14) at (4,-1) {};
\node at (4.2,-1) {$~\cdots$};
\node[p] (12a) at (3,-1.3) {};
\node[p] (13a) at (4,-1.3) {};

\draw[e, blue] (12) edge node[above, black]{\scriptsize $\mu_{2,3}$} (13);
\draw[e, blue] (13) edge node[above, black]{\scriptsize $\mu_{2,4}$} (14);
\draw[e, green] (12) edge node[below, black]{\scriptsize $\nu_{2,3}$} (12a);
\draw[e, green] (13) edge node[below, black]{\scriptsize $\nu_{2,4}$} (13a);

\node[p] (24) at (4,-2) {};
\node at (4.2,-2) {$~\cdots$};
\node[p] (23a) at (4,-2.3) {};

\draw[e, blue] (23) edge node[above, black]{\scriptsize $\mu_{3,4}$} (24);
\draw[e, green] (23) edge node[below, black]{\scriptsize $\nu_{3,4}$} (23a);

\end{tikzpicture}

\caption{One can always find an infinite comb clique $\bm{v}$ with accepting runs
which can be decomposed in the form
$\rho(v_i,v_j) = \kappa_1 \dots \kappa_{i-1} \lambda_i \mu_{i,i+1} \dots \mu_{i,j-1} \nu_{i,j}$.}
\label{fig:buchi-simulation}
\end{figure}

\begin{proof}[Proof of~\Cref{thm:word-ramsey}]
First observe that we can construct in log-space an NFA $\C$ over $\Sigma$ such that
(i) for every infinite clique $\bm{w}$ of $R$ some element $w_i$ is accepted by $\C$, and
(ii) if $w$ is accepted by $\C$ then $w$ belongs to an infinite clique of $R$.
To be more precise, $\C$ accepts $\alpha_1 \in \Sigma^*$ if and only if
some encoding $\enc(\bm{\alpha},\bm{\beta})$ is accepted by
the Büchi automaton $\B$ from \Cref{prop:combs-buechi}.
This can be done in log-space as follows:
First we construct a Büchi automaton $\hat \C$ over $\Sigma \cup \{\#\}$
which accepts all words of the form $\alpha_1 \#^\omega$
such that an encoding $\enc(\bm{\alpha},\bm{\beta})$ is accepted by $\B$.
Then $\hat \C$ is turned into an NFA $\C$ (which does not read the suffix $\#^\omega$)
by replacing $\#$-transitions by $\varepsilon$-transitions.
Furthermore $\C$ tracks the number of final states visited so far and accepts if and only if this number exceeds
the number of states in $\hat \C$.

Given an NFA $\A$ for $R \subseteq (\Sigma^*)^{k+2}$.
We first construct an NFA $\A'$ over $\Sigma_\bot^{2k+2}$
which accepts the regular binary relation
\[
	R' = \{ (u \otimes c_1 \otimes \dots \otimes c_k, v \otimes c_1 \otimes \dots \otimes c_k) \mid (u,v,\bm{c}) \in R \}.
\]
Note that $\A'$ can be constructed in logspace since it is obtained by taking each transition of $\A$ and duplicating the $\bm{c}$-coordinates and moving the $v$-coordinate.
Let $\C'$ be the NFA described above which accepts at least one word from each infinite $R'$-clique
and only accepts elements of infinite $R'$-cliques.
Projecting away the first component yields the desired NFA for $\{ \bm{c} \in (\Sigma^*)^k \mid \ram x,y \colon R(x,y,\bm{c}) \}$.
\end{proof}

\section{Tree-automatic structures}\label{sec:trees}
\subsection{$\EXP$-hardness}

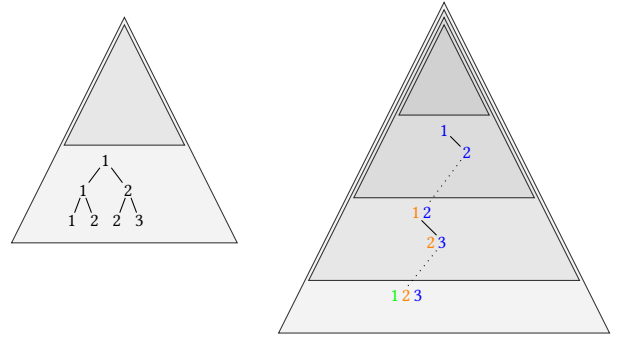
\begin{figure}
\begin{tikzpicture}[yscale=-1]

\begin{scope}[shift={(1.25,0.1)}]
\filldraw[nt] (0,0) -- (0.8,1.6) -- (-0.8,1.6) -- cycle;
\end{scope}

\begin{scope}[shift={(1.25,0)}]
\filldraw[nt] (0,0) -- (1.5,3) -- (-1.5,3) -- cycle;
\end{scope}

\tikzstyle{aut}=[inner sep = 0.5pt]%
\begin{scope}[shift={(1,1.9)}]
\node[aut] (a) at (0,0) {\footnotesize 1};
\node[aut] (b) at (-0.3,0.4) {\footnotesize 1};
\node[aut] (c) at (0.3,0.4) {\footnotesize 2};
\node[aut] (d) at (-0.45,0.8) {\footnotesize 1};
\node[aut] (e) at (-0.15,0.8) {\footnotesize 2};
\node[aut] (f) at (0.15,0.8) {\footnotesize 2};
\node[aut] (g) at (0.45,0.8) {\footnotesize 3};

\draw (a) -- (b);
\draw (a) -- (c);
\draw (b) -- (d);
\draw (b) -- (e);
\draw (c) -- (f);
\draw (c) -- (g);
\end{scope}

\begin{scope}[shift={(5.5,0.1)}]
\filldraw[nt] (0,0) -- (0.6,1.2) -- (-0.6,1.2) -- cycle;
\end{scope}

\begin{scope}[shift={(5.5,0)}]
\filldraw[nt] (0,0) -- (1.2,2.4) -- (-1.2,2.4) -- cycle;
\end{scope}

\begin{scope}[shift={(5.5,-0.1)}]
\filldraw[nt] (0,0) -- (1.8,3.6) -- (-1.8,3.6) -- cycle;
\end{scope}

\begin{scope}[shift={(5.5,-0.2)}]
\filldraw[nt] (0,0) -- (2.2,4.4) -- (-2.2,4.4) -- cycle;
\end{scope}

\begin{scope}[shift={(5.5,1.5)}]
\node[aut] (a) at (0,0) {\footnotesize \textcolor{blue}{1}};
\node[aut] (b) at (0.3,0.3) {\footnotesize \textcolor{blue}{2}};
\node[aut] (c) at (-0.3,1.1) {\footnotesize \textcolor{orange}{1}\,\textcolor{blue}{2}};
\node[aut] (d) at (-0.1,1.5) {\footnotesize \textcolor{orange}{2}\,\textcolor{blue}{3}};
\node[aut] (e) at (-0.5,2.2) {\footnotesize \textcolor{green}{1}\,\textcolor{orange}{2}\,\textcolor{blue}{3}};

\draw (a) -- (b);
\draw[dotted] (b) -- (c);
\draw (c.south) -- (d.north);
\draw[dotted] (d.south) -- (e.north);
\end{scope}

\end{tikzpicture}

\caption{Left: Illustration of the tree automaton $\mathcal{T}$, tracking the number of right directions modulo $n$.
Right: A path on which the runs of $\mathcal{T}$ are disjoint.}
\label{fig:lower-bound}

\end{figure}

In this section we prove the exponential lower bound from \cref{thm:tree-lowerbound} for the infinite clique problem over trees.
This lower bound is surprising since over words, the infinite clique problem
can be reduced to the emptiness of (word) Büchi automata, which is $\NL$-complete.
This is not the case in the tree case since emptiness of Büchi tree automata is $\P$-complete.

We start with an intutive explanation of the lower bound.
To prove the upper bound in the word case,
we used the fact that we can assume cliques $\bm{v}$ whose runs $\rho(v_i,v_j)$ can be merged into a \emph{single} global run
(\cref{lem:easy-combs}).
Over tree regular relations this is not the case anymore.
Consider a deterministic top-down tree automaton~$\mathcal{T}$,
which behaves as follows on the convolution $t \otimes t'$ of two binary trees $t, t'$
with $\dom(t) \subsetneq \dom(t')$:
Starting from every node on the fringe of $t$, the automaton tracks the number of times it moves to a right child, modulo some number $n$;
see \Cref{fig:lower-bound} for a depiction.
Now consider an increasing sequence of binary trees $t_1, t_2, \dots$, and the unique runs $\rho_{i,j}$ of $\mathcal{T}$ on $t_i \otimes t_j$.
\Cref{fig:lower-bound} illustrates that we can always find a path on which the runs $\rho_{1,n}, \rho_{2,n}, \dots, \rho_{n-1,n}$ are disjoint.
This behavior indicates that it is difficult to witness the existence of infinite cliques by a polynomially-sized Büchi tree automaton.

We extend this idea to a reduction from the intersection non-emptiness problem for tree automata,
which is known to be $\EXP$-complete \cite{FruhwirthSVY91}:
Given an NTA $\A = (Q,\Sigma,\Delta,q_0)$ and states $q_1, \dots, q_n \in Q$, decide whether $\bigcap_{i=1}^n L(\A_{q_i})$ is non-empty.
Here, $\A_{q_i}$ denotes the NTA $\A$ with initial state $q_i$.

We construct a relation $R$ on decorated trees, which are obtained from a binary tree by attaching
to every inner node $u$ a ranked tree $\delta(u)$ over $\Sigma$. 
Let $\Gamma := \Sigma \cup \{a,c\}$ be a ranked alphabet with $\rk(a) = 3,~\rk(c) = 0$.
A {\em decorated tree} is a tree $t \in \T_\Gamma$ such that $t(\varepsilon) \in \{a,c\}$
and for all $u \in \dom(t)$ we have
\begin{itemize}
\item $t(u1) = t(u2) \in \{a,c\}$ and $t(u3) \in \Sigma$ if $t(u) = a$ and
\item $t(ui) \in \Sigma$ for all $i \in [1,\rk(t(u))]$ if $t(u) \in \Sigma$.
\end{itemize}
We denote by $a(t) := \{u \in \dom(t) \mid t(u) = a\}$ the nodes of $t$ labeled with $a$ and by $ac(t) := \{u \in \dom(t) \mid t(u) \in \{a,c\}\}$ the nodes labeled with $a$ or $c$.
The \emph{decoration} of $t$ is a function $\delta_t \colon a(t) \to \T_\Sigma$ such that $\delta_t(u) = t_{\downarrow u3}$ where $t_{\downarrow v}$ denotes the subtree of $t$ rooted in $v \in \dom(t)$.

Let $\A' = (Q',\Gamma,\Delta',p_1)$ be the NTA where $Q' := Q \cup \{p_1, \dots, p_n\}$
and $\Delta'$ contains all transitions from $\Delta$ and the transitions
\begin{align*}
p_i & \xrightarrow{a} (p_i,p_{i+1},q_i) \text{ for all } i \in [1,n] \text{ where } p_{n+1} := p_1, \\
p_i & \xrightarrow{c} () \text{ for all } i \in [1,n].
\end{align*}
We define the tree-regular relation $R \subseteq \T_\Gamma \times \T_\Gamma$ such that $(s,t) \in R$ if and only if
$s$ and $t$ are decorated trees with $ac(s) \subseteq a(t)$ and 
$\A'$ accepts $t_{\downarrow u}$ for all $u \in \min(a(t) \setminus a(s))$.
Here, the minimum is defined with respect to prefix ordering.
It is easy to construct an NTA that recognizes $R$ in logspace.

It remains to show that $\bigcap_{i=1}^n L(\A_{q_i}) \neq \emptyset$ if and only if $R$ contains an infinite clique.
For the ``only if'' direction let $t \in \T_\Sigma$ be a tree that is accepted by $\A_{q_i}$ for all $i \in [1,n]$.
For all $i \ge 0$ we define the decorated tree $t_i \in \T_\Gamma$ such that $ac(t_i) = \bigcup_{j=0}^i \{1,2\}^j$
and $\delta_{t_i}(u) = t$ for all $u \in a(t_i)$.
It is easy to verify that $(t_i,t_j) \in R$ for all $i < j$.

Conversely, consider a sequence of decorated trees $t_i \in \T_\Gamma$ for $i \ge 0$ with $(t_i,t_j) \in R$ for all $i < j$.
We define nodes $v_1, \dots, v_n$ with
\begin{itemize}
\item $v_i \in \min(a(t_i) \setminus a(t_{i-1}))$ for all $i \in [1,n]$ and
\item $v_{i+1} = v_i 2 1^{k_i}$ for all $i \in [1,n-1]$ and some $k_i \ge 0$.
\end{itemize}
We can choose $v_1 \in \min(a(t_1) \setminus a(t_0))$ arbitrary which defines $v_2, \dots, v_n$ uniquely.
Since $(t_{i-1},t_n) \in R$ and $v_i \in \min(a(t_n) \setminus a(t_{i-1}))$, the subtree ${t_n}_{\downarrow v_i}$ is accepted by $\A'$ for all $i \in [1,n]$.
By definition of $\A'$ there exist accepting runs on ${t_n}_{\downarrow v_n}$ starting from $p_1, \dots, p_n$.
Therefore, the tree $\delta_{t_n}(v_n) \in \T_\Sigma$ is accepted by $\A$ starting from all states $q_1, \dots, q_n$.

We note that $\EXP$-hardness already holds if $R$ is given by a D$\downarrow$TA
since intersection nonemptiness is $\EXP$-hard already for D$\downarrow$TAs \cite{Seidl94},
and if the automaton $\A$ is a D$\downarrow$TA, then the constructed relation $R$ from the proof
can also be recognized by a D$\downarrow$TA.
Moreover, the reduction can be adapted to recurrent reachability by setting the target set to $\T_\Sigma$,
which proves the exponential lower bound in \cref{cor:rec-reach}.

\subsection{Tree combs}

To prove the upper bounds for tree-regular relations, we extend the notion of combs to the tree case.
Here a tree $t_i$ is decomposed vertically in the form $t_i = \beta_1 \dots \beta_{i-1} \alpha_i$
where $\beta_1$ is a context, $\beta_2, \dots, \beta_{i-1}$ are forests of contexts,
and $\alpha_i$ is a forest of trees, see \Cref{fig:tree-comb} for an abstract illustration.

A \emph{context forest} of width $n$
is a finite sequence $\tau = (c_i)_{1 \le i \le n}$ of contexts $c_i \in \C_\Sigma$.
Context forests of width 1 are regarded as contexts.
We say that $\tau$ is {\em nontrivial} if $n \ge 1$ and $|c_i| \ge 1$ for all $1 \le i \le n$.
If the $c_i$ are trees in $\U_\Sigma$, we call $\tau$ just a \emph{forest}.
We define the concatenation of a context $\tau_1 \in \C_\Sigma$, where $|\holes(\tau_1)|=n$,
with a context forest $\tau_2 = (c_1, \dots, c_n) \in \F_\Sigma$ of width $n$
by $\tau_1 \tau_2 := \tau_1[c_1,\dots,c_n]$.
We write $\tau_1 \tau_2 \dots \tau_n$ for a context $\tau_1$
and context forests $\tau_2, \dots, \tau_n$ assuming left-associativity.
Here we implicitly assume that the width of $\tau_i$ matches $|\holes(\tau_1 \dots \tau_{i-1})|$.
If $t$ is a tree and $s$ is a context, we write $t \triangleleft s$ if $\dom(t) \cap \holes(s)  = \emptyset$.
For a forest $\alpha = (t_i)_{i \le n}$ and a context forest $\beta = (s_i)_{i \le n}$ we also write $\alpha \triangleleft \beta$
if $t_i \triangleleft s_i$ for all $1 \le i \le n$.

An infinite sequence $\bm{t} = (t_i)_{i \ge 1}$ of ranked trees is called a {\em comb}
if there is a sequence of forests $\bm{\alpha} = (\alpha_i)_{i \ge 1}$ and a sequence of nontrivial context forests $\bm{\beta} = (\beta_i)_{i \ge 1}$ such that for all $i \ge 1$ we have
$t_i = \beta_1 \dots \beta_{i-1} \alpha_i$ and
$\alpha_i \triangleleft \beta_i$.
The pair $(\bm{\alpha},\bm{\beta})$ is called \emph{generator} of the comb.
Since the trees $t_i$ are ranked, also the trees in $\alpha_i$ and the contexts in $\beta_i$ are ranked.

The property $\alpha_i \triangleleft \beta_i$ should be compared to the property $|\alpha_i| \le |\beta_i|$ in word combs.
It ensures that every forest $\alpha_i$ does not touch any context forest $\alpha_j, \beta_j$ for $j \neq i$.

\begin{lemma}[Combs lemma over trees]
	\label{lem:tree-comb}
	Any sequence $\bm{t}$ of pairwise distinct ranked trees $t_i \in \T_\Sigma$ over a finite ranked alphabet $\Sigma$ 
	contains a comb as a subsequence.
\end{lemma}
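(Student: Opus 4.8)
The plan is to build the generator $(\bm\alpha,\bm\beta)$ and the comb subsequence recursively, one tooth at a time, in direct analogy with the word case (\cref{lem:comb}): I grow a finite ``spine'' context $C_n=\beta_1\cdots\beta_n$ that is a common prefix of an infinite subfamily of $\bm t$, and at each stage I peel off one tree as the next tooth before extending the spine. The only combinatorial tool needed is iterated pigeonholing over a finite set of ``truncation types'', which is finite precisely because $\Sigma$ is a finite ranked alphabet; in particular no appeal to Ramsey's theorem or to König's lemma is required.

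Concretely, I maintain an infinite index set $I_n$ and a context $C_n\in\C_\Sigma$ with hole set $H_n=\holes(C_n)$ such that every $t_i$ with $i\in I_n$ has $C_n$ as a prefix, i.e.\ $t_i=C_n[({t_i}_{\downarrow h})_{h\in H_n}]$, where ${t_i}_{\downarrow h}$ is the subtree sitting at hole position $h$. I start with $C_0=x$ (a single hole) and $I_0=\{1,2,\dots\}$. For the step, I first pick an index $j_{n+1}\in I_n$ larger than all previously chosen teeth indices and set $v_{n+1}:=t_{j_{n+1}}$, with tooth forest $\alpha_{n+1}:=({v_{n+1}}_{\downarrow h})_{h\in H_n}$; let $D$ be the maximal depth occurring in $\alpha_{n+1}$. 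For a tree $s$ let $\theta_D(s)$ denote its depth-$D$ truncation, i.e.\ the context obtained by keeping every node of $s$ of depth $\le D$ and placing a hole at each position of depth $D+1$ at which $s$ continues. Since $\Sigma$ is finite and ranked, only finitely many contexts arise this way, so $i\mapsto(\theta_D({t_i}_{\downarrow h}))_{h\in H_n}$ partitions $I_n\setminus\{j_{n+1}\}$ into finitely many classes; I choose an infinite class $I_{n+1}$, let the component $\beta_{n+1}^{(h)}:=\theta_D({t_i}_{\downarrow h})$ be the common value over that class, and put $C_{n+1}:=C_n\beta_{n+1}$.

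Three properties then have to be checked, and their verification is where the real content lies. First, \emph{nontriviality}: every context $\beta_{n+1}^{(h)}$ has size $\ge 1$ since each ${t_i}_{\downarrow h}$ has at least a root, which the truncation retains. Second, the \emph{comb inequality} $\alpha_{n+1}\triangleleft\beta_{n+1}$: by construction all holes of $\beta_{n+1}^{(h)}$ lie at depth $D+1$, whereas $\dom(\alpha_{n+1}^{(h)})$ contains only positions of depth $\le D$, so the two are disjoint at every hole $h$; this is exactly the tree analogue of $|\alpha_i|\le|\beta_i|$ from the word case. Third, \emph{the recursion must never stall}, i.e.\ $C_{n+1}$ still has a hole, for otherwise the spine could not be continued. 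This is the crux, and it follows from distinctness together with infinitude: the (infinitely many, pairwise distinct) trees indexed by $I_{n+1}$ all share $\beta_{n+1}$, so if $\beta_{n+1}$ had no hole they would all equal $C_{n+1}$ and hence coincide, a contradiction. Thus $H_{n+1}\neq\emptyset$ and the construction proceeds forever.

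Passing to the limit, $(v_i)_{i\ge 1}$ is a subsequence of $\bm t$ (the teeth indices $j_1<j_2<\cdots$ increase since each is drawn from an infinite $I_n$), and $(\bm\alpha,\bm\beta)$ generates it as a comb: $v_i=\beta_1\cdots\beta_{i-1}\alpha_i$ with each $\beta_i$ a nontrivial context forest and $\alpha_i\triangleleft\beta_i$. The main obstacle I anticipate is organizing the third point across \emph{all} open holes simultaneously---ensuring the spine grows strictly past the current tooth at every hole that stays open, while never closing off the whole fringe and while retaining an infinite subfamily. Using the single finite invariant ``depth-$D$ truncation type of the entire content forest'' is what sidesteps any delicate hole-by-hole argument: finiteness of the ranked alphabet lets one pigeonhole step handle all holes at once.
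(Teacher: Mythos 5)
Your proof is correct, and it takes a genuinely different route from the paper's. The paper proves \cref{lem:tree-comb} by building the finitely branching infinite tree whose nodes are contexts (children obtained by plugging size-one contexts into every hole), invoking K\H{o}nig's Lemma on the subtree of ancestors of $T$ to get an infinite chain $s_0 \pre s_1 \pre \cdots$ of common prefixes, and only then extracting a subsequence $t_{k_i+1}$ with $t_{k_i+1} \triangleleft s_{k_{i+1}}$, using that the minimal hole depth of the $s_i$ grows. You instead interleave the two phases: at each stage you peel off the next tooth first, read off its depth $D$, and then extend the spine by a single pigeonhole step over the finitely many depth-$(D{+}1)$ truncation types of the remaining subtrees, which hands you $\alpha_{n+1} \triangleleft \beta_{n+1}$ by construction rather than by a posteriori subsequence extraction. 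The key verifications all go through: nontriviality of each $\beta_{n+1}^{(h)}$ is clear; the disjointness $\dom(\alpha_{n+1}^{(h)}) \cap \holes(\beta_{n+1}^{(h)}) = \emptyset$ follows since all holes sit at depth $D+1$; and your argument that $H_{n+1} \neq \emptyset$ (a hole-free $\beta_{n+1}$ would force infinitely many pairwise distinct trees to equal $C_{n+1}$) is exactly the point where distinctness and infinitude are used, playing the role of the paper's observation that the ancestor subtree is infinite. What each approach buys: the paper's argument is shorter and cleanly separates ``find an infinite chain of prefixes'' from ``thin out to satisfy $\triangleleft$''; yours avoids K\H{o}nig's Lemma in favor of an explicit finite pigeonhole at each step (finiteness of truncation types being where the finite ranked alphabet enters), grows the spine by a whole depth-$D$ slab at once so that the comb inequality is automatic, and is closer in spirit to the word-case proof of \cref{lem:comb} and to the subsequent pigeonhole/Ramsey refinements in \cref{lem:tree-run-comb}.
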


\begin{proof}
	It suffices to show that for any infinite set $T \subseteq \T_\Sigma$ there exists a comb over $T$.
	Consider the following finitely branching infinite tree
	whose nodes are contexts from $\C_\Sigma$.
	The root is the context $x$. The children of a context $s$ 
	are the contexts of the form $s[t_1, \dots, t_n]$ where each $t_i$ is a context of size one.
	Observe that all trees in $\T_\Sigma$ occur as nodes in the infinite tree.
	The set of all ancestors of trees in $T$ form an infinite subtree,
	which contains an infinite path $s_0 \pre s_1 \pre s_2 \pre \dots$ by K{\H o}nig's Lemma.
	For all $i \ge 1$ there exists a tree $t_i \in T$ which contains $s_{i-1}$ as a prefix.

	Since the minimal level of a hole in $s_i$ is strictly increasing,
	for every $i \ge 1$ there exists a $j \ge i$ with $t_i \triangleleft s_j$.
	Hence one can inductively construct indices $1 = k_1 < k_2 < \dots$ such that
	$t_{k_i+1} \triangleleft s_{k_{i+1}}$ for all $i \ge 1$.
	Then $(t_{k_i + 1})_{i \ge 1}$ is a comb where the generator $(\bm{\alpha},\bm{\beta})$ is defined such that
	$s_{k_{i+1}} = \beta_1 \dots \beta_{i}$ and $t_{k_i + 1} = \beta_1 \dots \beta_{i-1} \alpha_i$ for $i \ge 1$.
	The comb property $\alpha_i \triangleleft \beta_i$ follows from $t_{k_i+1} \triangleleft s_{k_{i+1}}$.
\end{proof}

\begin{figure}

\begin{tikzpicture}[xscale=1.2, yscale=-1]

\begin{scope}[shift={(0,0)}]
\node at (0,-0.5) {$t_1$};
\filldraw[tree] (0,0.02) -- (.3,.8) -- (-.3,.8) -- cycle;
\node[red] at (0,0.6) {$\alpha_1$};
\end{scope}

\begin{scope}[shift={(1.25,0)}]
\node at (0,-0.5) {$t_2$};
\filldraw[context] (0,0) -- (.5,1) -- (-.5,1) -- cycle;
\node[blue] at (0,0.6) {$\beta_1$};
\filldraw[tree] (.5,1.01) -- (.6,1.4) -- (-.6,1.4) -- (-.5,1.01) -- cycle;
\node[red] at (0,1.2) {$\alpha_2$};
\end{scope}

\begin{scope}[shift={(3,0)}]
\node at (0,-0.5) {$t_3$};
\node at (1.2,0.8) {$\dots$};
\filldraw[context] (0,0) -- (.5,1) -- (-.5,1) -- cycle;
\node[blue] at (0,0.6) {$\beta_1$};
\filldraw[context] (.5,1) -- (.8,1.6) -- (-.8,1.6) -- (-.5,1) -- cycle;
\node[blue] at (0,1.3) {$\beta_2$};
\filldraw[tree] (.8,1.61) -- (.85,1.9) -- (-.85,1.9) -- (-.8,1.61) -- cycle;
\node[red] at (0,1.77) {$\alpha_3$};
\end{scope}

\begin{scope}[shift={(5.4,0)}]
\node at (0,-0.5) {encoding};
\filldraw[context] (0,0) -- (.5,1) -- (-.5,1) -- cycle;
\filldraw[context] (.5,1) -- (.8,1.6) -- (-.8,1.6) -- (-.5,1) -- cycle;
\filldraw[context] (.8,1.6) -- (1.05,2.1) -- (-1.05,2.1) -- (-.8,1.6) -- cycle;
\filldraw[context,path fading=south] (1.15,2.3) -- (1.05,2.1) -- (-1.05,2.1) -- (-1.15,2.3);
\filldraw[tree] (0,0.02) -- (.3,.8) -- (-.3,.8) -- cycle;
\filldraw[tree] (.5,1.01) -- (.6,1.4) -- (-.6,1.4) -- (-.5,1.01) -- cycle;
\filldraw[tree] (.8,1.61) -- (.85,1.9) -- (-.85,1.9) -- (-.8,1.61) -- cycle;
\filldraw[tree,path fading=south] (1.1,2.3) -- (1.05,2.11) -- (-1.05,2.11) -- (-1.1,2.3);
\end{scope}

\end{tikzpicture}

\caption{An example tree comb and its encoding as an infinite tree.
In this example the generator satisfies $\dom(\alpha_i) \subseteq \dom(\beta_i)$
whereas general generators only satisfy $\alpha_i \triangleleft \beta_i$.}
\label{fig:tree-comb}

\end{figure}
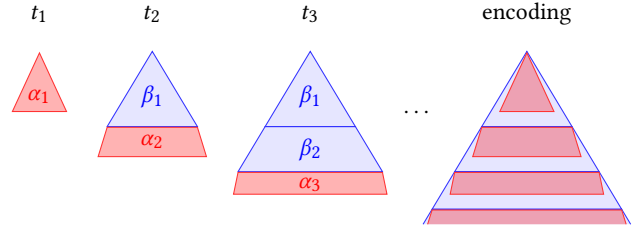

To define the encoding $\enc(\bm{\alpha},\bm{\beta})$ of a comb generator we need a few more definitions.
For a tree $t$ and context $s$ with $t \triangleleft s$ we define the convolution $t \otimes s$ as before but every $(\bot,x)$ is replaced by $x$.
That is, $t \otimes s$ is again a context.
We extend the convolution in a natural way to forests and context forests of the same width.
If $\tau = (c_1, \dots, c_n)$ is a context forest let $\bar \tau$ be obtained from $\tau$
by attaching a new $\#$-labeled root to each of the $n$ contexts $c_i$.
We can now define the \emph{encoding} of a comb with generator $(\bm{\alpha},\bm{\beta})$ as the infinite tree
\[\enc(\bm{\alpha},\bm{\beta}) := (\alpha_1 \otimes \beta_1) (\overline{\alpha_2 \otimes \beta_2})(\overline{\alpha_3 \otimes \beta_3}) \dots\]
over the ranked alphabet $\Omega := \Sigma_\bot^2 \cup \{\#\}$.
See \cref{fig:tree-comb} for an illustration of the encoding.
Here, the forests $\alpha_i$ are colored red and the context forests $\beta_i$ are colored blue.
It is not hard to see that the set $\Enc_\Sigma$ of all comb encodings is a regular language of infinite trees.

\subsection{Arbitrary relations}\label{sec:non-transitive}

An \emph{alternating tree automaton} (ATA) over the ranked alphabet $\Sigma$ is a tuple $\A = (Q,\Sigma,\delta, q_0)$ where $Q$ is a set of states, $q_0 \in Q$ is an initial state, and $\delta \colon Q \times \Sigma \to \B^+(Q \times \N)$ is a transition function with $\delta(q,a) \in \B^+(Q \times \{1,\dots,\rk(a)\})$ for all $q \in Q,~a \in \Sigma$.
Here, $\B^+(Q \times \N)$ denotes the set of positive propositional formulas over the set of variables $Q \times \N$.
For a set $S$ of variables and formula $\varphi$ we denote by $S \models \varphi$ that if the variables in $S$ are set to true and the variables not in $S$ are set to false, then $\varphi$ is satisfied.

We will use a nonstandard definition of runs of ATAs.
Firstly, we consider runs on both trees and contexts.
Secondly, each node in the run also carries the labels of its children,
with the purpose of predetermining the states in the context holes.
A \emph{run} of $\A$ on a nontrivial ranked context $t \in \C_\Sigma$ is a context $\rho$ over the alphabet
$\N^* \times Q \times 2^{Q \times \N}$ such that 
$\rho(\varepsilon) = (\varepsilon, q_0, S)$ for some $S \subseteq Q \times \N$ and 
for each node $u \in \nodes(\rho)$ with $\rho(u) = (w,q,S)$ and $S = \{(q_1,c_1),\dots,(q_r,c_r)\}$ we have that 
$S \models \delta(q,t(w))$ and $u$ has $r \ge 0$ children such that 
$\rho(ui) = (wc_i, q_i, S_i)$ for some $S_i \subseteq Q \times \N$ if $wc_i \in \nodes(t)$,
and $\rho(ui) = x$, otherwise.
Note that an NTA can be seen as a special ATA where for all $q \in Q$ and $a \in \Sigma$, the transition formula
$\delta(q,a)$ is a disjunction of conjunctions
$\bigwedge_{i=1}^{\rk(a)} (q_i,i)$.

We define a \emph{decomposition} of a context $t$ as $t = \tau_1 \dots \tau_n$ where the $\tau_i$ are context forests.
For a generator $(\bm{\alpha},\bm{\beta})$ of a comb $\bm{s}$ we define the $(\bm{\alpha},\bm{\beta})$-decomposition of $s_i \otimes s_j$ as in the word case.
We say that a decomposition of a run $\rho = \rho_1 \dots \rho_n$ of an ATA is \emph{compatible} with a decomposition $t = \tau_1 \dots \tau_n$ of a context if $\rho_1 \dots \rho_i$ is a run on $\tau_1 \dots \tau_i$ for all $i \in [1,n]$.
Note that the above definition of a run ensures that $\rho_1 \dots \rho_i$ already determines the first two components of the root labels of $\rho_{i+1}$ for all $i < n$.

We say that a generator $(\bm{\alpha},\bm{\beta})$ of a comb $\bm{s}$ is \emph{coarser} than a generator $(\bm{\gamma},\bm{\delta})$ of a comb $\bm{t}$
if there exist indices $k_1 < k_2 < \dots$ such that
$s_i = t_{k_i}$ and
$\beta_1 \dots \beta_i = \delta_1 \dots \delta_{k_i}$
for all $i \ge 1$.
In this case we also say that $(\bm{\alpha},\bm{\beta})$ is the coarsening of $(\bm{\gamma},\bm{\delta})$ defined by the subsequence $\bm{s}$ of $\bm{t}$.

\begin{lemma}\label{lem:tree-run-comb}
Let $\bm{t}$ be a comb generated by $(\bm{\gamma},\bm{\delta})$ that forms an infinite clique in a tree-regular relation $R \subseteq \T_\Sigma \times \T_\Sigma$ given as an ATA $\A$. 
There exist a coarsening $(\bm{\alpha},\bm{\beta})$ of $(\bm{\gamma},\bm{\delta})$ that generates a comb $\bm{s}$, 
runs $\rho(s_i,s_j)$ of $\A$ on $s_i \otimes s_j$, and context forests $\kappa_i, \lambda_i, \mu_{i,j}, \nu_{i,j}$ such that
\[\rho(s_i,s_j) = \kappa_1 \dots \kappa_{i-1} \lambda_i \mu_{i,i+1} \dots \mu_{i,j-1} \nu_{i,j}\]
is a decomposition compatible with
the $(\bm{\alpha},\bm{\beta})$-decomposition of $s_i \otimes s_j$
for all $i < j$.
\end{lemma}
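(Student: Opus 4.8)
The plan is to transcribe the column-wise induction from the proof of \cref{lem:comb-of-combs} into the tree setting, replacing runs of an NFA on words by runs of the ATA $\A$ on context forests and replacing ``common prefix of runs'' by ``common first piece'' of a compatible run decomposition. After relabelling the given generator $(\bm\gamma,\bm\delta)$ as $(\bm\alpha,\bm\beta)$ and fixing runs $\rho(t_i,t_j)$ of $\A$ on $t_i \otimes t_j$, I first record the shape of the $(\bm\alpha,\bm\beta)$-decomposition of $s_i \otimes s_j$: as in \cref{eq:decomposition} it is the sequence of context forests $\beta_1 \otimes \beta_1,\dots,\beta_{i-1}\otimes\beta_{i-1}$, then $\alpha_i \otimes \beta_i$, then $\varepsilon \otimes \beta_{i+1},\dots,\varepsilon\otimes\beta_{j-1}$, and finally $\varepsilon \otimes \alpha_j$, where $\varepsilon$ stands for the empty forest that vacates the first coordinate once $s_i$ has terminated. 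Thus the pieces I am after are runs on \emph{fixed} context forests: $\kappa_\ell$ on $\beta_\ell\otimes\beta_\ell$, $\lambda_i$ on $\alpha_i\otimes\beta_i$, $\mu_{i,\ell}$ on $\varepsilon\otimes\beta_\ell$, and $\nu_{i,j}$ on $\varepsilon\otimes\alpha_j$.

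I maintain, after $n-1$ completed stages, the invariant that for the current subsequence and coarsening
\[
\rho(t_i,t_j) = \kappa_1\dots\kappa_{i-1}\lambda_i\mu_{i,i+1}\dots\mu_{i,n-1}\,\tau(t_i,t_j) \quad (1 \le i \le n-1 < j)
\]
and $\rho(t_{i'},t_{j'}) = \kappa_1\dots\kappa_{n-1}\,\sigma(t_{i'},t_{j'})$ for $n-1<i'<j'$, where moreover every intermediate subsequence starts with $t_1,\dots,t_n$, so that each coarsening merges only context forests of index $>n$ and leaves $\beta_1,\dots,\beta_n$ and the already-fixed pieces untouched. In stage $n$ I set $\nu_{i,n}:=\tau(t_i,t_n)$ for $i<n$ and then perform three extractions, each followed by passing to the chosen subsequence and its induced coarsening: successively for $i=1,\dots,n-1$, the pigeonhole principle applied to the finitely many runs on the fixed context forest $\varepsilon\otimes\beta_n$ gives a subsequence on which all $\tau(t_i,t_j)$ with $j>n$ share a common first piece $\mu_{i,n}$; the same principle gives a common first piece $\lambda_n$, a run on $\alpha_n\otimes\beta_n$, of the runs $\sigma(t_n,t_j)$ with $j>n$; and Ramsey's theorem applied to the finite colouring that sends a pair $n<i<j$ to the first piece of $\sigma(t_i,t_j)$ on $\beta_n\otimes\beta_n$ gives a homogeneous subsequence with a common $\kappa_n$. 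Taking the limit (the diagonal of the nested subsequences) produces the comb $\bm s$, its generator $(\bm\alpha,\bm\beta)$, and runs with the claimed compatible decomposition.

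The one point that genuinely differs from the word case, and where care is needed, is to justify that these colourings have finite range and that the extracted pieces really do compose into runs; this is exactly what the paper's tailored ATA run definition secures. A run on a finite context forest is a finite object, and for a fixed context forest there are only finitely many such runs (a finite product, over its components, of the finitely many per-component runs), so all three colourings are finite and pigeonhole and Ramsey apply verbatim. Furthermore, each run node records in its set $S\subseteq Q\times\N$ the states it hands to the holes, so the holes of the already-fixed prefix $\kappa_1\dots\kappa_{i-1}\lambda_i\mu_{i,i+1}\dots\mu_{i,n-1}$ predetermine the root states of the next piece; this makes ``common first piece'' well-defined and ensures that reconcatenating the pieces again yields a valid run compatible with the $(\bm\alpha,\bm\beta)$-decomposition. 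The main obstacle is thus bookkeeping rather than a new idea: one must check that, before a context forest $\varepsilon\otimes\beta_n$, $\alpha_n\otimes\beta_n$, or $\beta_n\otimes\beta_n$ is used as the domain of a colouring, the invariant has already frozen the relevant $\beta_n$, so that the finitely-many-runs argument and the diagonal limit are both valid.
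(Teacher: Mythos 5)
Your proposal is correct and follows essentially the same route as the paper: the paper's own proof is a short remark that transcribes the column-wise pigeonhole/Ramsey induction of \cref{lem:comb-of-combs} to trees, emphasizing exactly the two points you isolate — that runs on a fixed finite context (forest) form a finite set so the colourings have finite range, and that since runs on forests are not defined one must work with the whole run $\rho(t_i,t_j)$ and extend its uniquely determined common prefix, which is well-defined because the sets $S \subseteq Q \times \N$ in the run labels predetermine the states at the holes. Your write-up is simply a more explicit version of the argument the paper sketches.
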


\begin{proof}
The proof is similar to the proof of \cref{lem:comb-of-combs} in the word case.
We emphasize that the pigeonhole principle and Ramsey's theorem can be applied as in the word case
since the unique prefixes of the runs of the ATA that are runs on a given context have bounded size.
Further note that since a run on a forest is not defined, instead of considering only the suffixes $\tau(t_i,t_j)$ and $\sigma(t_i,t_j)$ in the inductive step,
we have to consider the whole run $\rho(t_i,t_j)$ and extend the common prefix that is already fixed.
\end{proof}

An \emph{alternating Büchi tree automaton} (ABTA) over the ranked alphabet $\Sigma$ is a tuple $\A = (Q,\Sigma,\delta,q_0,F)$ where $Q$, $\Sigma$, $\delta$, and $q_0$ are as in the definition of an ATA and $F \subseteq Q$ is a set of final states.
A \emph{run} of $\A$ on a finite or infinite tree $t \in \T_\Sigma^\infty$ is a tree $\rho \in \U_{\N^* \times Q}^\infty$ such that 
$\rho(\varepsilon) = (\varepsilon, q_0)$ and 
for all $u \in \dom(\rho)$ with $r$ children and $\rho(u) = (w,q)$ there is a satisfying assignment $S = \{(q_1,c_1),\dots,(q_r,c_r)\} \models \delta(q,t(w))$
of pairwise distinct $(q_i,c_i)$ such that $\rho(ui) = (wc_i, q_i)$ for all $i \in [1,r]$.
A run $\rho$ is \emph{accepting} if every infinite path of $\rho$ contains infinitely many nodes with labels in $\N^* \times F$.

\begin{restatable}{proposition}{treebuchi}\label{lem:tree-buchi}
Given an ATA $\A$ for a tree-regular relation $R \subseteq (\T_\Sigma)^2$,
one can construct in polynomial time an ABTA $\B$ over the ranked alphabet $\Omega = \Sigma_\bot^2 \cup \{\#\}$ such that
we have:
\begin{itemize}
\item If $\bm{t}$ is an infinite clique in $R$,
then $\B$ accepts an encoding of a comb $\bm{s}$ which is a subsequence of $\bm{t}$.
\item If $\B$ accepts $t \in \T_\Omega^\infty$, then $t$ is an encoding of a comb $\bm{t}$ that is an infinite clique in $R$.
\end{itemize}
\end{restatable}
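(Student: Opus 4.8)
The plan is to construct $\B$ as the tree analogue of the Büchi automaton of \Cref{prop:combs-buechi}, letting it simulate the comb-of-combs run decomposition of \Cref{lem:tree-run-comb} directly by means of alternation. The decisive difference from the word case is that there is no tree version of \Cref{lem:easy-combs}: the runs $\mu_{i,j}$ genuinely depend on $i$, so the horizontal combs cannot be collapsed into a constant-width dag. Instead, each index $i$ spawns its own horizontal comb, and the resulting unbounded number of parallel combs is exactly what the universal branching of an ABTA supplies.

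First I would explain how $\B$ recovers the pieces of the convolutions $s_i \otimes s_j$ from the encoding. The encoding records, at each level $\ell$ (delimited by the $\#$-markers introduced by $\overline{\,\cdot\,}$), only the convolution $\alpha_\ell \otimes \beta_\ell$, whereas the $(\bm{\alpha},\bm{\beta})$-decomposition of $s_i \otimes s_j$ needs the four kinds of factors $\beta_\ell \otimes \beta_\ell$, $\alpha_i \otimes \beta_i$, $\varepsilon \otimes \beta_\ell$ and $\varepsilon \otimes \alpha_j$ appearing in \cref{fig:comb-of-combs}. Each of these is obtained locally from $\alpha_\ell \otimes \beta_\ell$ by a fixed relabeling: duplicate the $\beta_\ell$-coordinate for $\beta_\ell \otimes \beta_\ell$, keep it verbatim for $\alpha_i \otimes \beta_i$, overwrite the first coordinate by $\bot$ for $\varepsilon \otimes \beta_\ell$, and read the $\alpha_\ell$-coordinate as the bottom coordinate for $\varepsilon \otimes \alpha_j$. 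I would hard-wire these relabelings into the transition function of $\B$.

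Then I would take the states of $\B$ to be pairs consisting of a mode in $\{\kappa,\lambda,\mu,\nu\}$ and a single state of $\A$, mirroring the runs of the same names. Along the vertical spine $\B$ simulates $\kappa_\ell$ on the reconstructed $\beta_\ell \otimes \beta_\ell$ and, at the $\#$-marker of each level, universally forks into (i) a $\kappa$-thread that descends to level $\ell+1$ and (ii) a $\lambda$-thread, starting from the same $\A$-state, that simulates $\lambda_\ell$ on $\alpha_\ell \otimes \beta_\ell$; at the holes of $\beta_\ell$ the latter becomes a horizontal $\mu$-comb which, at every deeper level $j$, again forks universally into a $\mu$-thread simulating $\mu_{\ell,j}$ on $\varepsilon \otimes \beta_j$ and a $\nu$-thread simulating $\nu_{\ell,j}$ on $\varepsilon \otimes \alpha_j$ and terminating in an accepting leaf configuration of $\A$. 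Crucially, the content of a level is read simultaneously by the $\kappa$-thread, by the freshly forked $\lambda_\ell$-thread, and by all $\mu$-threads of smaller index, each under a different relabeling and in a different $\A$-state; this is precisely why nondeterminism does not suffice and alternation is needed, and why $\B$ needs to carry only one $\A$-state per thread, keeping it of polynomial size. In universal conjunction I would run a fixed polynomial-size component checking membership in $\Enc_\Sigma$, and I would let $F$ consist of the states entered upon reading a $\#$-marker. Every infinite path of a run of $\B$ is a vertical or a horizontal spine and hence crosses infinitely many level markers, so the Büchi condition certifies that the witnessed comb genuinely has infinitely many levels, while the $\nu$-threads are finite runs on the forests $\alpha_j$ that only have to reach acceptance of $\A$. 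Completeness then follows by feeding $\B$ the encoding of the comb and the runs $\kappa_i,\lambda_i,\mu_{i,j},\nu_{i,j}$ granted by \Cref{lem:tree-comb} and \Cref{lem:tree-run-comb} and letting them dictate the alternation choices; soundness follows by reading off, from an accepting run of $\B$ and for each $i<j$, the run $\kappa_1 \dots \kappa_{i-1}\lambda_i\mu_{i,i+1}\dots\mu_{i,j-1}\nu_{i,j}$ and checking it is an accepting $\A$-run on $s_i \otimes s_j$, with distinctness of the $s_i$ coming for free from the nontriviality of the $\beta_i$.

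The hard part will be soundness: verifying that a single accepting run of $\B$ can be disassembled into mutually consistent $\A$-runs $\rho(s_i,s_j)$ for all pairs at once. The state handed over at each hole of a context $\beta_\ell$ must agree between the $\kappa$-thread continuing the spine and the $\lambda$- and $\mu$-threads branching off, so that the reconstructed run glues correctly across level boundaries; keeping this hole-state bookkeeping coherent is exactly what the nonstandard ATA runs of the excerpt (whose node labels predetermine the states at the holes of a context) are designed to make possible. The remaining obligations — polynomial size and polynomial-time constructibility of $\B$ — are then immediate, since each state stores only a mode and one state of $\A$ and the transitions are fixed relabelings composed with the transition function of $\A$.
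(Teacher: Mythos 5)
Your construction is essentially identical to the paper's: the same state set (one $\A$-state paired with one of four modes for $\kappa,\lambda,\mu,\nu$), the same four local relabelings of $\alpha_\ell\otimes\beta_\ell$, the same universal forking at the $\#$-markers ($\kappa\to\kappa\wedge\lambda$ and $\lambda,\mu\to\mu\wedge\nu$), the same intersection with $\Enc_\Sigma$, and the same appeal to \cref{lem:tree-run-comb} for completeness and to disassembling the alternating run into the runs $\kappa_1\dots\kappa_{i-1}\lambda_i\mu_{i,i+1}\dots\mu_{i,j-1}\nu_{i,j}$ for soundness. The proposal is correct and matches the paper's proof.
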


\begin{proof}

Let $\A = (Q,\Sigma_\bot^2,\delta,q_0)$ be the ATA that recognizes $R$.
We construct an ABTA $\B$ over $\Omega$ which accepts precisely
all comb encodings $\enc(\bm{\alpha},\bm{\beta})$ with the properties from \cref{lem:tree-run-comb}.
The state set of $\B$ is $Q \times \{1,2,3,4\}$, representing four different modes. 
In the first mode it simulates $\kappa_j$ on $\beta_j \otimes \beta_j$, 
in the second mode it simulates $\lambda_j$ on $\alpha_j \otimes \beta_j$, 
in the third mode it simulates $\mu_{i,j}$ on $\varepsilon \otimes \beta_j$, 
and in the fourth mode it simulates $\nu_{i,j}$ on $\varepsilon \otimes \alpha_j$.
\Cref{fig:buchi-simulation} illustrates the simulation.

For all $q \in Q$ and $\svector{a}{b} \in \Sigma_\bot^2$ we set
\begin{align*}
\delta_\B((q,1),\svector{a}{b}) &:= \delta_1(q,\svector{b}{b}) &
\delta_\B((q,2),\svector{a}{b}) &:= \delta_2(q,\svector{a}{b}) \\
\delta_\B((q,3),\svector{a}{b}) &:= \delta_3(q,\svector{\bot}{b}) &
\delta_\B((q,4),\svector{a}{b}) &:= \delta_4(q,\svector{\bot}{a})
\end{align*} 
where $\delta_i(q,\sigma)$ is the formula $\delta(q,\sigma)$ where each variable $(p,c)$ is replaced by $((p,i),c)$.
At the holes of $\alpha_j \otimes \beta_j$ labeled with $\#$, i.e., the points where $\alpha_{j+1} \otimes \beta_{j+1}$ starts, the simulations in modes 1, 2, and 3 split up.
For all $q \in Q$ we define
\begin{align*}
\delta_\B((q,1),\#) &:= ((q,1),1) \wedge ((q,2),1) \\
\delta_\B((q,2),\#) := \delta_\B((q,3),\#) &:= ((q,3),1) \wedge ((q,4),1)
\end{align*}
Finally, we add a new initial state $q_0^\B$ which spawns simulations of $\A$ in mode 1 and 2,
i.e., for all $q \in Q$ and $\svector{a}{b} \in \Sigma_\bot^2$ we define
\[\delta_\B(q_0^\B,\svector{a}{b}) := \delta_1(q_0,\svector{b}{b}) \wedge \delta_2(q_0,\svector{a}{b}).\]
Finally, we intersect $L(\B)$ with the tree-regular language $\Enc_\Sigma$,
which concludes the proof.
\end{proof}

If $R$ is given by a D$\uparrow$TA we can even compute in polynomial time
a {\em nondeterministic} Büchi tree automaton (NBTA) for the representation of infinite cliques.
The proof idea is that the runs $\mu_{i,j}$ and $\nu_{i,j}$ in \cref{lem:tree-run-comb} only depend on $j$.

\begin{restatable}{proposition}{dettreebuchi}\label{lem:det-tree-buchi}
Given a D$\uparrow$TA $\A$ for a tree-regular relation $R \subseteq (\T_\Sigma)^2$,
one can construct in polynomial time an NBTA $\B$ over the ranked alphabet $\Omega = \Sigma_\bot^2 \cup \{\#\}$ such that
we have:
\begin{itemize}
\item If $\bm{t}$ is an infinite clique in $R$,
then $\B$ accepts an encoding of a comb $\bm{s}$ which is a subsequence of $\bm{t}$.
\item If $\B$ accepts $t \in \T_\Omega^\infty$, then $t$ is an encoding of a comb $\bm{t}$ that is an infinite clique in $R$.
\end{itemize}
\end{restatable}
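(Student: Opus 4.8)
The plan is to follow the same two-stage scheme as in the word case: first isolate a run decomposition in which the below-diagonal segments no longer depend on the row index, and then collapse the alternating construction of \cref{lem:tree-buchi} into a nondeterministic one. The single feature of D$\uparrow$TAs that I would exploit is that the bottom-up run on a tree (and, read as a state transformation, on a context) is \emph{unique}. As indicated right after the statement, this forces the runs $\mu_{i,j}$ and $\nu_{i,j}$ to depend only on the column $j$, and it is precisely this independence that lets us replace the universal branching of the ABTA by a single shared simulation.

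First I would reestablish, for bottom-up runs, the decomposition of \cref{lem:tree-run-comb}: starting from a comb (\cref{lem:tree-comb}) that forms an infinite clique, pass to a subsequence so that the shared-prefix segments $\kappa_l$ (a run on $\beta_l \otimes \beta_l$) and the diagonal segments $\lambda_i$ (on $\alpha_i \otimes \beta_i$) are common to all relevant pairs. As in the proof of \cref{lem:tree-run-comb}, this uses only the pigeonhole principle and Ramsey's theorem applied to the finitely many states the automaton can take at the relevant holes; switching to the bottom-up direction changes nothing essential here. The new point is that \emph{no} further Ramsey step is needed to make $\mu_{i,j}$ and $\nu_{i,j}$ independent of $i$: each of these segments reads a convolution whose first component is entirely $\bot$ (the tree $s_i$ is already exhausted below the diagonal), so its labels are determined by $\beta_j$, resp.\ $\alpha_j$, alone. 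Since $\nu_{i,j}$ reads the hole-free forest $\varepsilon \otimes \alpha_j$, its unique bottom-up run is completely determined by $\alpha_j$; propagating upward, each $\mu_{i,j}$ on $\varepsilon \otimes \beta_j$ is a determined state transformation, so the whole below-diagonal part is fixed by the columns and we may simply write $\mu_j,\nu_j$. This is the D$\uparrow$TA analogue of \cref{lem:easy-combs}, obtained for free from determinism rather than via an extra Ramsey argument.

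With $\mu_j,\nu_j$ in hand I would adapt the four-mode scheme of \cref{lem:tree-buchi}, where each mode now simulates the unique bottom-up run guessed top-down (deterministic, hence forced, on the $\bot$-padded $\mu_j,\nu_j$ parts). The only source of alternation in that construction is the universal branching that, at every diagonal node, spawns a fresh mode-3/mode-4 continuation for the new row. Because $\mu_j$ and $\nu_j$ no longer depend on the row, all these continuations coincide, so a single mode-3 simulation running down the spine, together with a single mode-4 closing branching into each $\alpha_j$-subtree, certifies every row at once; the only per-row obligation that remains is that each $\lambda_i$ connects the diagonal state at level $i$ into this shared continuation, which is plain nondeterminism, not alternation. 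The finite $\nu_j$-branches into the $\alpha_j$-subtrees enforce, column by column, that each pair closes in a final state of $\A$, while the Büchi condition on the infinite spine guarantees that the accepted tree encodes an infinite comb; intersecting $L(\B)$ with $\Enc_\Sigma$ then ensures it is a genuine comb encoding. The two correctness directions follow exactly as in \cref{lem:tree-buchi}.

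I expect the main obstacle to be the bookkeeping that reconciles the bottom-up determinism of $\A$ with the top-down reading of $\B$, and in particular a clean argument that the \emph{single} shared $\mu_j/\nu_j$ component certifies acceptance of \emph{all} pairs $s_i \otimes s_j$ simultaneously rather than one row at a time, as in the alternating construction. Getting the branching into the $\alpha_j$-subtrees and the placement of the Büchi condition right—so that a single nondeterministic run witnesses the entire clique—is where the $i$-independence has to be used carefully; everything else is a routine adaptation of \cref{lem:tree-buchi}.
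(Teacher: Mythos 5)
Your proposal is correct and follows essentially the same route as the paper's proof: apply the run-decomposition lemma (\cref{lem:tree-run-comb}) to the D$\uparrow$TA read top-down, observe that bottom-up determinism makes $\mu_{i,j}$ and $\nu_{i,j}$ depend only on $j$ with no further Ramsey step, and then collapse the four-mode alternating construction into an NBTA with state set $Q_\bot^4$ intersected with $\Enc_\Sigma$. No gaps.
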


We are ready to prove \cref{thm:tree-ramsey}.
If $\A$ is an ATA, we use the ABTA $\B$ from \Cref{lem:tree-buchi}
and transform it into an NBTA in exponential time \cite[Theorem 1.2]{muller1995simulating}.
If $\A$ is a D$\uparrow$TA $\A$ we use the NBTA $\B$ from \Cref{lem:det-tree-buchi}.
We then construct an NTA $\C$ over $\Sigma$ which accepts $t_1 \in \T_\Sigma$
if and only if the encoding of some comb $(t_i)_{i \ge 1}$ is accepted by $\B$.
The rest of the proof is analogous to the proof of \Cref{thm:word-ramsey},
see \cref{sec:app-buchi} for details.

\subsection{Transitive relations}

In this section we show that if we assume that $R$ is transitive, then the Ramsey quantifier can be evaluated in polynomial time (\cref{thm:transitive-ramsey}).

\begin{restatable}{proposition}{transitivebuchi}\label{lem:transitive-buchi}
Given an NTA $\A$ for a transitive tree-regular relation $R \subseteq (\T_\Sigma)^2$,
one can construct in polynomial time an NBTA $\B$ over the ranked alphabet $\Omega$ such that:
\begin{itemize}
\item If $\bm{t}$ is an infinite clique in $R$,
then $\B$ accepts an encoding of a comb $\bm{s}$ which is a subsequence of $\bm{t}$.
\item If $\B$ accepts $t \in \T_\Omega^\infty$, then $t$ is an encoding of a comb $\bm{t}$ that is an infinite clique in $R$.
\end{itemize}
\end{restatable}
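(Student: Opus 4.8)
The plan is to reuse the comb machinery already developed and to show that \emph{transitivity collapses the width of the run structure to a constant}, so that the alternating construction of \cref{lem:tree-buchi} can be replaced by a plain product.

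First I would invoke \cref{lem:tree-comb} and \cref{lem:tree-run-comb}: if $R$ has an infinite clique then it has one that is a comb $\bm{s}$ generated by some $(\bm{\alpha},\bm{\beta})$, together with runs $\rho(s_i,s_j)=\kappa_1\dots\kappa_{i-1}\lambda_i\mu_{i,i+1}\dots\mu_{i,j-1}\nu_{i,j}$ compatible with the $(\bm{\alpha},\bm{\beta})$-decomposition. The crucial observation is that, since $R$ is transitive, an infinite clique is equivalent to an infinite \emph{chain}: it suffices that $(s_i,s_{i+1})\in R$ for all $i$, and then $(s_i,s_j)\in R$ for all $i<j$ follows by transitivity. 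Restricting \cref{lem:tree-run-comb} to consecutive pairs $j=i+1$ makes the $\mu$-blocks disappear, leaving $\rho(s_i,s_{i+1})=\kappa_1\dots\kappa_{i-1}\lambda_i\nu_{i,i+1}$.

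The payoff is a width bound. In the general construction (\cref{lem:tree-buchi}, \cref{fig:buchi-simulation}) alternation is unavoidable because each block boundary spawns a fresh head ($\lambda$) that then persists as a $\mu$-run through all later blocks, so after $m$ blocks roughly $m$ branches are simultaneously alive. Without $\mu$-runs this no longer happens: at every node of the encoding at most three runs are active --- the spine ($\kappa$, read on $\svector{b}{b}$), the current head ($\lambda_i$, read on $\svector{a}{b}$), and one tail ($\nu_{i-1,i}$, read on $\svector{\bot}{a}$, i.e.\ the first encoding coordinate placed into the second convolution coordinate) --- and the tail always terminates inside its block because it reads the finite forest $\alpha_i$. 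Hence I would build an NBTA $\B$ whose states are triples in $(Q\cup\{\bot\})\times(Q\cup\{\bot\})\times(Q\cup\{\bot\})$ (plus a little mode bookkeeping) that simulates these three runs of $\A$ as a single product run. At each $\#$-node the three slots \emph{rotate}: the old head becomes the new tail ($\lambda_i\to\nu_{i,i+1}$, now reading $\alpha_{i+1}$ in its second coordinate), a new head is spawned from the current spine state, and the previous tail --- which has just finished reading $\alpha_i$ --- is required to have completed a valid run of $\A$ at the leaves. Because all three components are bundled into one product state that is sent to each child, no conjunctions are introduced, so $\B$ is genuinely nondeterministic and of size polynomial in $\A$. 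Intersecting $L(\B)$ with $\Enc_\Sigma$ and imposing a Büchi condition that forces infinitely many completed tails (hence an infinite comb) finishes the construction. For soundness I would read the three tracks of an accepting run of $\B$ backwards: the spine supplies a common prefix run $\kappa_1\dots\kappa_{i-1}$ on the $\beta$-diagonal, the head supplies $\lambda_i$, and the tail supplies $\nu_{i,i+1}$, so $\kappa_1\dots\kappa_{i-1}\lambda_i\nu_{i,i+1}$ is a genuine accepting run of $\A$ on $s_i\otimes s_{i+1}$; thus $(s_i,s_{i+1})\in R$ for every $i$, and transitivity upgrades this to an infinite clique. Note that the spine is genuinely a third, separate track: the head reads block $i$ as $\alpha_i\otimes\beta_i$ rather than on the diagonal, so it cannot serve as the shared prefix for later pairs.

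I expect the main obstacle to be the bookkeeping at the block boundaries: I must check that the single product transition at a $\#$-node simultaneously (i) continues the spine, (ii) turns the old head into a tail while \emph{switching the coordinate} in which the fresh $\alpha_{i+1}$ is read, (iii) launches a new head from the spine state, and (iv) certifies that the expiring tail has reached accepting leaves --- and that the Büchi condition on $\B$ exactly captures ``the comb is infinite and every consecutive pair is verified''. Getting the coordinate juggling and the acceptance condition right, while keeping the state space a bounded product, is the delicate part; transitivity is what guarantees that three slots always suffice and is indispensable in the soundness step.
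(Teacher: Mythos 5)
Your proposal is correct and follows essentially the same route as the paper: transitivity lets you verify only consecutive pairs, which eliminates the mode-3 ($\mu$) runs from the construction of \cref{lem:tree-buchi}, leaving at most three live runs (spine $\kappa$, head $\lambda_i$, tail $\nu_{i,i+1}$) per node of the encoding, so the alternating automaton collapses to a polynomial-size nondeterministic product intersected with $\Enc_\Sigma$. The paper phrases this last step as a powerset construction restricted to subsets of size at most three rather than an explicit triple product, but that is the same construction.
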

For the proof we view $\A$ as an ATA
and construct the ABTA $\B$ as in the proof of \Cref{lem:tree-buchi}
which accepts precisely all comb encodings $\enc(\bm{\alpha},\bm{\beta})$ with the properties from \cref{lem:tree-run-comb}.
Since $R$ is transitive, ensuring that $\bm{v}$ is a clique merely requires to check that $(v_i,v_{i+1})\in R$ for each~$i$.
Therefore, the set of runs we need to detect on an encoding $\enc(\bm{\alpha},\bm{\beta})$ are as in \cref{fig:buchi-simulation}, but without all
the runs $\mu_{i,j}$. In the resulting comb of combs, all rows have finite length.
In terms of the constructed ABTA, this means we can omit all states in mode 3.
Then any run contains for each node $v$ of the input tree at most three run nodes referring to $v$.
Thus, we can apply a standard powerset construction to convert $\C$ into an equivalent NBTA of polynomial size
by restricting to subsets of states of size at most three.
We refer to \cref{sec:app-buchi} for more details.
Using the polynomially-sized NBTA $\B$
we can prove \cref{thm:transitive-ramsey} analogously to \Cref{thm:tree-ramsey}.

\subsection{Co-transitive relations}

Recall that a binary relation $R \subseteq A \times A$ is {\em co-transitive}
if its complement $\bar R = (A \times A) \setminus R$ is transitive.
Next we show \cref{thm:cotransitive}.

A context $\beta$ is called {\em monadic} if it has exactly one hole.
We will show that, if a co-transitive relation has an infinite clique, then there exists one which is a comb
generated by a {\em monadic generator} $(\bm{\alpha}, \bm{\beta})$
in which all forests $\beta_i$ are monadic contexts.
This also implies that all $\alpha_i$ are trees.

\begin{lemma}
	\label{lem:cotrans-step}
	If a co-transitive tree-regular relation $R \subseteq \T_\Sigma \times \T_\Sigma$ has an infinite clique
	over a tree-regular language $L \subseteq \T_\Sigma$,
	then there exists an infinite clique $\bm{t}$ of $R$ over $L$
	and a nontrivial monadic context $\beta$ with $t_1 \triangleleft \beta$
	and $\beta \pref t_i$ for all $i \ge 2$.
\end{lemma}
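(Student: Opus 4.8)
The plan is to \emph{build a fresh} infinite clique with a common monadic prefix rather than to thin the given one. This is genuinely necessary: for the (co-transitive) inequality relation, the complete binary trees of increasing depth form an infinite clique, yet no subsequence of them shares a monadic context prefix, since two complete binary trees of different depth already disagree on whether a node at a fixed small depth is a leaf. So I would first apply the combs lemma (\cref{lem:tree-comb}) to extract a comb clique $\bm t$ over $L$ with generator $(\bm\alpha,\bm\beta)$, and then reformulate the situation: an infinite $R$-clique over $L$ is an infinite family of elements of $L$ that are pairwise \emph{outside} the transitive relation $\bar R$. The task becomes to realize such a family along a \emph{single} branch.

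Next I would exploit transitivity of $\bar R$ to linearize the clique. Colour each pair $\{i,j\}$ with $i<j$ by whether $(t_j,t_i)\in\bar R$ and apply Ramsey's theorem to pass to an infinite homogeneous subclique (still a comb over $L$). Either both orientations avoid $\bar R$, so that $R$ is complete on the subclique, or, combining $(t_i,t_j)\notin\bar R$ for $i<j$ with transitivity of $\bar R$, the relation $R$ restricted to the subclique is exactly the strict linear order given by the index order. I would treat the linear-order case as the main one; the complete case only supplies additional accepted convolutions and is handled the same way.

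I would then \emph{pump} a monadic context out of the automaton for $R$, using the block structure of the convolutions from \cref{lem:tree-run-comb}: each $s_i\otimes s_j$ decomposes into blocks of the forms $\beta\otimes\beta$, $\alpha\otimes\beta$, $\varepsilon\otimes\beta$ and $\varepsilon\otimes\alpha$. If, along a single branch of the skeleton, an accepting run repeats a state both on a diagonal block $q\xrightarrow{\gamma\otimes\gamma}q$ and on a one-sided block $q'\xrightarrow{\varepsilon\otimes\gamma}q'$ for a common \emph{nontrivial monadic} context $\gamma$ (located by the pigeonhole principle), then iterating $\gamma$ from a suitable base produces a monadic comb $\bm s$ over $L$ for which these loops certify $(s_i,s_j)\in R$ for all $i<j$. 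Setting $\beta:=\beta_1=\gamma$, the comb identities give $t_1=\alpha_1\triangleleft\beta$ and $\beta\pref s_i$ for all $i\ge 2$, which is exactly the claim; membership of each $s_i$ in $L$ is secured by simultaneously tracking the state of the $L$-automaton in the same pigeonhole step.

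The crux—and the one place where co-transitivity is indispensable—is to guarantee that such monadic loops exist at all, i.e.\ that one may pass to a clique whose divergence is confined to a single branch. A priori a tree-regular relation can have only \emph{fat} cliques, whose accepting runs branch at every level (this is precisely the $\EXP$-hard behaviour over general relations, and it is what the complete binary trees realize), and then no single monadic $\gamma$ loops along one branch. I expect the main work to be a compactness argument on the comb showing that the linear (resp.\ complete) structure extracted from the \emph{transitive} complement $\bar R$ forbids this: the order on the clique cannot be spread over incomparable branches, so after re-selection the relevant accepting run is thin and admits the two monadic loops above. Everything else—the Ramsey reduction, the verification that the pumped sequence stays over $L$, and the placement of $t_1$ via the comb property $\alpha_1\triangleleft\beta_1$—is routine pigeonhole bookkeeping.
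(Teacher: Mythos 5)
Your setup is pointed in the right direction---you correctly observe that one must \emph{build} a new clique rather than thin the given one, and you correctly isolate the crux: why can the pairwise differences be confined to a single branch? But that crux is exactly where the proposal stops. You write that you ``expect the main work to be a compactness argument on the comb showing that the order on the clique cannot be spread over incomparable branches, so after re-selection the relevant accepting run is thin.'' No such argument is given, and as stated it cannot succeed: your own example of complete binary trees under inequality is one where the differences \emph{are} spread over incomparable positions in every subsequence, so no re-selection makes the runs thin, and the monadic loops $q\xrightarrow{\gamma\otimes\gamma}q$, $q'\xrightarrow{\varepsilon\otimes\gamma}q'$ you want to pump need not exist in any run on the given clique. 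The Ramsey linearization of $R$ versus $\bar R$ on pairs is harmless but does not contribute to producing the monadic structure. So the missing piece is not bookkeeping; it is the entire mechanism of the lemma.

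The paper closes this gap with an interpolation argument on \emph{freshly built hybrid trees}, not a thinning or pumping argument. Working with D$\uparrow$TAs for $R$ and $L$, one first uses the pigeonhole principle so that all $t_j$ ($j\ge 2$) have the same overlap context $c$ with $t_1$ (where $\nodes(c)=\nodes(t_1)\cap\nodes(t_j)$ and $c\pref t_j$), writes $t_j=c[t_j^1,\dots,t_j^n]$, and applies pigeonhole/Ramsey to the states reached at the holes $v^1,\dots,v^n$ so that all relevant runs hit fixed states there; this yields the mix-and-match property that $c[t_{j_1}^1,\dots,t_{j_n}^n]\in L$ and is $R$-related to $t_1$ for \emph{arbitrary} $j_1,\dots,j_n\ge 2$. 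Then one interpolates between $t_2$ and $t_3$ by the chain $s_k=c[t_3^1,\dots,t_3^k,t_2^{k+1},\dots,t_2^n]$: since $(s_0,s_n)=(t_2,t_3)\in R$ and $\bar R$ is transitive, some single step satisfies $(s_{k-1},s_k)\in R$, and that step changes only the subtree at one hole $v^k$. Freezing the other holes gives the monadic context $\beta$, and the fixed state $q^k$ at $v^k$ transfers this one witness to all pairs $(\beta[t_i^k],\beta[t_j^k])$, producing the desired clique with $t_1\triangleleft\beta$ and $\beta\pref\beta[t_i^k]$. Co-transitivity is thus used to locate one good swap position in a chain of constructed trees, not to force runs on the original clique to be confined to one branch.
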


\begin{proof}
Let $\A = (Q,\Sigma_\bot^2,\Delta,F)$ and $\B = (P,\Sigma,\Lambda,G)$
be D$\uparrow$TAs for $R$ and $L$, respectively.
Suppose that $\bm{t}$ is an infinite clique in $R$ over $L$.
For $j \ge 2$ let $c_j$ be the unique context with
$\nodes(c_j) = \nodes(t_1) \cap \nodes(t_j)$ and $c_j \pref t_j$.
Notice that $c_j$ is nontrivial since $t_1$ and $t_j$ contain the root.
Furthermore we have $t_1 \triangleleft c_j$ since any hole $u \in \holes(c_j)$
is contained in $\nodes(t_j) \setminus \nodes(c_j)$.
Since there are only finitely many such choices for $c_j$, by reducing $\bm{t}$ to a subsequence which starts with $t_1$
we can assume that $c_j = c$ for all $j \ge 2$ for some context $c$.
Suppose that $v^1, \dots, v^n$ are the holes of $c$ in lexicographical order,
and $t_j = c[t_j^1, \dots, t_j^n]$ for some trees $t_j^k$.
Again, by reducing $\bm{t}$ to a subsequence starting with $t_1$,
we can further assume that $(t_j^k)_{j \ge 2}$ is an infinite sequence of pairwise distinct trees for each $k \in [1,n]$.
Indeed, if $(t_j^k)_{j \ge 2}$ contains only finitely many distinct trees for some $k$,
then some tree $t$ must occur infinitely often in the sequence $(t_j^k)_{j \ge 2}$,
say $t = t_{\ell_2}^k = t_{\ell_3}^k = \cdots$
for some $1 = \ell_1 < \ell_2 < \cdots$.
We then extend $c$ by plugging $t$ into the hole $v^k$ and we replace $\bm{t}$ by $(t_{\ell_j})_{j \ge 1}$.
Clearly, duplicates in a sequence $(t_j^k)_{j \ge 2}$ that contains infinitely many distinct elements can also be removed by restricting to a subsequence.

For all $i < j$ we have
\[
	t_i \otimes t_j =
	\begin{cases}
	(t_1 \otimes c)[\varepsilon \otimes t_j^1, \dots, \varepsilon \otimes t_j^n], & 1 = i < j, \\
	(c \otimes c)[t_i^1 \otimes t_j^1, \dots, t_i^n \otimes t_j^n], & 1 < i < j,
	\end{cases}
\]
where $t_1 \otimes c$ and $c \otimes c$ are naturally viewed as contexts with $n$ holes.
For $j \ge 2$ consider the accepting run $\rho_j$ of $\A$ on $t_1 \otimes t_j$
and the accepting run $\pi_j$ of $\B$ on $t_j$,
and color each index $j$ by the tuple $(\rho_j(v^1), \dots, \rho_j(v^n), \pi_j(v^1), \dots, \pi_j(v^n))$.
By the pigeonhole principle we can pick numbers $1 = \ell_1 < \ell_2 < \dots$
such that $\{\ell_2, \ell_3, \dots \}$ is monochromatic.
We then replace $\bm{t}$ by $(t_{\ell_i})_{i \ge 1}$.
Hence, the accepting runs of $\A$ on $t_1 \otimes t_j$ ($j \ge 2$) visit the same states $r^1, \dots, r^n$ in the nodes $v^1, \dots, v^n$.
Similarly, the accepting runs of $\B$ on the trees $t_j$ visit the same states $p^1, \dots, p^n$ in the nodes $v^1, \dots, v^n$.
Therefore
\begin{align*}
	(t_1, c[t_{j_1}^1, \dots, t_{j_n}^n]) &\in R & &\text{for any } j_1, \dots, j_n \ge 2~~\text{and} \\
	c[t_{j_1}^1, \dots, t_{j_n}^n] &\in L  & &\text{for any } j_1, \dots, j_n \ge 2.
\end{align*}
For $1 < i < j$ consider an accepting run of $\A$ on $t_i \otimes t_j$
and let $q_{i,j}^k$ be the state reached in node $v^k$.
By Ramsey's theorem we can assume that there exist states $q^1, \dots, q^n \in Q$
such that $q_{i,j}^k = q^k$ for all $1 < i < j$
(again, after replacing $\bm{t}$ by a subsequence starting with $t_1$).
Observe that $\A$ accepts the context $c \otimes c$
if it starts in nodes $v^1, \dots, v^k$ with the states $q^1, \dots, q^k$, respectively.

For every $0 \le k \le n$ define the tree 
\[s_k = c[t_3^1, \dots, t_3^k, t_2^{k+1}, \dots, t_2^n].\]
We have $(s_0,s_n) = (t_2,t_3) \in R$.
There must be an index $1 \le k \le n$ with $(s_{k-1},s_k) \in R$
since otherwise by transitivity of $\bar R$ we would have $(s_0,s_n) \notin R$.
Define the context
\[\beta = c[t_3^1, \dots, t_3^{k-1}, x, t_2^{k+1}, \dots, t_2^n].\]
Then we have that
$(\beta[t_2^k],\beta[t_3^k]) \in R$.
This is witnessed by an accepting run on their convolution,
which reaches state $q^k$ at node $v^k$.
This implies that
$(\beta[t_i^k],\beta[t_j^k]) \in R$
for all $i < j$, since the run of $\A$ on $t_i^k \otimes t_j^k$ also reaches $q^k$.
Hence, the context $\beta$
together with the trees $t_1$ and $\beta[t_i^k]$ for $i \ge 2$ satisfy the claim.
Moreover, $t_1 \triangleleft c$ and $c \pref \beta$ implies $t_1 \triangleleft \beta$.
\end{proof}

Repeated applications of \Cref{lem:cotrans-step} yields the desired infinite clique:

\begin{lemma}
	\label{lem:cotrans-cliques}
	If a co-transitive tree-regular relation $R$ has an infinite clique
	then there exists an infinite clique $\bm{t}$ of $R$
	generated by a monadic generator $(\bm{\alpha}, \bm{\beta})$.
\end{lemma}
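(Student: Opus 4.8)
The plan is to iterate \Cref{lem:cotrans-step}, peeling off one nontrivial monadic context at a time and then recursing into the subtrees that sit below its hole. For a monadic context $\beta$ define the \emph{quotient relation}
\[
	R/\beta := \{(s,s') \in \T_\Sigma \times \T_\Sigma \mid (\beta[s],\beta[s']) \in R\}.
\]
Two observations drive the induction. First, $R/\beta$ is again tree-regular: since $\beta[s]$ and $\beta[s']$ share the common prefix context $\beta$, their convolution factors as $\hat\beta[\,s \otimes s'\,]$ for a fixed monadic context $\hat\beta$ (the self-convolution of $\beta$), so the convolution language of $R/\beta$ is $\{u \mid \hat\beta[u] \text{ lies in the convolution language of } R\}$, a context-derivative of a tree-regular language, hence tree-regular (feed a bottom-up automaton through $\hat\beta$ and collect the root states of the plugged-in tree that lead to acceptance). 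Second, $R/\beta$ is again co-transitive: as $(s,s') \in \overline{R/\beta}$ iff $(\beta[s],\beta[s']) \in \bar R$, transitivity of $\bar R$ transfers verbatim to $\overline{R/\beta}$.

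Using these I would build the generator as follows. Set $R_0 := R$ and start from the given infinite clique of $R_0$. Given an infinite clique in the co-transitive tree-regular relation $R_k$, apply \Cref{lem:cotrans-step} with $L := \T_\Sigma$ to obtain an infinite $R_k$-clique $(t^{(k+1)}_m)_{m \ge 1}$ (which is a subsequence of the input clique) and a nontrivial monadic context $\beta_{k+1}$ with $t^{(k+1)}_1 \triangleleft \beta_{k+1}$ and $\beta_{k+1} \pref t^{(k+1)}_m$ for all $m \ge 2$. Put $\alpha_{k+1} := t^{(k+1)}_1$, a tree, and write $t^{(k+1)}_m = \beta_{k+1}[s^{(k+1)}_m]$ for the unique residual tree $s^{(k+1)}_m$, $m \ge 2$. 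Then $(s^{(k+1)}_m)_{m \ge 2}$ is an infinite clique in $R_{k+1} := R_k/\beta_{k+1}$, which by the two observations is again co-transitive and tree-regular, so the induction continues indefinitely. Because every $\beta_{k+1}$ is monadic, the limit $(\bm\alpha,\bm\beta)$ is a monadic generator; writing $C_k := \beta_1 \cdots \beta_k$ for the composed monadic context (with $C_0 = x$), it generates the comb $t_i := C_{i-1}[\alpha_i]$. The comb condition $\alpha_i \triangleleft \beta_i$ is precisely the property $t^{(i)}_1 \triangleleft \beta_i$ delivered by the lemma at step $i$, and nontriviality of each $\beta_i$ makes the hole depths of $C_k$ strictly increasing, so $\bm t$ is a genuine comb.

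The real obstacle is to verify that $\bm t$ is an infinite clique in the \emph{original} $R$, since the cliques above live in the successive quotients. The key bookkeeping identity is
\[
	t_j = C_{i-1}\bigl[t^{(i)}_{m}\bigr]\quad\text{for some } m \ge 2, \text{ whenever } j > i,
\]
which I would prove by induction on $j-i$: unfolding $t_j = C_{i-1}[\beta_i[\cdots\beta_{j-1}[\alpha_j]]]$ and repeatedly using that the clique selected at step $k{+}1$ is a subsequence of the residuals $(s^{(k)}_m)_{m\ge 2}$, so each application of $\beta_k[\,\cdot\,]$ rewrites an element of the step-$(k{+}1)$ clique as an element of index $\ge 2$ of the step-$k$ clique. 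Granting the identity, $t_i = C_{i-1}[t^{(i)}_1]$ and $t_j = C_{i-1}[t^{(i)}_m]$ with $m \ge 2$, and since $R_{i-1} = R/C_{i-1}$ we get $(t_i,t_j)\in R$ iff $(t^{(i)}_1,t^{(i)}_m)\in R_{i-1}$, which holds because $(t^{(i)}_m)_m$ is an $R_{i-1}$-clique and $1 < m$; pairwise distinctness follows likewise from injectivity of $C_{i-1}[\,\cdot\,]$. The delicate point is keeping the indexing of the infinitely many nested subsequences consistent, but this is routine once the identity above is in place.
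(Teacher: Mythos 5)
Your overall strategy---iterate \Cref{lem:cotrans-step}, pass to the quotient relation $R/\beta$, and observe that it stays tree-regular and co-transitive---is the same as the paper's, and those two observations are correct. But there is a genuine gap in how you glue the steps together, and it comes from a misreading of what \Cref{lem:cotrans-step} delivers. The clique produced by that lemma is \emph{not} a subsequence of the input clique: in its proof the original trees are written as $t_j = c[t_j^1,\dots,t_j^n]$ and the output clique consists of $t_1$ together with the \emph{mixed} trees $\beta[t_i^k] = c[t_3^1,\dots,t_3^{k-1},t_i^k,t_2^{k+1},\dots,t_2^n]$, which in general coincide with none of the original $t_j$. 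Consequently your ``key bookkeeping identity'' $t_j = C_{i-1}\bigl[t^{(i)}_m\bigr]$ fails: the elements of the step-$(i{+}1)$ clique need not occur among the residuals $(s^{(i)}_m)_{m\ge 2}$, so for $j>i$ the tree sitting below $C_{i-1}$ inside $t_j$ is not an element of the step-$i$ clique, and you cannot conclude $(t_i,t_j)\in R$ from that clique being a clique.

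This is precisely why \Cref{lem:cotrans-step} is stated relative to an auxiliary tree-regular language $L$, and why choosing $L:=\T_\Sigma$ at every step loses the argument. Once $\alpha_1,\dots,\alpha_n$ and $\beta_1,\dots,\beta_n$ are committed, the fresh elements created at step $n{+}1$ (which are genuinely new trees) must still be $R$-related to the already-fixed $t_1,\dots,t_n$. The paper enforces this invariant by applying \Cref{lem:cotrans-step} to $R'=R/\beta$ together with $L=\bigcap_{i=1}^n\{t \mid (t_i,\beta[t])\in R\}$; the ``over $L$'' clause of that lemma---whose proof does extra work, tracking runs of an automaton for $L$ to keep all the mixed trees inside $L$---is exactly what guarantees that the extended sequence is still a clique in the original $R$. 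Without it, your limit sequence is a comb, but there is no reason it is a clique.
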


\begin{proof}
	Let $n \in \N$ and
	suppose we have inductively constructed trees $\alpha_1, \dots, \alpha_n$,
	and nontrivial monadic contexts $\beta_1, \dots, \beta_n$,
	with $\alpha_i \triangleleft \beta_i$ for all $1 \le i \le n$,
	such that there exist trees $(t_i')_{i > n}$
	such that $(t_i)_{i \ge 1}$ is an infinite clique in $R$ where
	$t_i = \beta_1 \beta_2 \dots \beta_{i-1} \alpha_i$ if $i \le n$,
	and $t_i = \beta_1 \beta_2 \dots \beta_n t_i'$ if $i > n$.
	Let $\beta := \beta_1 \beta_2 \dots \beta_{n}$.
	Then $(t_i')_{i > n}$ is an infinite clique in the relation
	$R' = \{ (s',t') \mid (\beta[s'],\beta[t']) \in R \}$.
	It is easy to see that $R'$ is again tree-regular and also co-transitive
	since transitivity is preserved from $\bar R$ to $\bar R'$ via $\beta[t] \mapsto t$.
	Furthermore all trees $t_i'$ for $i > n$ belong to the tree-regular language
	$L = \bigcap_{i=1}^n \{ t \mid (t_i,\beta[t]) \in R \}$.
	We can apply \Cref{lem:cotrans-step} and obtain a tree $\alpha_{n+1}$,
	a nontrivial monadic context $\beta_{n+1}$
	with $\alpha_{n+1} \triangleleft \beta_{n+1}$, and trees $(t_i'')_{i > n+1}$ such that
	$\alpha_{n+1}$ together with $\beta_{n+1}[t_i'']$ for $i > n + 1$ form an infinite clique in $R'$.
	Furthermore all trees $\beta_{n+1}[t_i'']$ for $i > n+1$ belong to $L$.
	Hence $t_1, \dots, t_n$ together with $\beta[\alpha_{n+1}]$ and $\beta[\beta_{n+1}[t_i'']]$ for $i > n+1$
	form an infinite clique in $R$.
	By induction we then obtain the desired sequences $\bm{\alpha}, \bm{\beta}$.
\end{proof}

We can now prove \Cref{thm:cotransitive}.
Given an NTA $\A$ for a co-transitive relation $R$.
Using \Cref{lem:cotrans-cliques,lem:tree-run-comb},
we can prove a statement similar to \Cref{lem:easy-combs} for tree combs which are generated by a monadic generator.
In particular, all context forests $\kappa_i, \lambda_i, \mu_j$ have exactly one hole,
and hence $\kappa_i, \lambda_i, \mu_j, \nu_j$ are in fact contexts.
Now we can construct in polynomial time a Büchi tree automaton $\B$
which accepts all comb encodings
$\enc(\bm{\alpha},\bm{\beta})$ of a monadic generator $(\bm{\alpha},\bm{\beta})$
for which runs of the form $\kappa_j, \lambda_j, \mu_j, \nu_j$ as above exist.
To this end, $\B$ consists of four components in which the runs $\kappa_j, \lambda_j, \mu_j, \nu_j$ are simulated.
The detailed construction can be found in \cref{sec:app-buchi}.

The lower bound of \Cref{thm:cotransitive} follows by a logspace reduction from monadic decomposability for D$\uparrow$TAs which is $\P$-complete (see \Cref{sec:mondec}).

\section{Applications}\label{sec:generalized}

\subsection{Recurrent reachability with generalized Büchi condition}

The proof of the following Proposition can be found in \cref{sec:clique-rec}.
\begin{proposition}\label{prop:clique-rec}
The infinite clique problem and recurrent reachability
are logspace equivalent over (tree-)regular relations.
Moreover, the logspace reduction from recurrent reachability to the infinite clique problem preserves 
transitivity of relations and determinism of automata.
\end{proposition}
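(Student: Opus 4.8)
The plan is to establish logspace reductions in both directions, matching the two problems up via their common witness structure: a transitive path in recurrent reachability is exactly an infinite sequence $(a_i)_{i \ge 1}$ with $(a_i,a_j) \in R$ for all $i < j$, which is precisely an infinite clique in $R$. First I would handle the direction from the infinite clique problem to recurrent reachability. Given $R$, one simply asks for $a_1 \in \rec(A)[R]$, i.e.\ recurrent reachability with the trivial recurrence set $L_1 = A$ (the full domain). A sequence witnessing $a_1 \in \rec(A)[R]$ is a transitive path, hence an infinite clique; conversely, any infinite clique, read as a transitive path, visits $A$ infinitely often trivially. Since we may existentially guess the start element $a_1$ (or equivalently project it away as in \Cref{thm:word-ramsey}), an automaton for the infinite clique problem can be obtained from the recurrent reachability instance in logspace, and vice versa. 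The only subtlety is that recurrent reachability fixes a starting vertex $a_1$ while the infinite clique problem only asks for existence; this is bridged by noting that the set $\rec(A)[R]$ is nonempty iff $R$ has an infinite clique.

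For the converse direction, from recurrent reachability to the infinite clique problem, the idea is to encode the recurrence condition into the edge relation. Given $R \subseteq A \times A$, a target set $L \subseteq A$, and an initial element $a_1$, I would build a new relation $R'$ over an augmented domain $A' = A \times \{0,1\}$ (or over $A$ together with an extra marker coordinate), whose infinite cliques correspond exactly to transitive paths from $a_1$ that hit $L$ infinitely often. Concretely, one enriches each element with a bit recording whether an $L$-element has just been seen, and defines $(a,b) \in R'$ so that edges respect $R$ on the $A$-coordinate and the bits force infinitely many visits to $L$; a comb/transitive-path witness alternates through $L$ unboundedly. The translation is a straightforward product construction on the automaton for $R$ with constant-size gadgets, hence computable in logspace.

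The two properties claimed at the end---that the reduction from recurrent reachability to the infinite clique problem preserves transitivity and determinism---require care in how the gadget is attached, and this is the step I expect to be the main obstacle. A naive product with a two-state bit tracker can destroy transitivity of the edge relation and can turn a deterministic automaton nondeterministic. To preserve transitivity one should arrange the encoding so that the composed relation $R'$ inherits transitivity from $R$ whenever the original relation was transitive; this typically means defining $R'$ as a relation that is itself a composition/intersection of $R$ with a transitive ordering on the auxiliary coordinate (e.g.\ a linear order on a progress counter), so that transitivity is closed under the product. To preserve determinism of the automaton, the auxiliary coordinate and its transitions must be made deterministic and synchronized with the convolution encoding, so that the product D$\uparrow$TA (or DFA) remains deterministic. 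I would verify these two closure properties by a direct inspection of the gadget's transition structure, checking that composing two $R'$-edges yields an $R'$-edge (using transitivity of $R$ and of the auxiliary order) and that the product automaton has at most one transition per state--symbol pair.
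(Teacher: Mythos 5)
There is a genuine gap, and it is the same issue in both directions: you treat transitive paths and infinite cliques as the same object, but they are not. In this paper an infinite clique requires the elements $(a_i)_{i\ge 1}$ to be \emph{pairwise distinct}, whereas a transitive path (the witness for $\rec(L_1,\dots,L_k)[R]$) allows arbitrary repetition. Your first direction claims that ``a sequence witnessing $a_1 \in \rec(A)[R]$ is a transitive path, hence an infinite clique''; this is false. For $R=\{(a,a)\}$ the constant sequence $a,a,\dots$ shows $a\in\rec(A)[R]$, yet $R$ has no infinite clique, so the stated equivalence ``$\rec(A)[R]\neq\emptyset$ iff $R$ has an infinite clique'' fails. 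The paper's reduction repairs exactly this: it replaces $R$ by a relation $R'$ in which every transitive path is forced to be injective --- over words by additionally requiring $|v|<|w|$ on each edge, over trees by marking one root-to-leaf path whose domain must strictly grow --- and adds a fresh source ($\varepsilon$, resp.\ a fixed one-node tree $t_0$) connected to everything so that the recurrent reachability instance has a concrete initial element, as the problem definition demands.

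The converse direction has the dual problem. A transitive path witnessing $a_0\in\rec(L)[R]$ may visit only finitely many distinct elements (again, a self-loop on an $L$-element suffices), so no relation over $A\times\{0,1\}$ can have an infinite clique of \emph{distinct} elements corresponding to it. The paper instead passes to $A\times\N$, i.e.\ infinitely many copies of each element (encoded by convolving with a unary counter), precisely so that repeated visits in the transitive path become distinct clique vertices. Moreover, your ``bit recording whether an $L$-element has just been seen'' is a Büchi-style mechanism that makes sense along consecutive edges of a path but not in a clique, where \emph{every} pair must be related; the paper sidesteps this by restricting the transitive path to its (infinite) subsequence of $L$-elements and defining $((a,m),(b,n))\in R'$ iff $(a_0,a)\in R$, $(a,b)\in R$ and $a\in L$. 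With that definition the preservation claims you flag as the main obstacle become immediate: transitivity of $R'$ follows from transitivity of $R$ by direct inspection, and the automaton for $R'$ is a product of the given (deterministic) automata with the unary-counter gadget. So the two closure properties are not the hard part --- the missing idea is the distinctness/repetition mismatch, which your gadgets on both sides do not address.
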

Using \Cref{prop:clique-rec}
we obtain tight complexity bounds for recurrent reachability over (transitive) (tree-)regular relations.
We can even compute an automaton for the set $\rec(L)[R]$ of initial elements given automata for $R$ and $L$. 

\begin{corollary}\label{cor:tree-rec}
If $R$ is a binary (tree-)regular relation and $L$ is a (tree-)regular language given by NFAs (NTAs), 
then one can construct an NFA (NTA) for $\rec(L)[R]$ in logspace (exponential time).
The construction works in polynomial time if $R$ and $L$ are given by D$\uparrow$TAs
or if $R$ is transitive.
\end{corollary}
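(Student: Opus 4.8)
The plan is to reduce the construction of $\rec(L)[R]$ to an evaluation of the Ramsey quantifier, treating the initial vertex $a_1$ as a \emph{free parameter}; this is the parametrized form of the reduction underlying \cref{prop:clique-rec}. Concretely, from the (tree-)automata for $R$ and $L$ I would build the ternary relation
\[
R'(x,y,z) \;:\iff\; (z,x) \in R \,\wedge\, (z,y) \in R \,\wedge\, (x,y) \in R \,\wedge\, x \in L \,\wedge\, y \in L,
\]
which is obtained from $R$ and $L$ by coordinate cylindrifications and intersection, i.e.\ by purely conjunctive first-order operations on the three tracks of the convolution $x \otimes y \otimes z$. Hence $R'$ is again (tree-)regular and constructible in logspace; moreover, since intersection is realized by a product construction, it preserves bottom-up determinism, so $R'$ is given by a D$\uparrow$TA whenever $R$ and $L$ are. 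I would then establish the identity
\[
\rec(L)[R] = \sem{\ram x,y \colon R'(x,y,z)}{} \cup S, \qquad S = \{\, z \mid \exists a \in L \colon (a,a) \in R \wedge (z,a) \in R \,\},
\]
where $S$ collects the \emph{lasso} witnesses and is obtained from $R$ and $L$ by a single existential projection, hence is (tree-)regular and cheap to construct.

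The correctness argument splits along the distinct-versus-repeated dichotomy. For ``$\supseteq$'', a Ramsey witness $(a_i)_{i\ge1}$ for $R'$ yields the transitive path $z, a_1, a_2, \dots$: the clause $(z,x)\in R$ forces $(z,a_i)\in R$ (each $a_i$ occurs as the first coordinate of the pair $(a_i,a_{i+1})$), the clause $(x,y)\in R$ gives $(a_i,a_j)\in R$ for $i<j$, and $x,y \in L$ guarantees that the path visits $L$ infinitely often; an element of $S$ gives the path $z,a,a,\dots$. For ``$\subseteq$'', take any transitive path from $z$ that visits $L$ infinitely often. If infinitely many \emph{distinct} elements of $L$ occur, extracting that subsequence gives an infinite clique of $R'$ (transitivity and the membership constraints are inherited by subsequences), so $z \in \sem{\ram x,y \colon R'(x,y,z)}{}$. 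Otherwise some $a \in L$ repeats infinitely often, and two of its occurrences force the self-loop $(a,a)\in R$ together with $(z,a)\in R$, so $z \in S$.

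With the identity in hand, I would invoke the appropriate evaluation theorem for the Ramsey part and take the union with $S$. For NFAs, \cref{thm:word-ramsey} constructs an NFA for $\sem{\ram x,y \colon R'(x,y,z)}{}$ in logspace, giving an NFA for $\rec(L)[R]$ in logspace. For NTAs, viewing the NTA for $R'$ as an ATA and applying \cref{thm:tree-ramsey} yields an NTA in exponential time. When $R$ and $L$ are given by D$\uparrow$TAs, $R'$ is again a D$\uparrow$TA by the product construction above, so the D$\uparrow$TA case of \cref{thm:tree-ramsey} runs in polynomial time. When $R$ is transitive, the binary relation $\{(x,y) \mid R'(x,y,z)\} = R \cap (L_z \times L_z)$ with $L_z = L \cap \{x \mid (z,x)\in R\}$ is transitive for every fixed $z$, since intersecting a transitive relation with a square $L_z \times L_z$ preserves transitivity; thus $R'$ is a transitive ternary relation and \cref{thm:transitive-ramsey} applies in polynomial time.

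The main obstacle is precisely the mismatch between the Ramsey quantifier, which demands \emph{pairwise distinct} witnesses, and transitive paths, which may repeat elements; the lasso set $S$ is exactly the bridge. One must then check that adding $S$ neither changes the complexity (it is a single projection, and the automata for the two cases can be unioned within the stated bounds) nor spoils the preservation of determinism and transitivity that makes the sharper bounds of \cref{thm:tree-ramsey,thm:transitive-ramsey} applicable.
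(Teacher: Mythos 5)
Your proposal is correct and follows essentially the same route as the paper: the paper also expresses $\rec(L)[R]$ as a disjunction of a Ramsey quantification over a derived relation (there, $\ram y,z\colon R(x,y)\wedge L(y)\wedge R(y,z)$) and an existential ``lasso'' clause $\exists y\colon R(x,y)\wedge L(y)\wedge R(y,y)$, then invokes \cref{thm:word-ramsey,thm:tree-ramsey,thm:transitive-ramsey} exactly as you do. Your relation $R'$ merely adds redundant symmetric conjuncts, and your explicit checks that $R'$ inherits transitivity and bottom-up determinism fill in details the paper leaves to its closure-property remarks.
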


\begin{proof}
We can define $\rec(L)[R]$ by the formula
\begin{align*}
	\varphi(x) = \ram y,z \colon & R(x,y) \wedge L(y) \wedge R(y,z)  \\
	\vee \, \exists y \colon & R(x,y) \wedge L(y) \wedge R(y,y).
\end{align*}
Here, the first disjunct (beginning with $\ram$) captures infinite paths visiting infinitely many configurations,
whereas the second (beginning with $\exists$) captures infinite paths with only finitely many (distinct) configurations.

If $R$ and $L$ are given by NFAs, we can construct in logspace an NFA for $\rec(L)[R]$
using the closure properties of regular relations and \cref{thm:word-ramsey}.
Over trees, we use \cref{thm:tree-ramsey,thm:transitive-ramsey}
to construct an NTA for $\rec(L)[R]$ in exponential or polynomial time 
depending on whether $R$ is transitive and how $R,L$ are given.
\end{proof}

For recurrent reachability with generalized Büchi condition
we show that over words the complexity increases from $\NL$ to $\PSPACE$,
while over trees it stays in $\EXP$ (\cref{thm:general-buchi}).

For both the word and the tree case we reduce the generalized version to the classical version.
We first observe that $a_0 \in \rec(L_1,\dots,L_k)[R]$ if and only if there is a sequence $\bm{a}$ 
such that $(a_i,a_j) \in R$ for all $0 \le i < j$ and $a_i \in L_{((i-1) \text{ mod } k) + 1}$ for all $i \ge 1$.
We define a (tree-)regular relation $R' \subseteq A^k \times A^k$
that checks if a tuple $(a_1,\dots,a_{2k})$ forms a clique of size $2k+1$ in $R$ starting with $a_0$
such that $a_i \in L_i$ for all $i \in [1,k]$.
In the word case the NFA $\A'$ for $R'$ can be constructed in $\PSPACE$ using a product construction.
In the tree case we can avoid the exponential blow-up for the product automaton by using ATAs.
To make this work, we have to reduce the size of the alphabet for the ATA $\A'$.
This can be achieved by encoding a tuple $(\sigma_1,\dots,\sigma_k) \in \Sigma_\bot^k$ of symbols 
by a path $\sigma_1(\sigma_2(\dots \sigma_k(\#_m)\dots))$ where $\#_m$ is used as delimiter symbol of rank $m := \max\{\rk(\sigma_i) \mid 1 \le i \le k\}$.
Then the ATA $\A'$ can be constructed in polynomial time.
Now it holds that $a_0 \in \rec(L_1,\dots,L_k)[R]$ if and only if 
$\ram x,y \colon R'(x,y) \vee \exists x \colon R'(x,x)$
is valid.
By \cref{thm:word-ramsey} (resp. \cref{thm:tree-ramsey}) validity of the first disjunct of $\varphi$ 
can be checked in nondeterministic logspace (resp. exponential time) given $\A'$.
It is easy to see that validity of the second disjunct of $\varphi$ 
can also be checked in nondeterministic logspace (resp. exponential time) given $\A'$.
This yields a $\PSPACE$-algorithm in the word case and an $\EXP$-algorithm in the tree case.
Details are in \cref{sec:app-generalized}.

For the lower bounds we reduce from the intersection non-emptiness problem of (tree-)regular languages $L_1,\dots,L_k \subseteq A$, 
which is known to be $\PSPACE$-complete over words \cite{Kozen77} and $\EXP$-complete over trees \cite[Theorem~11]{tata}.
We define the (tree-)regular relation $R \subseteq A \times A$ such that $(a,b) \in R$ if and only if
$a=c$ or
$a=b$
where $c \in A$ is some fixed element.
Then $L_1 \cap \dots \cap L_k \ne \emptyset$ if and only if $c \in \rec(L_1,\dots,L_k)[R]$.

For $k=1$, the previous construction yields a reduction from nonemptiness for D$\uparrow$TAs,
which is $\P$-complete, to recurrent reachability over transitive tree-regular relations given by D$\uparrow$TAs,
proving the $\P$-hardness in \Cref{thm:tree-lowerbound} and \Cref{cor:rec-reach}.

The lower bound in the word case of \Cref{cor:rec-reach} follows by a logspace reduction from monadic decomposability for DFAs which is $\NL$-complete (see \Cref{sec:mondec}).

In \cite{Loding06} Löding shows that the reachability relation $\to^*$ for regular ground tree rewrite systems (RGTRS) is tree-regular
and an NTA for $\to^*$ can be constructed in polynomial time.
Hence, by \cref{thm:general-buchi} recurrent reachability with generalized Büchi condition is $\EXP$-complete for RGTRSs
where hardness for GTRSs can be shown by a similar reduction as above from intersection nonemptiness.

\begin{corollary}
Given an RGTRS and NTAs for tree-regular languages $L_1,\dots,L_k$,
one can construct in exponential time an NTA recognizing $\rec(L_1,\dots,L_k)[\to^*]$.
\end{corollary}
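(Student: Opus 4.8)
The plan is to obtain the corollary as a set-valued instance of the $\EXP$ upper bound behind \cref{thm:general-buchi}, applied to the specific relation supplied by Löding. First I would invoke \cite{Loding06}: for a regular ground tree rewrite system the reachability relation ${\to^*}$ is tree-regular, and an NTA recognising it can be computed in polynomial time. Hence $R := {\to^*}$ is a binary tree-regular relation given by a polynomial-size NTA, and the desired set $\rec(L_1,\dots,L_k)[{\to^*}]$ is precisely the set of initial elements satisfying recurrent reachability with generalized Büchi condition over $R$. The remaining task is to run the construction behind \cref{thm:general-buchi} while keeping the initial element as a free variable $x$, so that it yields an automaton for the whole set rather than a decision for a single instance, in the same spirit as \cref{cor:tree-rec}.

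Concretely, I would express membership of $x$ by a formula $\ram y,z \colon R'(x,y,z) \vee \exists y \colon R'(x,y,y)$, where the auxiliary tree-regular relation $R'$ bundles $k$ successive configurations into a single $k$-tuple and checks that, together with $x$, they form a clique in $R$ that visits $L_1,\dots,L_k$ in the prescribed cyclic order. Because ${\to^*}$ is transitive, a clique in $R$ corresponds to a genuine infinite computation of the system, so the generalized Büchi condition is faithfully captured by this single Ramsey formula. The first disjunct is then discharged by the Ramsey quantifier, while the $\exists$-disjunct, covering recurrent computations that ultimately settle on a finite cycle, is a cheap projection.

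The point requiring care is the complexity budget. Forming $R'$ by a naive product of the $k$ NTAs for $L_1,\dots,L_k$ would already cost exponential space, and, crucially, $R'$ is \emph{not} transitive even though ${\to^*}$ is, so the polynomial-time route of \cref{thm:transitive-ramsey} is unavailable and one must use the general ATA construction. I would therefore follow the alphabet-reduction trick used for \cref{thm:general-buchi}, encoding a tuple $(\sigma_1,\dots,\sigma_k)$ of symbols as a path $\sigma_1(\sigma_2(\cdots \sigma_k(\#_m) \cdots))$, so that $R'$ is realised by an ATA $\A'$ of polynomial size. Applying the ATA version of \cref{thm:tree-ramsey} then turns $\A'$ into an NTA for $\sem{\ram y,z \colon R'(x,y,z)}{}$ in a single exponential step; closing under union with the cheap second disjunct and projecting away the auxiliary coordinates gives the required NTA for $\rec(L_1,\dots,L_k)[{\to^*}]$ in exponential time overall, which the $\EXP$-hardness already noted for GTRSs shows to be optimal.
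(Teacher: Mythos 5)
Your proposal is correct and matches the paper's argument: Löding's polynomial-time NTA for $\to^*$, the bundling relation $R'$ built as a polynomial-size ATA via the path-encoding alphabet reduction, the lasso disjunct $\exists y \colon R'(x,y,y)$, and the ATA version of \cref{thm:tree-ramsey} applied with the initial configuration kept as a free parameter. One small inaccuracy: $R'$ actually \emph{is} transitive when $\to^*$ is (the cross conditions compose through any component of the middle tuple), but this is immaterial because \cref{thm:transitive-ramsey} requires an NTA input and the exponential cost is already incurred in turning the polynomial-size ATA for $R'$ into an NTA, so the general route you take is the one the paper uses and the $\EXP$ bound stands.
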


\subsection{Monadic Decomposability}\label{sec:mondec}

In the following we reduce monadic decomposability to the infinite clique problem over co-transitive relations,
proving \cref{cor:word-recognizable} and \cref{cor:recognizable}.
A $k$-ary relation $R$ over words or trees is monadically decomposable if and only if for all $1 \le j \le k$
the equivalence relations $\sim_j$ on $(\Sigma^*)^j$ (or $\T_\Sigma^j$) have finite index,
where two tuples $\bm{u} = (u_1, \dots, u_j)$, $\bm{v} = (v_1, \dots, v_j)$ are $\sim_j$-equivalent if and only if
\[
	\forall \bm{w} = (w_{j+1}, \dots, w_k) \colon [ (\bm{u},\bm{w}) \in R \iff (\bm{v},\bm{w}) \in R ],
\]
see for example \cite[Proof of Proposition~3.9]{CartonCG06}.
If the given automaton $\A$ for $R$ is a DFA, D$\uparrow$TA, or D$\downarrow$TA,
then one can compute automata $\A_{\not \sim_j}$ for the complements $\not \sim_j$ of $\sim_j$ in logspace,
using the fact that $\bm{u} \not \sim_j \bm{v}$ is equivalent to
\begin{align*}
	\exists \bm{w} \colon & \big( (\bm{u},\bm{w}) \in R \wedge (\bm{v},\bm{w}) \notin R \big) ~ \vee
	\big( (\bm{u},\bm{w}) \notin R \wedge (\bm{v},\bm{w}) \in R \big).
\end{align*}
If $\A$ is an NFA or NTA, then this is possible in polynomial space, by determinizing $\A$
and using closure properties of regular relations.
Then, apply \Cref{thm:word-ramsey} (\Cref{thm:cotransitive}) to $\A_{\not\sim_j}$
to check in $\NL$ (resp. $\P$) for an infinite clique in $\not \sim_j$.

We now prove the lower bounds by a reduction from the universality problem
for DFAs, NFAs, D$\uparrow$TAs, and NTAs,
and the emptiness problem for D$\downarrow$TAs.

\begin{lemma}\label{lem:hardness}
Given a binary regular relation $R \subseteq \Sigma^* \times \Sigma^*$ by an NFA (resp. DFA),
it is $\PSPACE$-hard (resp. $\NL$-hard) to decide whether $R$ is monadically decomposable.
Given a binary tree-regular relation $R \subseteq \T_\Sigma \times \T_\Sigma$ by an NTA (resp. D$\uparrow$TA),
it is $\EXP$-hard (resp. $\P$-hard) to decide whether $R$ is monadically decomposable.
\end{lemma}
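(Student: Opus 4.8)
The plan is to give, for each automaton model, a logspace reduction from the corresponding \emph{universality problem}, which is $\PSPACE$-complete for NFAs, $\NL$-complete for DFAs, $\EXP$-complete for NTAs, and $\P$-complete for D$\uparrow$TAs. Given an automaton $\A$ for a language $L$ over $\Sigma$ (of words or ranked trees), I will construct a binary relation $R$, recognizable by an automaton of the same model in logspace, such that $R$ is monadically decomposable if and only if $L$ is universal. Recall from the preceding discussion of $\sim_j$ that a binary relation is monadically decomposable exactly when $\sim_1$ has finite index, i.e.\ when it has only finitely many distinct \emph{rows} $\mathrm{row}(u) = \{\, v \mid (u,v) \in R \,\}$ (for binary relations $\sim_2$ is always trivial); so it suffices to control the number of rows of $R$.

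The key idea is to amplify a single witness of non-universality into infinitely many distinct rows by means of a fresh unary counting symbol $\#$. In the word case I set, over $\Sigma \cup \{\#\}$,
\[
R = \{\, (\#^n w,\ \#^m) \mid n,m \ge 0,\ w \in \Sigma^*,\ (w \in L \ \text{or}\ m = n) \,\}.
\]
Then $\mathrm{row}(\#^n w) = \#^*$ when $w \in L$, $\mathrm{row}(\#^n w) = \{\#^n\}$ when $w \notin L$, and every first coordinate not of the form $\#^n w$ with $w \in \Sigma^*$ has the empty row. Hence if $L = \Sigma^*$ there are only two rows ($\#^*$ and $\emptyset$), so $R$ is monadically decomposable; whereas if some $w_0 \notin L$, the words $\#^n w_0$ have the pairwise distinct singleton rows $\{\#^n\}$, so $\sim_1$ has infinite index and $R$ is not decomposable. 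This amplification is essential: non-universality need not produce an infinite complement, so the naive relation $\{(w,w) \mid w \notin L\}$ would fail whenever $\overline{L}$ is finite. For ranked trees I use the identical scheme with $\#$ of rank $1$, taking first coordinate $\#^n(t)$ for $t \in \T_\Sigma$ and second coordinate $\#^m(e)$ for a fixed nullary symbol $e$, and putting $(\#^n(t),\#^m(e)) \in R$ iff $t \in L$ or $n = m$; the row computation and the equivalence with universality are verbatim the same.

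It remains to build an automaton for $R$ of the same model as $\A$ in logspace. The defining condition is a disjunction of two parts: a membership test ($w \in L$, resp.\ $t \in L$), obtained by simulating $\A$ on the $\Sigma$-coordinate past the leading $\#$-block, and a counting test ($m = n$), which compares the length of the top $\#$-prefix with the bottom $\#$-path and is a purely regular property of a single path. For the nondeterministic models (NFA, NTA) the automaton simply guesses which disjunct to verify and runs the corresponding check, so the construction is immediate. For the deterministic models (DFA, D$\uparrow$TA) I run both checks in parallel within a product state and let acceptance be their disjunction; since each check is itself deterministic and both outcomes are available at the accepting instant (the final state of the word, resp.\ the root state of the tree), a deterministic automaton suffices. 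I expect the main obstacle to be the deterministic tree case: one must check that the convolution of the branching tree $\#^n(t)$ with the path $\#^m(e)$ — misaligned precisely in the $t \in L$ disjunct, where no comparison is required — can still be processed bottom-up by a D$\uparrow$TA that recovers $\A$'s run on $t$ and verifies well-formedness. This is routine but the bookkeeping is the only delicate point. Chaining the four reductions with the stated completeness of the universality problems then yields all four hardness claims.
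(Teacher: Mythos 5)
Your proof is correct and follows essentially the same route as the paper's: both reduce from the universality problem for the respective automaton model and build a relation of the shape ``($L$-membership on one part of the first coordinate) $\vee$ (an equality test against the second coordinate)'', so that universality collapses $R$ to a Cartesian product while a single non-member spawns infinitely many pairwise distinct rows. The only cosmetic differences are that the paper packs the equality witness into the first coordinate via a convolution $u \otimes v$ and derives non-decomposability from closure under intersection, whereas you use a unary counter $\#^n$ versus $\#^m$ and count the $\sim_1$-classes directly.
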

\begin{proof}
We give a logspace reduction from the universality problem which is known to be $\PSPACE$-complete for NFAs,
$\NL$-complete for DFAs, $\P$-complete for D$\uparrow$TAs, and $\EXP$-complete for NTAs.
To ease notation, we only consider the word case and remark that the tree case is analogous.
Recall that the universality problem asks whether for a given regular language $L \subseteq \Sigma^*$ it holds that $L = \Sigma^*$.
Let $L \subseteq \Sigma^*$ be a regular language given by an NFA (resp. DFA) $\A$.
We define the regular relation
\[R_L := \{(u \otimes v, w) \mid u \in L \text{ or } v = w \in \Sigma^*\}.\]
It is easy to construct an NFA (resp. DFA) that recognizes $R_L$ in logarithmic space from $\A$. 
Note that for DFAs the disjunction can be realized with a product construction.
It remains to show that $R_L$ is monadically decomposable if and only if $L = \Sigma^*$.

If $L = \Sigma^*$, it holds that $R_L = \{(u \otimes v, w) \mid u,v,w \in \Sigma^*\}$ which is clearly monadically decomposable.

For the converse assume that there exists $u_0 \in \Sigma^* \setminus L$.
Then the intersection of $R_L$ with the monadically decomposable relation 
$\{(u_0 \otimes v, w) \mid v,w \in \Sigma^*\}$
is the relation
$\{(u_0 \otimes v, w) \mid v = w \in \Sigma^*\}$,
which is not monadically decomposable.
Since monadically decomposable relations are closed under intersection, it follows that $R_L$ is not monadically decomposable.
\end{proof}

\begin{lemma}\label{lem:det-hard}
Given a binary tree-regular relation $R \subseteq \T_\Sigma \times \T_\Sigma$ by a D$\downarrow$TA,
it is $\P$-hard to decide whether $R$ is monadically decomposable.
\end{lemma}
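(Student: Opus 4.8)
The plan is to give a logspace reduction from the \emph{emptiness} problem for D$\downarrow$TAs, which is \P-complete, rather than from universality as in the NFA/D$\uparrow$TA cases of \Cref{lem:hardness}. The reason to switch source problem is that D$\downarrow$TAs are not closed under union, so the disjunctive gadget $R_L = \{(u \otimes v, w) \mid u \in L \text{ or } v = w\}$ used there cannot be realized by a D$\downarrow$TA. Reducing from emptiness lets me use a purely \emph{conjunctive} gadget built around an identity relation, which is deterministic-top-down recognizable.

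Given a D$\downarrow$TA $\A$ for $L \subseteq \T_\Sigma$, I would first build a tree language $L' \subseteq \T_\Gamma$ over the ranked alphabet $\Gamma := \Sigma \cup \{r,g,e\}$ with $\rk(r)=2$, $\rk(g)=1$, $\rk(e)=0$, consisting of all trees $r(t, g^n(e))$ with $t \in L$ and $n \ge 0$: a tree of $L$ serving as a ``gate'' together with an arbitrarily long unary ``payload'' chain. Thus $L'$ is infinite iff $L \neq \emptyset$, and empty iff $L = \emptyset$. The reduction then outputs the identity relation $R_L := \{(u,u) \mid u \in L'\}$. Since the convolution $u \otimes u$ labels every node diagonally by $(c,c)$, a D$\downarrow$TA for $R_L$ need only verify, top-down, that (i) every label has the form $(c,c)$ and (ii) the first coordinate lies in $L'$; the latter amounts to running $\A$ on the gate subtree plus a trivial chain check on the payload. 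Crucially, $u \otimes u$ carries no padding symbol $\bot$, so running $\A$ reduces to reading the first coordinate and the whole construction stays deterministic and top-down, and is clearly computable from $\A$ in logspace.

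It then remains to verify that $R_L$ is monadically decomposable iff $L = \emptyset$. If $L = \emptyset$ then $L' = \emptyset$, so $R_L = \emptyset$ is trivially decomposable (e.g.\ as $\emptyset \times \emptyset$). If $L \neq \emptyset$ then $L'$ is infinite, and the diagonal $\{(u,u)\mid u \in L'\}$ over an infinite set is not decomposable: in any putative representation $\bigcup_{i=1}^n A_i \times B_i$, a single product $A_i \times B_i$ containing two distinct diagonal pairs $(u,u),(u',u')$ would also contain the off-diagonal pair $(u,u') \notin R_L$, so each product captures at most one diagonal element and $n$ products cover only finitely many, a contradiction. Note that this argument uses no regularity of the $A_i, B_i$, so it already rules out \emph{every} finite product decomposition.

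The main obstacle I anticipate is entirely at the level of the automaton construction: one must check that the coordinate-wise simulation of $\A$, the diagonal test, the handling of the root symbol $r$, and the payload-chain test can all be fused into one genuinely deterministic top-down automaton built in logspace. This is routine precisely because the identity convolution is $\bot$-free, which is exactly what lets the construction sidestep the union and projection difficulties that forced the switch from universality to emptiness as the source problem in the first place.
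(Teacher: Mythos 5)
Your proposal is correct and follows essentially the same route as the paper: a logspace reduction from D$\downarrow$TA emptiness that outputs a diagonal (identity) relation over a language made infinite via a padding chain whenever $L \neq \emptyset$, together with the observation that an infinite subrelation of the diagonal is never monadically decomposable while the empty relation trivially is. The only cosmetic difference is that the paper prepends a unary $\#$-chain above the root of $t$ rather than attaching a payload chain under a fresh binary root, and it leaves the non-decomposability of infinite diagonals implicit where you spell it out.
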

\begin{proof}
We give a logspace reduction from the emptiness problem for D$\downarrow$TAs which is known to be $\P$-complete \cite{veanes1997computational}. 
Let $\A = (Q,\Sigma,\Delta,q_0)$ be a D$\downarrow$TA. 
We construct a D$\downarrow$TA $\A' = (Q,\Gamma_\bot^2,\Delta',q_0)$ recognizing a binary tree-regular relation $R'$ over 
$\Gamma := \Sigma \cup \{\#\}$ where $\# \notin \Sigma$ is a symbol of rank 1 as follows. 
We define the transition relation $\Delta'$ such that
\begin{itemize}
\item $q \xrightarrow{(a,a)}_{\A'} (q_1,\dots,q_r)$ for all $q \xrightarrow{a}_{\A} (q_1,\dots,q_r)$,
\item $q_0 \xrightarrow{(\#,\#)}_{\A'} q_0$
\end{itemize}
Clearly, $\A'$ can be constructed in logspace from $\A$.

It is easy to see that $R' \subseteq \{(t,t) \mid t \in \T_{\Gamma}\}$. 
Moreover, it holds that $R'$ is finite if and only if $L(\A) = \emptyset$.  
Indeed, if there exists $t \in L(\A)$, then $(t_n,t_n) \in R'$ for all $n \geq 0$ 
where $t_n$ is the resulting tree when padding a chain of $\#$-symbols of length $n$ to the root of $t$. 
Since every finite relation is monadically decomposable and every infinite subrelation of $\{(t,t) \mid t \in \T_{\Gamma}\}$ is clearly not monadically decomposable, 
it holds that $R'$ is monadically decomposable if and only if $L(\A) = \emptyset$.
\end{proof}

\section{Unranked tree-automatic structures}\label{sec:unranked}
In this section we consider the unranked tree analogue of \cref{thm:tree-ramsey,thm:transitive-ramsey}.
Furthermore, we consider an application of the results to recurrent reachability in subtree and flat prefix rewriting systems.
{\em Unranked tree-regular} languages and relations are recognized by nondeterministic unranked tree automata (NUTAs),
see \Cref{sec:app-unranked}.
\begin{restatable}{theorem}{unrankedramsey}\label{thm:unranked-ramsey}
Given an unranked tree-regular $R \subseteq (\U_\Sigma)^{k+2}$
by an NUTA $\A$, an NUTA for the relation $\sem{\ram x,y \colon R(x,y,\bm{z})}{}$ can be constructed 
in polynomial time if $R$ is transitive and in exponential time otherwise.
Hence, the infinite clique problem over (transitive) unranked tree-regular relations
is in $\EXP$ ($\P$).
\end{restatable}
The proof can be found in \cref{sec:app-unranked-thm}.
It uses the \emph{first-child next-sibling encoding},
a standard regularity-preserving transformation from unranked trees to binary trees 
(see e.g.,~\cite{neven2002automata,gottlob2005complexity,libkin2005logics}).
As over ranked trees, \cref{thm:unranked-ramsey} implies:

\begin{corollary}\label{cor:unranked-rec}
For a binary unranked tree-regular relation $R$ and an unranked tree-regular language $L$ given by NUTAs, 
one can construct an NUTA recognizing $\rec(L)[R]$ in polynomial time if $R$ is transitive and in exponential time otherwise.
\end{corollary}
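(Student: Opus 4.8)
The plan is to mirror the proof of \Cref{cor:tree-rec} for ranked trees, expressing recurrent reachability through the Ramsey quantifier and then invoking \Cref{thm:unranked-ramsey} in place of \cref{thm:tree-ramsey,thm:transitive-ramsey}. Concretely, I would define $\rec(L)[R]$ by the formula
\begin{align*}
	\varphi(x) = \ram y,z \colon & R(x,y) \wedge L(y) \wedge R(y,z) \\
	\vee \, \exists y \colon & R(x,y) \wedge L(y) \wedge R(y,y).
\end{align*}
The correctness of this characterization is exactly as in the ranked case: an element $x$ lies in $\rec(L)[R]$ iff there is a transitive path from $x$ visiting $L$ infinitely often, and such a path either passes through infinitely many \emph{distinct} elements of $L$ --- which is precisely what the Ramsey quantifier over $R(x,y) \wedge L(y) \wedge R(y,z)$ expresses --- or it revisits a single element $y \in L$ reachable from $x$ with $R(y,y)$, which the second disjunct captures (the path $x, y, y, \dots$ then witnesses membership).

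Next I would carry out the automaton construction. Since unranked tree-regular relations are effectively closed under the first-order operations (conjunction by a product construction, projection by relabelling), one can build in polynomial time an NUTA for the ternary relation $\psi(y,z,x) = R(x,y) \wedge L(y) \wedge R(y,z)$ from the given NUTAs for $R$ and $L$. Applying \Cref{thm:unranked-ramsey} to $\psi$, with $y,z$ as the Ramsey variables and $x$ as the parameter, yields an NUTA for the first disjunct; the second disjunct is an ordinary first-order formula over $R$ and $L$ and hence gives an NUTA in polynomial time by the same closure properties. Taking the union of the two NUTAs produces an NUTA for $\rec(L)[R]$, and the overall cost is dominated by the single call to \Cref{thm:unranked-ramsey}.

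The one point requiring care --- and the main obstacle to securing the polynomial bound in the transitive case --- is verifying that transitivity of $R$ is inherited by $\psi$ in the relevant coordinates. For each fixed parameter $x$ I would check that $\{(y,z) \mid \psi(y,z,x)\}$ is transitive: if $\psi(y_1,y_2,x)$ and $\psi(y_2,y_3,x)$ both hold, then $R(x,y_1)$ and $L(y_1)$ come from the first of these, while $R(y_1,y_2)$ and $R(y_2,y_3)$ combine by transitivity of $R$ to give $R(y_1,y_3)$, so $\psi(y_1,y_3,x)$ holds. Hence $\psi$ is transitive whenever $R$ is, so \Cref{thm:unranked-ramsey} runs in polynomial time, and in exponential time otherwise. \Cref{cor:unranked-rec} follows.
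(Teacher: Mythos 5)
Your proposal is correct and follows essentially the same route as the paper: the paper proves this corollary exactly as it proves \Cref{cor:tree-rec}, via the formula $\varphi(x) = \ram y,z \colon R(x,y) \wedge L(y) \wedge R(y,z) \,\vee\, \exists y \colon R(x,y) \wedge L(y) \wedge R(y,y)$, closure of (unranked) tree-regular relations under first-order operations, and an appeal to \Cref{thm:unranked-ramsey}. Your explicit check that transitivity of $R$ is inherited by the ternary relation fed to the Ramsey quantifier is a detail the paper leaves implicit, but it is the right verification and matches the paper's definition of transitivity for $(k+2)$-ary relations.
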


In \cite{loding2007transition} L{\"o}ding and Spelten introduce tree rewriting systems over unranked trees called subtree and flat prefix rewriting systems (SFPRS).
We refer to \cref{sec:app-sfprs} for the definition. 
In \cite{loding2007transition} it is shown that the reachability relation $\to^*$ for (regular) SFPRSs is an unranked tree-regular relation.
Moreover, it can be observed that the NUTA for $\to^*$ can be constructed in polynomial time.
Since $\to^*$ is transitive, we can apply \cref{cor:unranked-rec} to obtain that recurrent reachability for (regular) SFPRSs is decidable in polynomial time:

\begin{corollary}
For a (regular) SFPRS and an unranked tree-regular language $L$ given as NUTA, 
one can construct an NUTA recognizing $\rec(L)[\to^*]$ 
in polynomial time. 
\end{corollary}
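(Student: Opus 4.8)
The plan is to derive this corollary directly from \Cref{cor:unranked-rec} by instantiating its binary relation $R$ with the reachability relation $\to^*$ of the given SFPRS. Thus the entire task reduces to verifying that $\to^*$ meets the hypotheses of \Cref{cor:unranked-rec} in the \emph{polynomial-time} regime: that $\to^*$ is an unranked tree-regular relation, that it is transitive, and that an NUTA for it can be produced in polynomial time.

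First I would recall from Löding and Spelten \cite{loding2007transition} that the reachability relation $\to^*$ of a (regular) SFPRS is unranked tree-regular. The one genuinely technical point is the size bound: I would inspect their automaton construction for $\to^*$ and confirm that both the number of states and transitions it introduces, and the time it takes, are polynomial in the size of the input rewriting system---not merely finite. This is the step I expect to be the main obstacle, since \cite{loding2007transition} establishes regularity and decidability but does not phrase the construction with an explicit polynomial complexity bound; one has to trace through their saturation procedure and bound the intermediate automata.

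Second, I would observe that $\to^*$ is by definition the reflexive-transitive closure of the one-step relation $\to$, and is therefore transitive; this places us squarely in the polynomial-time branch of \Cref{cor:unranked-rec}. Finally, feeding the polynomial-size NUTA for $\to^*$ together with the given NUTA for $L$ into \Cref{cor:unranked-rec} produces, in polynomial time, an NUTA recognizing $\rec(L)[\to^*]$, which is exactly the assertion of the corollary. Everything beyond the size analysis of the $\to^*$-automaton is a direct invocation of the already-established corollary.
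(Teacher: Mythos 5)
Your proposal matches the paper's argument exactly: the paper likewise obtains the result by noting that $\to^*$ is unranked tree-regular by Löding and Spelten, observing that the NUTA for $\to^*$ can be constructed in polynomial time, and then invoking the transitive (polynomial-time) branch of \cref{cor:unranked-rec}. You are also right that the polynomial size bound on the $\to^*$-automaton is the one point the cited work does not state explicitly — the paper handles it with the same brief "it can be observed" remark rather than a detailed trace of the saturation procedure.
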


Let $\mathrm{FO}(\mathrm{SFPRS})$ be the first-order theory over unranked trees with the reachability relation $\to^*$ and the one-step reachability relation $\to$ for (regular) SFPRSs. In \cite{loding2007transition} it is shown that the structure of $\mathrm{FO}(\mathrm{SFPRS})$ is unranked tree-automatic which means that $\mathrm{FO}(\mathrm{SFPRS})$ is decidable.

Let $\mathrm{FO}(\mathrm{SFPRS} + \mathrm{Rec})$ be the theory $\mathrm{FO}(\mathrm{SFPRS})$ enriched by the recurrent reachability operator:
For a formula $\varphi$ in $\mathrm{FO}(\mathrm{SFPRS} + \mathrm{Rec})$ with one free variable we define the recurrent reachability operator $\rec(\varphi)$ as formula with one free variable such that $\rec(\varphi)(t)$ is true if and only if $t \in \rec(L)[\to^*]$ for any $t \in \U_\Sigma$ where $L$ is the unranked tree-regular language defined by $\varphi$.

\begin{corollary}
The theory $\mathrm{FO}(\mathrm{SFPRS} + \mathrm{Rec})$ is decidable.
\end{corollary}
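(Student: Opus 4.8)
The final corollary claims that the theory $\mathrm{FO}(\mathrm{SFPRS} + \mathrm{Rec})$ is decidable. This extends the first-order theory over unranked trees with the reachability relation $\to^*$ (already known to be unranked tree-automatic and hence decidable) by allowing the recurrent reachability operator $\rec(\varphi)$ to be applied to subformulas with one free variable. The plan is to show that applying $\rec$ preserves the property of being unranked tree-automatic, so that the whole extended structure remains automatic and therefore decidable.

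**The approach.** The key observation is that the underlying structure of $\mathrm{FO}(\mathrm{SFPRS})$ is unranked tree-automatic: its domain is $\U_\Sigma$ and its relations $\to$ and $\to^*$ are unranked tree-regular. I would argue by structural induction on the formula $\varphi$ of $\mathrm{FO}(\mathrm{SFPRS}+\mathrm{Rec})$ that $\sem{\varphi}{}$ is effectively unranked tree-regular. The base cases (atomic formulas involving $\to$, $\to^*$, and equality) and the usual first-order operations (Boolean combinations and projection) are handled by the effective closure of unranked tree-regular relations under first-order operations, exactly as in the plain $\mathrm{FO}(\mathrm{SFPRS})$ case. The genuinely new case is the operator $\rec(\varphi)$.

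**Handling the recurrent reachability operator.** For a subformula $\varphi$ with one free variable, the induction hypothesis gives an NUTA for the unranked tree-regular language $L := \sem{\varphi}{}$. By definition $\rec(\varphi)(t)$ holds iff $t \in \rec(L)[\to^*]$. The crucial input is that $\to^*$ for (regular) SFPRSs is itself an unranked tree-regular relation, computable by an NUTA (as established in \cite{loding2007transition} and the polynomial-time observation used for the preceding corollaries). Since $\to^*$ is \emph{transitive}, I would invoke \Cref{cor:unranked-rec}: given NUTAs for the transitive relation $\to^*$ and for the language $L$, one can construct an NUTA recognizing $\rec(L)[\to^*] = \sem{\rec(\varphi)}{}$. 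This closes the induction: $\rec(\varphi)$ again denotes an unranked tree-regular language.

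**Conclusion and the main obstacle.** Having shown that every formula of $\mathrm{FO}(\mathrm{SFPRS}+\mathrm{Rec})$ denotes an effectively computable unranked tree-regular relation, decidability follows immediately by testing (non)emptiness of the resulting NUTA for a sentence. The main conceptual point—rather than a technical obstacle—is ensuring that the $\rec$ operator is applied only to subformulas of a single free variable, as stipulated in the definition, so that \Cref{cor:unranked-rec} applies verbatim to a \emph{language} rather than a higher-arity relation. The only place care is needed is confirming that $\to^*$ is transitive (which holds because it is a reflexive-transitive closure) so that the polynomial-time branch of \Cref{cor:unranked-rec} is available, making the construction uniform and effective throughout the induction. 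No new Ramsey-theoretic argument is required here; the corollary is a clean application of the preceding machinery.
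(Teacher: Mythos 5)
Your proposal is correct and matches the paper's (implicit) argument: the paper states this corollary as an immediate consequence of the preceding one, intending exactly the structural induction you describe, where closure under first-order operations handles the classical connectives and the NUTA construction for $\rec(L)[\to^*]$ (available in polynomial time since $\to^*$ is transitive) handles the recurrent reachability operator. Decidability then follows by emptiness testing on the resulting automaton, just as you conclude.
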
 
For future work, we propose to investigate if  \cref{cor:unranked-rec}
could be applied to
other classes of tree rewriting systems over unranked trees, e.g., the class
$\textsl{Trs}_0$ of tree rewrite systems with positive guards 
\cite{HLO15}, which was
applied for the analysis of HTML5 applications.

\section{Conclusion and Future Works}
\label{sec:conc}

We have identified directed Ramsey quantifiers as a fundamental
notion that underlies the standard notion of Ramsey quantifiers, recurrent
reachability, and monadic decomposability. We have also shown that the notion of
\emph{comb of combs} can be used to obtain substantially simpler proofs in 
case of word-automatic relations, and can be generalized to tree-automatic
relations, allowing us to derive new results for Ramsey quantifiers, recurrent
reachability and monadic decomposability (with applications to generalized Büchi
conditions and unranked tree-automatic relations). There are many natural
research directions. In particular, we pinpoint that Ramsey quantifiers over
$\omega$-automatic relations, as well as recurrent reachability over transitive 
$\omega$-automatic relations, is still a major open problem \cite{K10}, although
monadic decomposability is known to be decidable \cite{LS19}. One possible 
approach is to consider the subclass of $\omega$-automatic relations that are
definable over the theory of mixed integer-real linear arithmetic $\langle
\mathbb{R}; \mathbb{Z}, 1,0,<,+\rangle$, for which the problem of Ramsey
quantifiers and recurrent reachability, to be the best our knowledge, is still
an open problem.

\begin{acks}
Pascal Bergstr\"{a}{\ss}er and Anthony Lin are supported by the
\grantsponsor{erc}{ERC}{https://erc.europa.eu}
Starting Grant
\grantnum{erc}{759969} (AV-SMP).
\end{acks}

\label{beforebibliography}
\newoutputstream{pages}
\openoutputfile{main.pg}{pages}
\addtostream{pages}{\getpagerefnumber{beforebibliography}}
\closeoutputstream{pages}

\bibliographystyle{ACM-Reference-Format}
\bibliography{references.bib}

\appendix
\section{Directed vs undirected cliques}\label{sec:app-undirected}
\Cref{thm:word-ramsey} and \Cref{thm:tree-ramsey} also hold if we alternatively define the Ramsey quantifier
$\ram x, y \colon \varphi(x,y,\bm{z})$ using {\em infinite undirected cliques}, i.e., there exists an infinite set $X$
such that $\varphi(a,b,\bm{z})$ holds for all $a, b \in X$ with $a \neq b$,
since we can replace $\varphi(x,y,\bm{z})$ by $\varphi(x,y,\bm{z}) \wedge \varphi(y,x,\bm{z})$.

Furthermore, the \NL-lower bound in the word case (\cref{cor:word-recognizable})
and the \EXP-lower bound in the tree case (\Cref{thm:tree-lowerbound})
also hold for undirected cliques:

\begin{proposition}\label{prop:directed-undirected}
The infinite clique problems for directed and undirected cliques are logspace equivalent over (tree-)regular relations.
\end{proposition}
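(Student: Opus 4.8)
The plan is to give two logspace many-one reductions, one for each direction. Write $\mathrm{DirClique}$ and $\mathrm{UndirClique}$ for the infinite clique problems for directed and undirected cliques. The reduction $\mathrm{UndirClique} \le_{\log} \mathrm{DirClique}$ is immediate: an infinite undirected clique in $R$ is exactly an infinite directed clique in the symmetric relation $R \cap R^{-1}$, since a directed clique in a symmetric relation already contains both orientations of every pair, while any undirected clique puts both orientations of all its distinct pairs into $R$, hence into $R \cap R^{-1}$. As (tree-)regular relations are effectively closed under intersection and coordinate swapping, $R \mapsto R \cap R^{-1}$ is logspace-computable. All the work is therefore in the opposite reduction, where the undirected (symmetric) target must nonetheless recover the \emph{orientation} of a directed clique; this forces the use of an external order on the domain.

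For $\mathrm{DirClique} \le_{\log} \mathrm{UndirClique}$ I would fix a (tree-)regular well-order $\prec$ on the domain $A$ and send $R$ to the symmetric relation
\[
R'' \;=\; \{(a,b) : a \prec b \wedge (a,b) \in R\} \;\cup\; \{(a,b) : b \prec a \wedge (b,a) \in R\},
\]
i.e. $(a,b) \in R''$ iff the $\prec$-sorted pair lies in $R$. This is a first-order combination of $R$ and $\prec$, hence logspace-constructible once $\prec$ is available. For words I take $\prec$ to be the length-lexicographic order, which is regular and a well-order of type $\omega$. For trees I take the \emph{positional} lexicographic order: extend the symbols by a linear order with $\bot$ minimal, and set $t \prec t'$ if at the first node (in depth-first pre-order) where the convolution $t \otimes t'$ has differing tracks, the $t$-track carries the smaller symbol. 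A nondeterministic top-down automaton guesses this first differing node, checks that all pre-order-earlier nodes carry equal tracks (a tree-regular condition on the left-sibling subtrees and the ancestor path) and that the guessed node is a strict descent; so $\prec$ is tree-regular and constructible in logspace.

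Correctness splits into the two implications. If $R$ has a directed clique $(a_i)_{i\ge 1}$, then by the infinite Erd\H{o}s--Szekeres theorem the sequence has an infinite $\prec$-monotone subsequence; well-foundedness of $\prec$ excludes the decreasing case, so it is $\prec$-increasing, say $a_{i_1} \prec a_{i_2} \prec \cdots$ with $i_1 < i_2 < \cdots$. For $p < q$ we have $i_p < i_q$, hence $(a_{i_p},a_{i_q}) \in R$, and $a_{i_p} \prec a_{i_q}$, so $(a_{i_p},a_{i_q}) \in R''$; thus $\{a_{i_p} : p \ge 1\}$ is an infinite undirected clique in $R''$. Conversely, if $X$ is an infinite undirected clique in $R''$, then well-foundedness lets me enumerate an infinite $\prec$-increasing sequence $b_1 \prec b_2 \prec \cdots$ in $X$ by taking successive $\prec$-minima. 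For $p < q$ we have $b_p \prec b_q$ and $(b_p,b_q) \in R''$, and by the definition of $R''$ this forces $(b_p,b_q) \in R$; hence $(b_k)_{k\ge 1}$ is a directed clique in $R$. Note that the forward direction genuinely needs the monotone-subsequence argument (to keep index order and value order aligned), whereas the backward direction only needs successive minima.

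The main obstacle is the tree case of the well-order. The naive choice---ordering trees by number of nodes---fails, because comparing the sizes of two trees is not tree-regular (it is the tree analogue of $a^n b^n$), so tree comparison cannot be reduced to a length comparison of encodings. I am thus forced into the positional lexicographic order, and two facts need care: (i) its tree-regularity, which rests on "two subtrees are identical" being tree-regular so that the first-difference node can be pinned down by a top-down automaton; and (ii) its well-foundedness, which is what both implications use. For (ii) one observes that, although the order type exceeds $\omega$ as soon as a $\prec$-small symbol has positive rank (then some trees have infinitely many predecessors), the order still admits an order-preserving map into the ordinals defined by structural recursion on trees, so there is no infinite $\prec$-descending chain. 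I should also remark that the comb lemmas (\Cref{lem:comb}, \Cref{lem:tree-comb}) do not shortcut this step: general tree combs are only $\triangleleft$-nested, not domain-nested, and need not be increasing for any natural tree order, so the detour through a genuine well-order seems unavoidable.
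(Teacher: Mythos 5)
Your first reduction (undirected to directed via $R \cap R^{-1}$) and the word case of your second reduction are both fine; for words, length-lexicographic order is indeed a regular well-order of type $\omega$, and the Erd\H{o}s--Szekeres/successive-minima argument goes through. The genuine gap is in the tree case: the positional (pre-order, first-difference) lexicographic order you propose is \emph{not} well-founded, for essentially the same reason that plain lexicographic order on words is not, and your correctness argument uses well-foundedness in both implications. Concretely, take unary symbols $a,b$, a nullary symbol $c$, and any linear order on $\{a,b\}$, say $a<b$; then the chain trees $t_n$ corresponding to $a^n b c$ satisfy $t_0 \succ t_1 \succ t_2 \succ \cdots$, since the first differing node of $t_{n+1}\otimes t_n$ carries $(a,b)$. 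So the claimed order-preserving map into the ordinals does not exist. The failure is not merely in the proof: with $R=\{(t_m,t_n) \mid m<n\}$ the sequence $(t_n)_{n\ge 1}$ is an infinite directed clique, but the $\prec$-sorted version of every pair $\{t_m,t_n\}$ with $m<n$ is $(t_n,t_m)$, which is not in $R$; hence $R''$ contains no pair at all and has no infinite undirected clique, and your reduction gives the wrong answer on this instance.

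The paper sidesteps the need for a regular well-order on the tree domain altogether: it passes to the relation $R'$ on $A\times\N$ (indices encoded in unary as an extra convolution track) with $((a,i),(b,j))\in R'$ iff $a\neq b$ and either $(a,b)\in R$ with $i<j$, or $(b,a)\in R$ with $j<i$. A directed clique $(a_i)_{i\ge 1}$ yields the undirected clique $((a_i,i))_{i\ge 1}$, and conversely every undirected clique in $R'$ has pairwise distinct indices (two equal indices can never satisfy the membership condition), so sorting by index recovers a directed clique; the order used for orientation lives on the freshly attached tags, where well-foundedness is free. If you want to keep your ``well-order the domain'' strategy for trees, you would need a genuinely well-founded tree-regular linear order and a proof of both its regularity and its well-foundedness, which is exactly the step your positional lexicographic order fails to supply.
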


\begin{proof}
We first reduce the undirected version to the directed version. Let $R \subseteq A \times A$ be given by an NFA (resp. NTA) $\A$. Then we define the relation $R' := \{(a,b) \in A \times A \mid (a,b) \in R \wedge (b,a) \in R\}$. Clearly, $R'$ is a (tree-)regular relation and an NFA (resp. NTA) recognizing $R'$ can be constructed in logspace from $\A$. Moreover, we have that $R$ has an infinite undirected clique if and only if $R'$ has an infinite directed clique.

For the reverse reduction, let $R \subseteq A \times A$ be given by an NFA (resp. NTA) $\A$. We define the relation
\begin{multline*}
R' := \{((a,i),(b,j)) \in (A \times \N)^2 \mid a \ne b \\ \wedge ((a,b) \in R \wedge i < j \vee (b,a) \in R \wedge j < i)\}.
\end{multline*}
It is easy to see that $R'$ can be encoded as a (tree-)regular relation and an NFA (resp. NTA) recognizing this relation can be constructed in logspace from $\A$. It holds that $R$ has an infinite directed clique if and only if $R'$ has an infinite undirected clique. Indeed, if $R$ has an infinite directed clique $(a_i)_{i \ge 1}$, then we can number the elements and get an infinite undirected clique $((a_i,i))_{i \ge 1}$ in $R'$. Conversely, if $R'$ has an infinite undirected clique $((a_i,n_i))_{i \ge 1}$, then $(a_{i_j})_{j \ge 1}$ with $n_{i_j} < n_{i_{j'}}$ for $j < j'$ is an infinite directed clique in $R$.
\end{proof}

\section{Constructions of Büchi automata}\label{sec:app-buchi}
\dettreebuchi*

\begin{proof}
Let $\A' = (Q,\Sigma_\bot^2,\Delta,q_{\mathrm{in}})$ be the NTA that is obtained by reverting the transitions of the D$\uparrow$TA $\A$.
Clearly, $\A'$ has the same runs as $\A$ on trees.
Let $\bm{t}$ be a comb that forms an infinite clique in $R$.
Since any NTA is a special ATA, we can apply \cref{lem:tree-run-comb} on $\A'$ and $\bm{t}$
to get a subcomb $\bm{s}$ of $\bm{t}$ generated by $(\bm{\alpha},\bm{\beta})$,
runs $\rho(s_i,s_j)$ of $\A'$ on $s_i \otimes s_j$, and context forests $\kappa_i, \lambda_i, \mu_{i,j}, \nu_{i,j}$ such that
\[\rho(s_i,s_j) = \kappa_1 \dots \kappa_{i-1} \lambda_i \mu_{i,i+1} \dots \mu_{i,j-1} \nu_{i,j}\]
is a decomposition compatible with
the $(\bm{\alpha},\bm{\beta})$-decomposition of $s_i \otimes s_j$
for all $i < j$.
Moreover, we have that $\mu_{i,j}$ and $\nu_{i,j}$ only depend on $j$
since there are unique runs of $\A$ on 
$(\varepsilon \otimes \beta_j)(\varepsilon \otimes \alpha_{j+1})$ and $\varepsilon \otimes \alpha_{j}$.
Thus, we can just write $\mu_j$ and $\nu_j$ for all $j > 1$.

We now construct an NBTA $\B$ over the alphabet $\Omega$ which accepts precisely all comb encodings $\enc(\bm{\alpha},\bm{\beta})$
of a generator $(\bm{\alpha},\bm{\beta})$ with the above properties.
Since the set of all comb encodings is regular, we can assume that the input tree is already a valid comb encoding.
A state in $\B$ consists of four components in which $\kappa_j, \lambda_j, \mu_j, \nu_j$ are simulated.
To handle the special case where only $\kappa_0$, $\lambda_0$ are simulated,
we add a state $\bot$ to $\A'$ with transitions
$\bot \xrightarrow{(a,b)} \bm{q}$ for all symbols $(a,b) \in \Sigma_\bot^2$ of rank $r$, and $\bm{q} \in Q_\bot^r$.
The NBTA $\B$ has the state set $Q_\bot^4$,
initial state $(q_{\mathrm{in}},q_{\mathrm{in}},\bot,\bot)$, and the transitions
\begin{itemize}
\item $(p,s,q,t) \xrightarrow{(a,b)} \bm{p} \otimes \bm{s} \otimes \bm{q} \otimes \bm{t}$
if $\A'$ contains the transitions
$p \xrightarrow{(b,b)} \bm{p}$,
$s \xrightarrow{(a,b)} \bm{s}$,
$q \xrightarrow{(\bot,b)} \bm{q}$,
$t \xrightarrow{(\bot,a)} \bm{t}$,
\item $(p,q,q,\bot) \xrightarrow{\#} (p,p,q,q)$ for all $p,q \in Q$.
\end{itemize}
Correctness follows from the previous observations.
\end{proof}

\treeramsey*

\begin{proof}
First observe that from the ABTA $\B$ in \cref{lem:tree-buchi} we can construct in exponential time an NTA $\C$ over $\Sigma$ which accepts $t_1 \in \T_\Sigma$ if and only if the encoding of some comb $(t_i)_{i \ge 1}$ is accepted by $\B$.
Indeed, we first transform $\B$ into an NBTA $\D = (Q_\D,\Omega,\Delta_D,q_0^\D,F_\D)$ which can be done in exponential time \cite[Theorem 1.2]{muller1995simulating}.
From $\D$ we construct $\C = (Q_\D,\Sigma,\Delta_\C,q_0^\D)$ such that for all $q \in Q_\D$, $a \in \Sigma$, and $p_i \in Q_\D$ for $1 \le i \le \rk(a)$ we let
$(q, a, (p_i)_{i \le \rk(a)}) \in \Delta_\C$
if and only if there exist $b \in \Sigma_\bot$ and $p_i \in Q_\D$ for $\rk(a) < i \le \rk\svector{a}{b}$ such that
\[(q, \svector{a}{b}, (p_i)_{i \le \rk\svector{a}{b}}) \in \Delta_\D\]
and $\D$ accepts some tree from state $p_i$ for all $i > \rk(a)$.
Note that $\C$ can be constructed in polynomial time given $\D$ since we need to perform a polynomial number of non-emptiness checks on $\D$, each of which takes quadratic time.
The NTA $\C$ satisfies that (i) for every infinite clique $\bm{t}$ of $R$ some element $t_i$ is accepted by $\C$ and 
(ii) if $t$ is accepted by $\C$, then $t$ belongs to an infinite clique of $R$.

Given an NTA $\A$ for $R \subseteq (\T_\Sigma)^{k+2}$.
We first construct an NTA $\A'$ over $\Sigma_\bot^{2k+2}$
accepting the binary relation
\[
	R' = \{ (s \otimes c_1 \otimes \dots \otimes c_k, t \otimes c_1 \otimes \dots \otimes c_k) \mid (s,t,\bm{c}) \in R \}.
\]
Let $\C'$ be the NTA described above that accepts at least one tree from each infinite $R'$-clique
and only accepts elements of infinite $R'$-cliques.
Projecting away the first component yields the desired NTA for $\{ \bm{c} \in (\T_\Sigma)^k \mid \ram x,y \colon R(x,y,\bm{c}) \}$.

If $R$ is given as D$\uparrow$TA, then one can construct an NBTA in polynomial time instead of an ABTA
using \cref{lem:det-tree-buchi}.
Then $\C'$ can be constructed in polynomial time.
\end{proof}

\transitivebuchi*

\begin{proof}
We view the NTA $\A$ for $R$ as an ATA
and construct the ABTA $\B$ as in the proof of \Cref{lem:tree-buchi}
which accepts precisely all comb encodings $\enc(\bm{\alpha},\bm{\beta})$ with the properties from \cref{lem:tree-run-comb}.
Then, we omit all states in mode 3.
More formally, let $\C = (Q_{\C},\Omega,\delta_{\C},q_0^{\C},Q_{\C})$ be the ABTA with state set
$Q_{\C} = \{ q_0^{\C} \} \cup (Q \times \{1,2,4\})$,
and the same transitions as $\B$ except for
$
\delta_{\C}((q,2),\#) := ((q,4),1)
$
for all $q \in Q$.
Clearly, all comb encodings with the properties from \cref{lem:tree-run-comb} are still accepted by $\C$.
Conversely, if the encoding $\enc(\bm{\alpha},\bm{\beta})$ of a comb $\bm{t}$ is accepted by $\C$ then
for all $i \ge 1$, $\C$ simulates a run of $\A$ on $t_i \otimes t_{i+1}$ as argued in \Cref{lem:tree-buchi}.
By transitivity we obtain $(t_i,t_j) \in R$ for all $i < j$.

It remains to convert $\C$ into an NBTA of polynomial size.
Observe that $\C$ only universally branches in the root and at $\#$-nodes into a state of mode 1 and 2.
Furthermore, states of mode 2 transition to mode 4 when reading $\#$.
Hence, any run contains for each node $v$ of the input tree at most three run nodes referring to $v$.
Thus, we apply a standard powerset construction to convert $\C$ into an equivalent NBTA $\D$,
where we restrict to subsets of $Q_{\C}$ of size at most three.
We make all states in $\D$ final, since $\C$ accepts any tree with a run.
Finally, we take the product construction of $\D$ with an NBTA for $\Enc_\Sigma$, to obtain the desired NBTA in polynomial time.
\end{proof}

\cotransitive*

\begin{proof}
Given an NTA $\A = (Q,\Sigma_\bot^2,\Delta,q_{\mathrm{in}})$ for a co-transitive relation $R \subseteq \T_\Sigma \times \T_\Sigma$.
We can prove a statement similar to \Cref{lem:easy-combs} for tree combs which are generated by a monadic generator.

Suppose that $R$ has an infinite clique.
By \Cref{lem:cotrans-cliques} and \Cref{lem:tree-run-comb}
there exist an infinite clique $\bm{t}$ in $R$ with a monadic generator $(\bm{\alpha},\bm{\beta})$,
and context forests $\kappa_i, \lambda_i, \mu_{i,j}, \nu_{i,j}$ for $i < j$ such that
$\rho_{i,j} = \kappa_1 \dots \kappa_{i-1} \lambda_i \mu_{i,i+1} \dots \mu_{i,j-1} \nu_{i,j}$
is a decomposition compatible with the $(\bm{\alpha},\bm{\beta})$-decomposition of $t_i \otimes t_j$.
In particular, all context forests $\kappa_i, \lambda_i, \mu_{i,j}$ have exactly one hole,
and hence $\kappa_i, \lambda_i, \mu_{i,j}, \nu_{i,j}$ are in fact contexts.

Moreover, we can ensure that $\mu_{i,j} = \mu_{i',j}$ and $\nu_{i,j} = \nu_{i',j}$ for all $i < i' < j$
and can therefore just write $\mu_j$ and $\nu_j$ for all $j > 1$, respectively.
For the proof we can reason similarly as in \Cref{lem:easy-combs}
by applying Ramsey's theorem to ensure that all contexts $\mu_{i,j}$ carry the same state in the root.
This allows us to verify the runs using a polynomially sized NBTA on the comb encoding.

Finally, we can construct in polynomial time a Büchi tree automaton $\B$
over the alphabet $\Omega = \Sigma_\bot^2 \cup \{\#\}$
which accepts all comb encodings
$\enc(\bm{\alpha},\bm{\beta})$ of a monadic generator $(\bm{\alpha},\bm{\beta})$
for which runs of the form $\kappa_j, \lambda_j, \mu_j, \nu_j$ as above exist.
Since the set of all monadic comb encodings is regular, we can assume that the input tree is already
a valid monadic comb encoding.
A state in $\B$ consists of four components in which the runs $\kappa_j, \lambda_j, \mu_j, \nu_j$ are simulated.
To handle the special case where only $\kappa_0$, $\lambda_0$ are simulated,
we add a state $\bot$ to $\A$ with transitions
$\bot \xrightarrow{(a,b)} \bm{q}$ for all symbols $(a,b) \in \Sigma_\bot^2$ of rank $r$, and $\bm{q} \in Q_\bot^r$.
The Büchi tree automaton $\B$ has the state set $Q_\bot^4$,
initial state $(q_{\mathrm{in}},q_{\mathrm{in}},\bot,\bot)$, and the transitions
\begin{itemize}
\item $(p,s,q,t) \xrightarrow{(a,b)} \bm{p} \otimes \bm{s} \otimes \bm{q} \otimes \bm{t}$
if $\A$ contains the transitions
$p \xrightarrow{(b,b)} \bm{p}$,
$s \xrightarrow{(a,b)} \bm{s}$,
$q \xrightarrow{(\bot,b)} \bm{q}$,
$t \xrightarrow{(\bot,a)} \bm{t}$,
\item $(p,q,q,\bot) \xrightarrow{\#} (p,p,q,q)$ for all $p,q \in Q$.
\end{itemize}
Correctness follows from the previous observations.
\end{proof}

\section{Proof of Proposition~\ref{prop:clique-rec}}\label{sec:clique-rec}
\begin{lemma}
The infinite clique problem is logspace reducible to recurrent reachability over (tree-)regular relations.
\end{lemma}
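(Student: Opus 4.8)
The plan is to reduce the existence of an infinite clique in a (tree-)regular relation $R \subseteq A \times A$ to a single recurrent-reachability query $a_\star \in \rec(L)[R']$. The two features that separate infinite cliques from transitive paths are (i) cliques consist of pairwise distinct elements whereas transitive paths may repeat, and (ii) cliques carry no distinguished start vertex whereas recurrent reachability fixes an initial element. I would neutralize both features by small, logspace-computable modifications of $R$, treating the word and tree cases uniformly.

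First I would pass from $R$ to its irreflexive part $R_{\mathrm{irr}} = R \setminus \{(a,a) \mid a \in A\} = R \cap \{(a,b) \mid a \neq b\}$. Since the inequality relation is (tree-)regular, $R_{\mathrm{irr}}$ is again (tree-)regular and an automaton for it is obtained in logspace by a product construction with an automaton for $\{(a,b)\mid a\neq b\}$; as the latter is recognized deterministically, this product preserves determinism. The key observation is that $R$ and $R_{\mathrm{irr}}$ have exactly the same infinite cliques, because a clique consists of distinct elements and $R$, $R_{\mathrm{irr}}$ agree off the diagonal. Moreover, since $R_{\mathrm{irr}}$ is irreflexive, every transitive path in $R_{\mathrm{irr}}$ automatically has pairwise distinct entries: a repetition $a_i = a_j$ with $i < j$ would require the self-loop $(a_i,a_i) \in R_{\mathrm{irr}}$. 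Hence over $R_{\mathrm{irr}}$ the notions of transitive path and infinite clique coincide, which disposes of obstacle (i).

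To handle obstacle (ii), I would adjoin a fresh element $a_\star \notin A$, realized as a fresh one-letter word (resp. a single-node tree over an enlarged alphabet), and set
\[
  R' = R_{\mathrm{irr}} \cup (\{a_\star\} \times A)
\]
over the domain $A \cup \{a_\star\}$, with $L = A \cup \{a_\star\}$ and initial element $a_\star$. Again $R'$ is (tree-)regular and constructible in logspace, the two parts of the union being separated by the (root) symbol of the first component of the convolution, which also keeps the construction deterministic when $R$ is. Since $a_\star$ has no incoming edge, any transitive path of $R'$ starting in $a_\star$ has the form $a_\star, b_2, b_3, \dots$ with all $b_i \in A$ and $(b_i,b_j) \in R_{\mathrm{irr}}$ for $2 \le i < j$; conversely every infinite clique $(a_i)_{i\ge 1}$ of $R$ yields the transitive path $a_\star, a_1, a_2, \dots$ of $R'$. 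Because $L$ is the whole domain, the Büchi condition ``visit $L$ infinitely often'' is vacuous for an infinite path, so $a_\star \in \rec(L)[R']$ holds precisely when $R'$ admits an infinite transitive path from $a_\star$, i.e. precisely when $R$ has an infinite clique.

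I expect the only genuinely delicate point to be the soundness of passing to $R_{\mathrm{irr}}$, i.e. the equivalence ``$R_{\mathrm{irr}}$ has a transitive path $\iff$ $R$ has an infinite clique'': the forward direction relies on irreflexivity forcing \emph{global} (not merely consecutive) distinctness along a transitive path, while the backward direction uses that cliques are blind to self-loops. Everything else is a routine closure-under-intersection-and-union argument that stays within logspace and within the (tree-)regular relations, so the reduction applies verbatim to both the word and the tree setting.
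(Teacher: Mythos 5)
Your proof is correct, but it takes a genuinely different route from the paper's. The paper enforces distinctness of the elements along the transitive path via a strictly increasing size measure: in the word case it restricts $R$ to pairs $(v,w)$ with $|v|<|w|$ (justified by a pigeonhole argument that every infinite clique has a length-increasing subsequence) and uses $\varepsilon$ as the initial element; in the tree case, where no regular size comparison is available, it marks a distinguished root-to-leaf path with primed symbols and requires that path to grow strictly, invoking K\"onig's lemma to show that every infinite clique has a subsequence along which some path grows. You instead pass to the irreflexive part $R_{\mathrm{irr}}$ and observe that, because a transitive path must relate \emph{all} pairs $i<j$ (not merely consecutive ones), any repetition $a_i=a_j$ would force a self-loop, so transitive paths in $R_{\mathrm{irr}}$ are automatically infinite cliques of $R$. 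You correctly flag this as the crux, and it is exactly right; it makes the reduction uniform across words and trees and dispenses with the K\"onig's-lemma and path-marking machinery of the paper's tree case. The fresh-initial-vertex device is essentially the same in both proofs (the paper uses $\varepsilon$, respectively a fixed one-node tree $t_0$, in the role of your $a_\star$). One small caveat: your parenthetical claim that the union $R_{\mathrm{irr}}\cup(\{a_\star\}\times A)$ stays deterministic by inspecting the root symbol is not immediate for bottom-up deterministic tree automata (the root is read last, so one needs a product construction rather than a case split), but determinism preservation is not required for this direction of the reduction --- the paper only needs it for the converse reduction from recurrent reachability to the infinite clique problem.
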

\begin{proof}
The word case is easy. Let $R \subseteq \Sigma^* \times \Sigma^*$ be given by an NFA $\A$. We have that $R$ has an infinite clique if and only if there exists a sequence $(w_i)_{i \ge 1}$ of words such that $(w_i,w_j) \in R$ and $|w_i| < |w_j|$ for all $1 \le i < j$. We define the relation $R' \subseteq \Sigma^* \times \Sigma^*$ such that
\begin{itemize}
\item $(\varepsilon,w) \in R'$ for all $w \in \Sigma^+$ and
\item $(v,w) \in R'$ iff $(v,w) \in R$ and $|v| < |w|$ for all $v,w \in \Sigma^+$.
\end{itemize} 
Clearly, the relation $R'$ is regular and an NFA that accepts $R'$ is implicitly logspace computable.
Since a path in $R'$ cannot visit a word more than once,
it holds that $R$ has an infinite clique if and only if $\varepsilon \in \rec(\Sigma^*)[R']$.

In the tree case we use a similar idea as in the word case for one path of the trees. Let $R \subseteq T_\Sigma \times T_\Sigma$ be given by an NTA $\A$. We have that $R$ has an infinite clique if and only if there exists a sequence $(t_i)_{i \ge 1}$ of trees such that $(t_i,t_j) \in R$ for all $1 \le i < j$ and the domain of one path strictly grows indefinitely in the sequence.
Such a sequence exists since the union of the domains of all $t_i$ is an infinite ranked tree domain which by K{\"o}nig's lemma has an infinite path.
Let $t_0 \colon \{\varepsilon\} \to \Sigma$ with $t_0(\varepsilon) := a$ for some $a \in \Sigma$.
We define the relation $R' \subseteq T_{\Sigma'} \times T_{\Sigma'}$ with $\Sigma' = \Sigma \cup \{a' \mid a \in \Sigma\}$ such that for all $s,t \in T_{\Sigma'} \setminus \{t_0\}$ we have
\begin{itemize}
\item $(t_0, t) \in R'$ and
\item $(s,t) \in R'$ iff the non-primed versions of $s$ and $t$ are in relation in $R$ and there exists exactly one path $p$ from the root to a leaf in $s$ labeled with primed symbols, exactly one path $q$ from the root to a leaf in $t$ labeled with primed symbols, and the domain of $p$ is a strict subpath of the domain of $q$.
\end{itemize}
The relation $R'$ is tree-regular and an NTA for $R'$ is implicitly logspace computable. Indeed, the NTA for $R'$ nondeterministically guesses the path of primed labels in the convolution and verifies that all other paths have non-primed labels. Moreover, it can check if the path in $s$ is padded and therefore a strict subpath of the path in $t$. Finally, it can simulate the automaton $\A$ for $R$ on the non-primed versions of $s$ and $t$ by just treating a symbol $a'$ as $a$.

We claim that $R$ has an infinite clique if and only if $t_0 \in \rec(T_{\Sigma'})[R']$. We already argued the ``only if'' direction. For the ``if'' direction note that the path property prevents the witnessing sequence from visiting the non-primed version of a tree more than once. 
\end{proof}

\begin{lemma}
Recurrent reachability is logspace reducible to the infinite clique problem over (tree-)regular relations.
\end{lemma}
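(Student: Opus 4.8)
The plan is to dualize the preceding reduction: I must encode the fixed start $a_1$ and the Büchi recurrence ``visit $L$ infinitely often'' into a plain infinite-clique instance, and then create the pairwise distinctness that a clique requires but a recurrent path need not have. The first step is to discharge the start and recurrence conditions by restricting to $G := \{ c \in L : (a_1,c) \in R \}$, which is (tree-)regular and logspace-computable from the given automata for $R$ and $L$. I would prove that $a_1 \in \rec(L)[R]$ holds if and only if $R$ restricted to $G \times G$ has an infinite transitive path (repetitions allowed). For the forward direction, take a transitive path $a_1, a_2, \dots$ visiting $L$ infinitely often and pass to the infinite subsequence $c_1, c_2, \dots$ of those members $a_i$ with $i > 1$ that lie in $L$; transitivity of the path gives $(a_1,c_i) \in R$ and $(c_i,c_j) \in R$ for $i < j$, so every $c_i$ lies in $G$ and $(c_i)_i$ is a transitive path in $R \cap (G \times G)$. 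For the converse, prefixing any transitive path through $G$ with $a_1$ yields a transitive path from $a_1$ that visits $L$ infinitely often, since membership in $G$ guarantees both $(a_1, c_i) \in R$ and $c_i \in L$. This is the step that removes the ``infinitely often'' recurrence: restricting to the $L$-subsequence turns ``infinitely often in $L$'' into the per-element condition ``in $L$'', and reachability from $a_1$ into ``$R$-related from $a_1$'', both absorbed into $G$.

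The second step converts ``infinite transitive path in $R \cap (G\times G)$'' into ``infinite clique''. The subtlety here---and the point I expect to be the main obstacle---is that a transitive path may repeat elements, for instance by looping on a single element $c$ with $(c,c) \in R$, whereas a clique must be pairwise distinct; so restricting to $G$ and asking directly for a clique would be unsound. I would resolve this exactly as in the tagging argument of \cref{prop:directed-undirected}, by attaching an unbounded counter. Over the domain $G \times \N$ I define the (tree-)regular relation
\[
 R' := \{\, ((c,m),(d,n)) : (c,d) \in R,\ m < n \,\}.
\]
In any infinite clique of $R'$ the counters read in clique order are strictly increasing, so the first components project to a transitive path in $R \cap (G\times G)$; conversely a transitive path $(c_i)_i$ produces the clique $((c_i,i))_i$, whose elements are pairwise distinct thanks to the distinct counters even when the $c_i$ coincide. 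Chaining the two steps gives that $a_1 \in \rec(L)[R]$ if and only if $R'$ has an infinite clique.

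Finally I would verify regularity, the logspace bound, and the additional guarantees claimed in \cref{prop:clique-rec}. The counter is a standard unary gadget---a length in the word case and a monadic chain in the tree case---carried on a dedicated track of the convolution, so that $m < n$ is decided from the $\bot$-padding inside the convolution using only bounded memory, while $(c,d) \in R$ reuses the automaton for $R$ and $c,d \in G$ is a product with the automaton for $G$; hence $R'$ and its domain are (tree-)regular and their automata are computable in logarithmic space. For the refinements in \cref{prop:clique-rec}, transitivity is preserved: if $R$ is transitive then so is $R'$, because $m < n < p$ together with transitivity of $R$ on $G$ yields the composed pair. Likewise, since the counter is compared within the convolution with bounded state and the passage to $G$ only hardcodes the first argument to $a_1$ and forms products, the construction can be carried out preserving determinism of the automaton, as needed for the D$\uparrow$TA case.
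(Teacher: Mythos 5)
Your reduction is correct and is essentially the paper's own: both pair each element with a unary-encoded natural-number counter to manufacture infinitely many distinct copies (so that a repeating transitive path becomes a genuine clique), and both absorb the conditions ``$R$-reachable from the start element'' and ``member of $L$'' into the instance by hardwiring the initial element and simulating the automaton for $L$. The only cosmetic differences are that you place these two conditions in the domain $G$ rather than in the relation itself, and you additionally require $m<n$ on the counters, which the paper omits as unnecessary; neither change affects correctness, the logspace bound, or the preservation of transitivity and determinism.
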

\begin{proof}
We use the same idea for both the word and tree case. Let $R \subseteq A \times A$ be a (tree-)regular relation given by an NFA (resp. NTA) $\A$ and $L \subseteq A$ be a (tree-)regular language given by an NFA (resp. NTA) $\B$. Furthermore, let $a_0 \in A$ be the initial word (resp. tree). We define the relation $R' \subseteq (A \times \N) \times (A \times \N)$ such that $((a,m),(b,n)) \in R'$ if and only if
\begin{itemize}
\item $(a_0,a) \in R$,
\item $(a,b) \in R$, and
\item $a \in L$.
\end{itemize}
Intuitively, we create infinitely many copies of every word (resp. tree) by taking the direct product with the integers. This allows the witnessing sequence of the infinite clique to visit a word (resp. tree) several times. Furthermore, in $R'$ we only consider the words (resp. trees) that are in relation with $a_0$ to ensure that $a_0$ fulfills the conditions of the initial word (resp. tree). With the third condition we ensure that every word (resp. tree) of the infinite clique is contained in $L$. Thus, $a_0 \in \rec(L)[R]$ if and only if $R'$ has an infinite clique.

Note that $R'$ is (tree-)regular and an NFA (resp. NTA) for it is implicitly logspace computable. To this end, we represent the integers in unary as words (resp. paths) and take the convolution with the corresponding word (resp. tree). The first condition can be checked by hardwiring $a_0$ into the automaton. The second and third conditions can be ensured by simulating $\A$ and $\B$, respectively.

Note that if $R$ is transitive, then so is $R'$.
Moreover, if $\A$ and $\B$ are deterministic, then so is the automaton for $R'$.
\end{proof}

\section{Recurrent reachability with generalized Büchi condition}
\label{sec:app-generalized}

\begin{proposition}\label{prop:general-buchi-words}
Recurrent reachability with generalized Büchi condition is decidable in polynomial space over words.
\end{proposition}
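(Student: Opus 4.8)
The plan is to reduce recurrent reachability with a generalized Büchi condition to a single evaluation of the plain Ramsey quantifier and then invoke \cref{thm:word-ramsey}, taking care to keep the whole computation in polynomial space. First I would rewrite the acceptance condition in \emph{round-robin} form: $a_0 \in \rec(L_1,\dots,L_k)[R]$ holds if and only if there is a transitive path $a_0, a_1, a_2, \dots$ in $R$ with $a_i \in L_{((i-1) \bmod k)+1}$ for all $i \ge 1$. The ``if'' direction is immediate, and for ``only if'' one greedily thins out an arbitrary recurrent transitive path into one that cyclically visits $L_1, L_2, \dots, L_k$ in order; this is legitimate because any subsequence of a transitive path is again a transitive path, and each $L_j$ is hit infinitely often by hypothesis.

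Next I would group the round-robin sequence into consecutive blocks of length $k$, turning it into a sequence of $k$-tuples $\bm{T}_1, \bm{T}_2, \dots \in A^k$, and define a regular relation $R' \subseteq A^k \times A^k$ by declaring $(\bm{u},\bm{v}) \in R'$ iff the length-$(2k+1)$ sequence $a_0, u_1, \dots, u_k, v_1, \dots, v_k$ is a clique in $R$ (all ordered pairs lie in $R$) and $u_i \in L_i$ for all $i$. The key equivalence to establish is that $a_0 \in \rec(L_1,\dots,L_k)[R]$ holds iff $\ram x,y \colon R'(x,y)$ or $\exists x \colon R'(x,x)$ is valid. For the forward direction I split on whether the block sequence has infinitely many distinct tuples: if so, the distinct blocks form an infinite directed clique of $R'$; if not, some block $\bm{T}$ repeats, so $(\bm{T},\bm{T}) \in R'$ follows from transitivity of the underlying $R$-clique. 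For the converse, flattening an infinite $R'$-clique (or infinitely iterating a self-looping tuple) recovers a round-robin transitive path from $a_0$, where the only point needing care is that all cross-block pairs of $R$ are present; these are exactly the ``first-block-to-second-block'' pairs witnessed inside each $R'$-edge, respectively inside the self-loop $(\bm{T},\bm{T})$. The disjunct $\exists x \colon R'(x,x)$ is precisely what captures eventually periodic witnesses, which the Ramsey quantifier — demanding pairwise distinct elements — cannot express on its own.

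Then I would bound the complexity. An NFA $\A'$ for $R'$ over the convolution alphabet $\Sigma_\bot^{2k}$ arises as a product of the $\binom{2k+1}{2}$ copies of $\A$ checking the individual $R$-pairs together with the automata for the $L_i$; this product has exponentially many states in the input, but each state is describable in polynomial space and its transitions are computable on the fly, so $\A'$ is available only \emph{implicitly}, never written down explicitly. Applying the logspace construction of \cref{thm:word-ramsey} (via the Büchi automaton of \cref{prop:combs-buechi}) to $\A'$ yields a Büchi automaton of size polynomial in $|\A'|$, hence exponential in the genuine input; testing its nonemptiness lies in $\NL$ in that size, i.e.\ in space logarithmic in $|\A'| = 2^{\mathrm{poly}}$, which is polynomial in the input. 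The second disjunct $\exists x \colon R'(x,x)$ is a nonemptiness test for $\A'$ restricted to the diagonal, again an $\NL$-in-$|\A'|$ computation and therefore within polynomial space.

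The main obstacle I expect is not the combinatorics of the equivalence but the complexity bookkeeping: one must verify that the exponential blow-up in $|\A'|$ is only ever accessed implicitly, so that the nominally logarithmic-space (resp.\ $\NL$) subroutines genuinely run in space logarithmic in $|\A'|$ — polynomial in the true input — and that the on-the-fly recomputation of the transitions of $\A'$ itself stays within polynomial space. With Savitch's theorem collapsing the resulting nondeterministic polynomial-space bound, this gives the claimed $\PSPACE$ upper bound.
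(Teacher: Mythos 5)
Your proposal is correct and follows essentially the same route as the paper: the same round-robin reformulation, the same $2k$-tuple relation $R'$ with the disjunction $\ram x,y\colon R'(x,y) \vee \exists x\colon R'(x,x)$, and the same complexity argument of running the $\NL$ procedure of \cref{thm:word-ramsey} on the exponentially large product automaton. If anything, you are slightly more explicit than the paper about the on-the-fly access to $\A'$ and about why the self-loop disjunct and the doubling to $2k$ components are needed, which is a welcome clarification rather than a deviation.
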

\begin{proof}
We give a $\PSPACE$-reduction from the generalized version to the classical version.
Let the relation $R \subseteq \Sigma^* \times \Sigma^*$ be given by an NFA $\A$, the languages $L_1,\dots,L_k \subseteq \Sigma^*$ be given by NFAs $\A_1,\dots,\A_k$, and $s_0 \in \Sigma^*$ be the initial word.
First observe that $s_0 \in \rec(L_1,\dots,L_k)[R]$ if and only if there is a sequence of words $(s_i)_{i \ge 1}$ such that $(s_i,s_j) \in R$ for all $0 \le i < j$ and $s_i \in L_{((i-1) \text{ mod } k) + 1}$ for all $i \ge 1$.
We define the relation
\[R_i := \{(w_1,\dots,w_{2k}) \in (\Sigma^*)^{2k} \mid w_i \in L_i\}\]
for all $i \in [1,k]$.
Moreover, for all $1 \le i < j \le 2k$ let
\[R_{i,j} := \{(w_1,\dots,w_{2k}) \in (\Sigma^*)^{2k} \mid (w_i,w_j) \in R\}.\]
Finally, we define the relation
\[R_{s_0,i} := \{(w_1,\dots,w_{2k}) \in (\Sigma^*)^{2k} \mid (s_0,w_i) \in R\}\]
for all $i \in [1,k]$.
Then $s_0 \in \rec(L_1,\dots,L_k)[R]$ if and only if 
\[\varphi := \ram x,y \colon R'(x,y) \vee \exists x \colon R'(x,x)\] 
is valid where
\[R' := \bigcap_{i=1}^{k} R_i \cap \bigcap_{1 \le i < j \le 2k} R_{i,j} \cap \bigcap_{i=1}^k R_{s_0,i} \subseteq (\Sigma^*)^k \times (\Sigma^*)^k.\]
Note that the product automaton $\A'$ that recognizes $R'$ can be constructed in polynomial space.
By \cref{thm:word-ramsey} validity of the first disjunct of $\varphi$ can be checked in nondeterministic logspace given $\A'$.
It is easy to see that validity of the second disjunct of $\varphi$ can also be checked in nondeterministic logspace given $\A'$.
This yields a $\PSPACE$-algorithm in total.
\end{proof}

\begin{proposition}\label{prop:general-buchi-trees}
Recurrent reachability with generalized Büchi condition is decidable in exponential time over trees.
\end{proposition}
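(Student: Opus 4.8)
The plan is to mirror the word-case argument (\cref{prop:general-buchi-words}), reducing the generalized condition to a single Ramsey quantifier, but to replace the product automaton—which would incur an exponential blow-up—by an alternating tree automaton of polynomial size, so that the exponential cost is paid only once when invoking \cref{thm:tree-ramsey}.

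First I would record the same combinatorial reformulation as in the word case: $a_0 \in \rec(L_1,\dots,L_k)[R]$ if and only if there is a sequence $(a_i)_{i \ge 1}$ with $(a_i,a_j) \in R$ for all $0 \le i < j$ and $a_i \in L_{((i-1) \text{ mod } k)+1}$ for all $i \ge 1$ (a transitive path from $a_0$ cyclically visiting $L_1,\dots,L_k$; the forward direction extracts such a subsequence from any recurrent witness). Grouping this sequence into consecutive blocks of length $k$, I define a relation $R' \subseteq (\T_\Sigma)^k \times (\T_\Sigma)^k$, viewed as a binary relation over the domain $(\T_\Sigma)^k$, so that $R'((a_1,\dots,a_k),(a_{k+1},\dots,a_{2k}))$ holds iff $a_0,a_1,\dots,a_{2k}$ form a clique of size $2k+1$ in $R$ and $a_i \in L_i$ for all $i \in [1,k]$. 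As in the word case, $R'$ is the intersection of the relations $R_i$ (checking $a_i \in L_i$), $R_{i,j}$ (checking $(a_i,a_j) \in R$), and $R_{a_0,i}$ (checking $(a_0,a_i) \in R$). I would then establish the equivalence $a_0 \in \rec(L_1,\dots,L_k)[R]$ iff $\ram x,y \colon R'(x,y) \vee \exists x \colon R'(x,x)$ is valid: an infinite clique of pairwise distinct blocks yields the first disjunct, while a recurring block yields the second, since a self-loop $R'(x,x)$ on $x = (c_1,\dots,c_k)$ forces $(c_i,c_i) \in R$ and hence a valid periodic transitive path $a_0, c_1,\dots,c_k, c_1,\dots,c_k,\dots$.

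The crux is to build an automaton for $R'$ in polynomial time. The domain elements are $k$-tuples, so each $(a_1,\dots,a_k) \in (\T_\Sigma)^k$ must first be encoded as a single tree over a polynomial alphabet; the flat convolution $a_1 \otimes \dots \otimes a_k$ would use the exponential alphabet $\Sigma_\bot^k$, so instead I encode each node label $(\sigma_1,\dots,\sigma_k) \in \Sigma_\bot^k$ by a path $\sigma_1(\sigma_2(\dots \sigma_k(\#_m)\dots))$ ending in a delimiter $\#_m$ of rank $m = \max_i \rk(\sigma_i)$ that carries the $m$ encoded subtrees, keeping the alphabet of size $O(|\Sigma_\bot| + \mathrm{maxrank})$. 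With this encoding $R'$ becomes a binary tree-regular relation whose convolution alphabet is polynomial. Since $R'$ is a conjunction of the polynomially many conditions $R_i, R_{i,j}, R_{a_0,i}$, alternation absorbs the intersection for free, so I can construct an ATA $\A'$ of polynomial size that, reading the convolution of two path-encoded operands, traverses the label paths component-by-component and runs the constituent automata conjunctively, branching into subtrees only at the delimiters. I expect the main difficulty to lie exactly here: verifying that the path-encoding is regularity- and clique-preserving—so that infinite cliques of the encoded $R'$ correspond precisely to the block sequences from the combinatorial characterization—and that $\A'$ faithfully and compactly simulates all conjuncts across the $\bot$-padded, differently shaped operands.

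Finally I would invoke \cref{thm:tree-ramsey}: applied to the polynomial-size ATA $\A'$ it produces in exponential time an NTA whose nonemptiness decides the first disjunct $\ram x,y \colon R'(x,y)$, and NTA nonemptiness is polynomial, so this stays in $\EXP$. For the second disjunct $\exists x \colon R'(x,x)$ I would intersect $\A'$ with the diagonal (restricting to convolutions whose two operands coincide) and test emptiness, which for an ATA is decidable in $\EXP$. Taking the disjunction of the two tests yields the claimed exponential-time decision procedure.
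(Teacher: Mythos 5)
Your proposal matches the paper's proof essentially step for step: the same reduction of the generalized condition to $\ram x,y \colon R'(x,y) \vee \exists x \colon R'(x,x)$ over blocks of length $k$, the same path-encoding of label tuples $\sigma_1(\sigma_2(\dots\sigma_k(\#_m)\dots))$ to keep the alphabet polynomial, the same use of alternation to absorb the polynomially many conjuncts $R_i$, $R_{i,j}$, and the initial-element relations into a polynomial-size ATA, and the same final appeal to \cref{thm:tree-ramsey}. The only detail worth making explicit is that the paper additionally conjoins a relation $R_p$ restricting both operands to valid path-encodings, which is the formal counterpart of the encoding-faithfulness concern you flag as the main difficulty.
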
 
\begin{proof}
We proceed similarly to the word case but we use ATAs to avoid the exponential blow-up for the product automaton.
Let the relation $R \subseteq \T_\Sigma \times \T_\Sigma$ be given by an ATA $\A = (Q,\Sigma_\bot^2,\delta,q_0)$, the languages $L_1,\dots,L_k \subseteq \T_\Sigma$ be given by ATAs $\A_i = (Q_i,\Sigma,\delta_i,q_0^i)$ for all $i \in [1,k]$, and $s_0 \in \T_\Sigma$ be the initial tree.
Note that we may assume that the relation and the languages are given by alternating automata since an NTA can be easily converted into an ATA in polynomial time.
Let $r := \max\{\rk(a) \mid a \in \Sigma\}$ and $\Omega := \Sigma_\bot \cup \{\#_i \mid 0 \le i \le r\}$ be a new ranked alphabet with $\rk(a) := 1$ for all $a \in \Sigma_\bot$ and $\rk(\#_i) := i$ for all $i \in [1,r]$.
For trees $t_1,\dots,t_n \in \T_\Sigma$ we define $p(t_1,\dots,t_n) \in \T_\Omega$ to be the tree $t_1 \otimes \dots \otimes t_n$ where each node labeled with $(a_1,\dots,a_n) \in \Sigma_\bot^n$ is replaced by a path $a_1(a_2(\dots a_n(\#_m)\dots))$ where $m := \max\{\rk(a_i) \mid 1 \le i \le n\}$.

Let 
\[R_p := \{(s,t) \in \T_\Omega^2 \mid s = p(t_1,\dots,t_k) \wedge t = p(t_{k+1},\dots,t_{2k})\}\] 
be the binary relation that checks if the trees are images under $p$.
Note that an ATA for $R_p$ can easily be constructed.
We define ATAs recognizing relations $R_i$ for all $i \in [1,k]$ and $R_{i,j}$ for all $1 \le i < j \le 2k$ with a similar meaning as in the word case. 
We start with the construction of the ATA 
\[\B_i = (Q_i^\B,\Omega^2,\delta_i^\B,(q_0,0))\]
for $R_i$. 
Intuitively, $\B_i$ checks if in $p(t_1,\dots,t_k) \otimes p(t_{k+1},\dots,t_{2k})$ we have that $t_i \in L_i$ for all $i \in [1,k]$.
The set of states of $\B_i$ is defined as
\[Q_i^\B := Q_i \times \{0,\dots,i-1\} \cup Q_i \times \Sigma \times \{i,\dots,k\}.\]
For all $q \in Q_i$, $j \in [0,k]$, $a,b \in \Sigma_\bot$, and $c \in \Sigma$ we let
\begin{align*}
\delta_i^\B((q,j),\svector{a}{b}) &:= ((q,j+1),1), \text{ if } j \le i-2 \\
\delta_i^\B((q,i-1),\svector{c}{b}) &:= ((q,c,i),1) \\
\delta_i^\B((q,c,j),\svector{a}{b}) &:= ((q,c,j+1),1), \text{ if } i \le j < k
\end{align*}
and for all $r_1,r_2 \in [0,r]$ with $\rk(c) \le \max\{r_1,r_2\}$ let
\[\delta_i^\B((q,c,k),\svector{\#_{r_1}}{\#_{r_2}}) := \delta'_i(q,c)\]
where $\delta'_i(q,c)$ is the formula $\delta_i(q,c)$ in which each variable $(p,\ell)$ is replaced by $((p,0),\ell)$.

We now construct the ATA 
\[\A_{k_1,k_2} = (Q_{k_1,k_2},\Omega^2,\delta_{k_1,k_2},(q_0,0))\]
for $R_{k_1,k_2}$.
Intuitively, $\A_{k_1,k_2}$ checks if $(t_{k_1},t_{k_2}) \in R$ holds in $p(t_1,\dots,t_k) \otimes p(t_{k+1},\dots,t_{2k})$.
We only show the construction for the case $1 \le k_1 \le k < k_2 \le 2k$ and $k_1 < k_2-k$ and note that the other cases work analogously.
The set of states of $\A_{k_1,k_2}$ is defined as
\begin{align*}
Q_{k_1,k_2} := \ & Q \times \{0,\dots,k_1-1\} \ \cup \\
& Q \times \Sigma_\bot \times \{k_1,\dots,k_2-k-1\} \ \cup \\
& Q \times \Sigma_\bot^2 \times \{k_2-k,\dots,k\}.
\end{align*}
We now define the transition function.
For all $q \in Q$, $j \in [0,k]$, and $a,b,c,d \in \Sigma_\bot$ we let
\[\delta_{k_1,k_2}((q,j),\svector{a}{b}) := ((q,j+1),1)\]
if $j \le k_1-2$,
\begin{align*}
\delta_{k_1,k_2}((q,k_1-1),\svector{a}{b}) &:= ((q,a,k_1),1) \\
\delta_{k_1,k_2}((q,c,j),\svector{a}{b}) &:= ((q,c,j+1),1)
\end{align*}
if $k_1 \le j \le k_2-k-2$,
\begin{align*}
\delta_{k_1,k_2}((q,c,k_2-k-1),\svector{a}{b}) &:= ((q,\svector{c}{b},k_2-k),1) \\
\delta_{k_1,k_2}((q,\svector{c}{d},j),\svector{a}{b}) &:= ((q,\svector{c}{d},j+1),1)
\end{align*}
if $k_2-k \le j \le k-1$, and for all $r_1,r_2 \in [0,r]$ with $\rk(c), \rk(d) \le \max\{r_1,r_2\}$ let
\[\delta_{k_1,k_2}((q,\svector{c}{d},k),\svector{\#_{r_1}}{\#_{r_2}}) := \delta'(q,\svector{c}{d})\]
where $\delta'(q,\svector{c}{d})$ is the formula $\delta(q,\svector{c}{d})$ in which each variable $(p,\ell)$ is replaced by $((p,0),\ell)$.

The ATA for the relation $R_{s_0,i}$ with $i \in [1,k]$ that checks if in $p(t_1,\dots,t_k) \otimes p(t_{k+1},\dots,t_{2k})$ we have that $(s_0,t_i) \in R$ can be constructed similarly to $\A_{i,j}$.
Note that all the constructions above can be done in polynomial time.

It now holds that $s_0 \in \rec(L_1,\dots,L_k)[R]$ if and only if $\varphi := \ram x,y \colon R'(x,y) \vee \exists x \colon R'(x,x)$ is valid where
\[R' := R_p \cap \bigcap_{i=1}^{k} R_i \cap \bigcap_{1 \le i < j \le 2k} R_{i,j} \cap \bigcap_{i=1}^k R_{s_0,i}.\]
Since an ATA for the intersection of two ATAs can be constructed in linear time, we can construct an ATA $\A'$ for $R'$ in time polynomial in the size of the ATAs $\A, \A_1,\dots,\A_k$ and $s_0$. 
By \cref{thm:tree-ramsey} validity of the first disjunct of $\varphi$ can be checked in time exponential in the size of $\A'$.
It is easy to see that validity of the second disjunct of $\varphi$ can also be checked in time exponential in $\A'$.
This yields an exponential time algorithm in total.
\end{proof}

\section{Proofs of Section~\ref{sec:unranked}}\label{sec:app-unranked}
A \emph{nondeterministic unranked tree automaton} (NUTA) over the unranked alphabet $\Sigma$ is a tuple $\A = (Q,\Sigma,\Delta,q_0)$ where $Q$ and $q_0$ are as in the definition of an NTA and $\Delta \subseteq Q \times \Sigma \times \mathit{REG}(Q)$ is a finite set of transitions. 
Here, $\mathit{REG}(Q)$ denotes the set of regular word languages over $Q$ and we assume that the regular language for each transition is given by an NFA.
A \emph{run} of $\A$ on an unranked tree $t \in \U_\Sigma$ is an unranked tree $\rho \in \U_Q$ with $\dom(\rho) = \dom(t)$ such that $\rho(\varepsilon) = q_0$ and for each inner node $u \in \dom(\rho)$ with children $u1,\dots,ur \in \dom(\rho)$ there is a transition $(\rho(u),t(u),L) \in \Delta$ such that $\rho(u1) \cdots \rho(ur) \in L$. 
A run $\rho$ is \emph{accepting} if for each leaf $u \in \dom(\rho)$ there is a transition $(\rho(u),t(u),L) \in \Delta$ such that $\varepsilon \in L$.
We define \emph{unranked tree-regular relations} in the same way as in the ranked case by using NUTAs instead of NTAs.

\subsection{Proof of Theorem~\ref{thm:unranked-ramsey}}\label{sec:app-unranked-thm}

\begin{definition}
For an unranked tree $t \in \U_\Sigma$ we define the \emph{first-child next-sibling encoding} $\fcns : \N^* \to \{1,2\}^*$ such that 
$\fcns(\varepsilon) = \varepsilon$ and for all $u \in \N^*$ we have $\fcns(u1) = \fcns(u)1$ and $\fcns(u(i+1)) = \fcns(ui)2$ for all $i \ge 1$.
We let $\fcns(\dom(t)) := \bigcup_{u \in \dom(t)} \{\fcns(u),\fcns(u)1,\fcns(u)2\}$.
We define $t' = \fcns(t)$ to be the binary tree with domain $\fcns(\dom(t))$ over the ranked alphabet $\Sigma_\#$ such that 
\begin{itemize}
\item $t'(\fcns(u)) := t(u)$ if $u \in \dom(t)$,
\item $t'(\fcns(u)) := \#$ if $\fcns(u) \in \fcns(\dom(t))$ and $u \notin \dom(t)$.
\end{itemize}
Here, we consider $\Sigma_\# = \Sigma \cup \{\#\}$ as a ranked alphabet with $\rk(a) = 2$ for all $a \in \Sigma$ and $\rk(\#) = 0$.
\end{definition}

\begin{definition}\label{def:convolution}
For unranked trees $t_1, t_2 \in \U_\Sigma$ we define the adapted convolution $t' = t_1 \otimes' t_2 \in \U_{\Sigma_{\#,\bot} \times \Sigma_{\#,\bot}}$ such that $\dom(t') := \dom(t_1) \cup \dom(t_2)$ and
\begin{itemize}
\item $t'(u) := (t_1(u),t_2(u))$ if $u \in \dom(t_1) \cap \dom(t_2)$,
\item $t'(u) := (t_1(u),\#)$ if $u \in \dom(t_1) \setminus \dom(t_2)$ and there exists $v \in \dom(t_2)$ such that $u$ is the first child or right sibling of $v$,
\item $t'(u) := (t_1(u),\bot)$ if $u \in \dom(t_1) \setminus \dom(t_2)$ and the above conditions do not hold,
\item the other cases are symmetric.
\end{itemize}
\end{definition}

\begin{definition}
For the convolution $t = t_1 \otimes' t_2 \in \U_{\Sigma_{\#,\bot} \times \Sigma_{\#,\bot}}$ we define the adapted fist-child next-sibling encoding $t' = \fcns'(t_1 \otimes t_2) \in \T_{\Sigma_{\#,\bot} \times \Sigma_{\#,\bot}}$ such that $\dom(t') := \fcns(\dom(t))$ and for $u' = \fcns(u) \in \fcns(\dom(t))$ we have
\begin{itemize}
\item $t'(u') := t(u)$ if $u \in \dom(t)$,
\item $t'(u') := (\#,\#)$ if $u \notin \dom(t)$ and for parent $\fcns(v)$ of $u'$ we have $t(v) \in \Sigma \times \Sigma$,
\item $t'(u') := (\bot,\#)$ if $u \notin \dom(t)$ and for parent $\fcns(v)$ of $u'$ we have $t(v) \in \{\#,\bot\} \times \Sigma$,
\item $t'(u') := (\#,\bot)$ if $u \notin \dom(t)$ and for parent $\fcns(v)$ of $u'$ we have $t(v) \in \Sigma \times \{\#,\bot\}$.
\end{itemize}
Here, we consider $\Sigma_{\#,\bot} = \Sigma_\# \cup \{\bot\}$ as ranked alphabet with $\rk(\bot) = 0$.
\end{definition}
See \cref{fig:adapted} for an example of the adapted convolution and encoding.

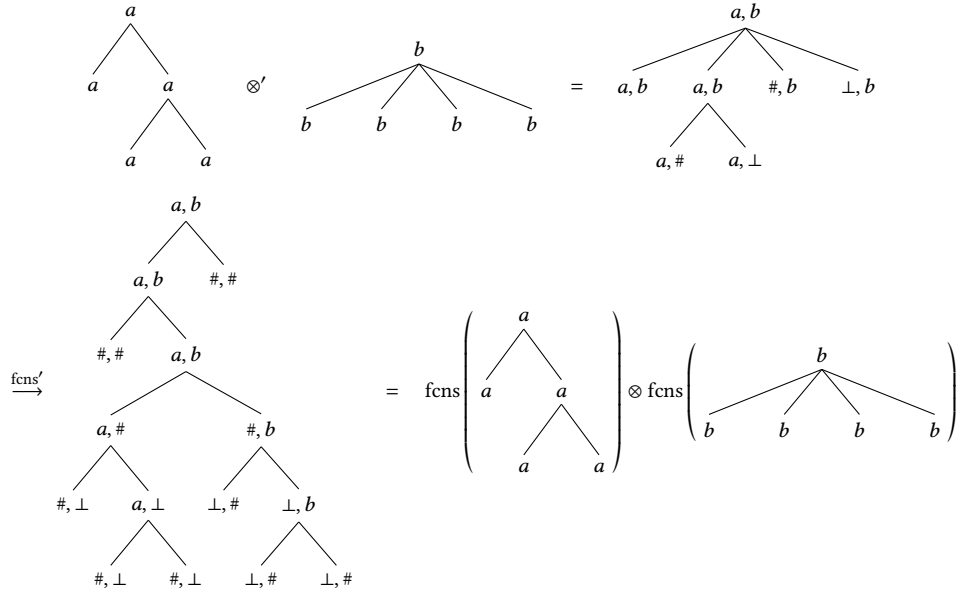
\begin{figure*}
\centering\small
\adjustbox{valign=c}{\begin{tikzpicture}
\node {$a$}
	child { node {$a$} }
	child { node {$a$}
		child { node {$a$}}
		child { node {$a$}}};
\end{tikzpicture}}
\quad $\otimes'$ \quad
\adjustbox{valign=c}{\begin{tikzpicture}
\node {$b$}
	child { node {$b$} }
	child { node {$b$} }
	child { node {$b$} }
	child { node {$b$} };
\end{tikzpicture}}
\quad $=$ \quad
\adjustbox{valign=c}{\begin{tikzpicture}
\node {$a,b$}
	child { node {$a,b$} }
	child { node {$a,b$} 
		child { node {$a,\#$} }
		child { node {$a,\bot$} }}
	child { node {$\#,b$} }
	child { node {$\bot,b$} };
\end{tikzpicture}}
\medskip\\
$\stackrel{\fcns'}{\longrightarrow}$
\adjustbox{valign=c}{\begin{tikzpicture}
\node {$a,b$}
	child { node {$a,b$} 
		child { node {$\#,\#$} }
		child { node {$a,b$} 
			child[sibling distance = 2cm] { node {$a,\#$} 
				child[sibling distance = 1cm] { node {$\#,\bot$} }
				child[sibling distance = 1cm] { node {$a,\bot$} 
					child[sibling distance = 1cm] { node {$\#,\bot$}} 
					child[sibling distance = 1cm] { node {$\#,\bot$}} }}
			child[sibling distance = 2cm] { node {$\#,b$} 
				child[sibling distance = 1cm] { node {$\bot,\#$} }
				child[sibling distance = 1cm] { node {$\bot,b$} 
					child[sibling distance = 1cm] { node {$\bot,\#$}}
					child[sibling distance = 1cm] { node {$\bot,\#$}} }}}}
	child { node {$\#,\#$} };
\end{tikzpicture}}
\quad $=$ \quad
$\fcns \begin{pmatrix}
\adjustbox{valign=c}{\begin{tikzpicture}
\node {$a$}
	child { node {$a$} }
	child { node {$a$}
		child { node {$a$}}
		child { node {$a$}}};
\end{tikzpicture}}
\end{pmatrix}$
$\otimes$
$\fcns \begin{pmatrix}
\adjustbox{valign=c}{\begin{tikzpicture}
\node {$b$}
	child { node {$b$} }
	child { node {$b$} }
	child { node {$b$} }
	child { node {$b$} };
\end{tikzpicture}}
\end{pmatrix}$
\caption{Example for adapted convolution and encoding}
\label{fig:adapted}
\end{figure*}

\begin{lemma}\label{lem:convolution}
Let $R \subseteq \U_\Sigma \times \U_\Sigma$ be an unranked tree-regular relation given by the NUTA $\A$, i.e., $R = \{(t_1,t_2) \in \U_\Sigma \times \U_\Sigma \mid (t_1 \otimes t_2) \in L(\A)\}$.
Then we can construct an NUTA $\A'$ in polynomial time such that $R = \{(t_1,t_2) \in \U_\Sigma \times \U_\Sigma \mid (t_1 \otimes' t_2) \in L(\A')\}$. 
That is, $\A'$ uses the convolution $\otimes'$ instead of $\otimes$.
Conversely, we can also construct $\A$ from $\A'$ in polynomial time such that the above is satisfied.
\end{lemma}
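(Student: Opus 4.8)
The plan rests on the observation that $t_1 \otimes t_2$ and $t_1 \otimes' t_2$ always share the domain $\dom(t_1) \cup \dom(t_2)$ and agree on every $\Sigma$-labelled position, differing only on padding positions, where $\otimes$ writes $\bot$ while $\otimes'$ writes either $\#$ or $\bot$. Let $\phi$ be the symbol map replacing $\#$ by $\bot$ in each coordinate and fixing all other symbols; inspecting \cref{def:convolution} gives $\phi(t_1 \otimes' t_2) = t_1 \otimes t_2$ for all $t_1, t_2 \in \U_\Sigma$, since on $\dom(t_1) \cap \dom(t_2)$ nothing changes and on a padding position both $\#$ and $\bot$ are sent to $\bot$. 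The construction of $\A'$ from $\A$ is then immediate: I would let $\A'$ simulate $\A$ under $\phi$, adding for every transition $(q,c,L)$ of $\A$ and every symbol $c'$ with $\phi(c') = c$ the transition $(q,c',L)$ to $\A'$. As each $c$ has at most four preimages, this is a constant-factor, polynomial-time construction, and $\A'$ accepts $t_1 \otimes' t_2$ iff $\A$ accepts $\phi(t_1 \otimes' t_2) = t_1 \otimes t_2$, i.e.\ iff $(t_1,t_2) \in R$.

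For the converse I must reconstruct the finer labelling of $\otimes'$ from that of $\otimes$, which is the crux. The key point is that the $\otimes'$-label of a node $u$ is a \emph{local} function of its $\otimes$-label together with the domain-memberships of its immediate left neighbour --- the parent of $u$ if $u$ is a first child, and the left sibling of $u$ otherwise. Unwinding \cref{def:convolution}, a position $u \in \dom(t_1) \setminus \dom(t_2)$ receives $\#$ in its second coordinate exactly when this left neighbour lies in $\dom(t_2)$, and symmetrically for $u \in \dom(t_2) \setminus \dom(t_1)$, while positions in $\dom(t_1) \cap \dom(t_2)$ are unchanged. Since membership of a node in $\dom(t_1)$ or $\dom(t_2)$ is read directly off the two coordinates of its $\otimes$-label (a $\Sigma$-symbol versus $\bot$), this yields a fixed function $f$ with $(t_1 \otimes' t_2)(u) = f\big((t_1 \otimes t_2)(u),\ell_u\big)$, where $\ell_u \in \{0,1\}^2$ records the two memberships of $u$'s left neighbour.

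I would then build $\A$ from $\A'$ by a product construction threading $\ell_u$ through the run. Each state of $\A$ pairs an $\A'$-state with a guessed $\ell_u$, and also records the node's own two memberships (forced by its label). At $u$ the automaton recovers the intended label $f((t_1 \otimes t_2)(u),\ell_u)$ and fires the matching $\A'$-transition; the horizontal regular language over children is replaced by one that enforces the original $\A'$-constraint on the state components while checking that the $\ell$-component of the first child equals $u$'s own memberships and that of each later child equals its left sibling's memberships. This is finite-state bookkeeping along the child word, so each child language grows by only a constant factor and $\A$ is polynomial-time computable. The main obstacle is precisely this reverse direction: one has to verify that the contextual data needed to recover $\otimes'$ (the left-neighbour memberships) is genuinely finite-state and can be propagated inside the NUTA's horizontal regular languages without blow-up, whereas the forward direction is a routine relabelling.
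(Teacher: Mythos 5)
Your proposal is correct, but it distributes the work across the two directions in exactly the opposite way from the paper, so it is worth comparing. You treat the direction $\A \to \A'$ as a trivial relabelling (close $\A$ under the letter-to-letter preimages of the map $\phi$ collapsing $\#$ to $\bot$, using $\phi(t_1\otimes' t_2)=t_1\otimes t_2$), and you put all the effort into $\A'\to\A$, where you recover the $\#$-versus-$\bot$ distinction on the fly by observing that it is a local function of the node's $\otimes$-label and the domain-memberships of its parent (if the node is a first child) or immediate left sibling (otherwise), and you thread this information through the product states and the horizontal regular languages. The paper does the mirror image: it builds $\A'$ from $\A$ by storing in the state, for each component, whether the current node is the first child or right sibling of a node carried by that component (thereby enforcing the conditions of \cref{def:convolution}), and it obtains $\A$ from $\A'$ by simply replacing $\#$ by $\bot$ in the transitions. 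Both routes use the same two ingredients --- the symbol collapse and the constant amount of first-child/right-sibling bookkeeping --- and both are polynomial (indeed linear up to constant factors). Your split is arguably the more robust one relative to the literal statement of the lemma: the paper's converse (blind replacement of $\#$ by $\bot$) is only sound when $L(\A')$ contains no ill-formed trees whose $\phi$-image happens to be a genuine convolution $t_1\otimes t_2$, a guarantee that the paper's forward construction provides but that the lemma's hypotheses on $\A'$ do not; your converse simulates $\A'$ exactly on $t_1\otimes' t_2$ and so needs no such assumption. Conversely, the paper's forward construction yields an $\A'$ accepting only well-formed adapted convolutions, which is convenient downstream, whereas your relabelled $\A'$ may accept junk outside the image of $\otimes'$ --- harmless for the lemma as stated, but worth noting. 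One small point to make explicit in your converse: the root has no left neighbour, but it always lies in $\dom(t_1)\cap\dom(t_2)$ and its label is unchanged, so the guessed $\ell$-component at the root is irrelevant.
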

\begin{proof}
To construct $\A'$, we adapt $\A$ such that the conditions of \cref{def:convolution} are satisfied. 
To this end, a state stores for each of the two components if it is the first child or right sibling of a node where the component is labeled by a symbol of $\Sigma$.
Conversely, $\A$ can be constructed by simply replacing $\#$ in the transition of $\A'$ by $\bot$
\end{proof}

The next lemma shows that the connection between the adapted and classical notions of convolution and encoding suggested by \cref{fig:adapted} holds true in general.
\begin{lemma}\label{lem:commutes}
For unranked trees $t_1,t_2 \in \U_\Sigma$ it holds that
\[ \fcns'(t_1 \otimes' t_2) = \fcns(t_1) \otimes \fcns(t_2).\]
\end{lemma}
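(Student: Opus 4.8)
The plan is to establish the identity $\fcns'(t_1 \otimes' t_2) = \fcns(t_1) \otimes \fcns(t_2)$ by checking that the two trees have the same domain and the same label at every node. Write $D_i := \dom(t_i)$ and $D := D_1 \cup D_2$. The key observation for the domains is that $\fcns \colon \N^* \to \{1,2\}^*$ acts pointwise on positions, independently of any tree, so $D \mapsto \fcns(D)$ distributes over unions: $\fcns(D) = \fcns(D_1) \cup \fcns(D_2)$. By definition of $\fcns'$ the left-hand side has domain $\fcns(\dom(t_1 \otimes' t_2)) = \fcns(D)$, while the right-hand side has domain $\dom(\fcns(t_1)) \cup \dom(\fcns(t_2)) = \fcns(D_1) \cup \fcns(D_2)$; the two coincide.

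Before comparing labels I would record three elementary facts about $\fcns$: it is injective; $\fcns(u)1 = \fcns(u1)$ (left child $=$ first child); and $\fcns(u)2 = \fcns(v)$ whenever $v$ is the right sibling of $u$ and $u \neq \varepsilon$. From these I would derive the bridging lemma that the whole argument turns on: for $u \notin D_i$,
\[\fcns(u) \in \fcns(D_i) \iff u \text{ is the first child or the right sibling of some } v \in D_i.\]
Indeed, since $u \notin D_i$, injectivity excludes $\fcns(u) = \fcns(v)$ with $v \in D_i$, so membership can only arise as $\fcns(u) = \fcns(v)1 = \fcns(v1)$ or $\fcns(u) = \fcns(v)2$ for some $v \in D_i$, which by the two recursions and injectivity forces $u = v1$ or $u$ to be the right sibling of $v$.

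With the bridge in place, the label comparison splits into two cases for a node $w \in \fcns(D)$. If $w = \fcns(u)$ with $u \in D$ (a \emph{real} node), the left-hand label is $t(u)$ for $t = t_1 \otimes' t_2$, and I would verify componentwise that $\fcns(t_1) \otimes \fcns(t_2)$ yields the same value: for $u \in D_i$ both sides give $t_i(u)$, while for $u \in D \setminus D_i$ the adapted convolution emits $\#$ exactly when $u$ is a first child or right sibling of a $D_i$-node and $\bot$ otherwise---by the bridge this is precisely the alternative $\fcns(u) \in \fcns(D_i)$ (so that $\fcns(t_i)$ supplies its padding symbol $\#$ at $\fcns(u)$) versus $\fcns(u) \notin \fcns(D_i)$ (so that the convolution supplies $\bot$). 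The remaining nodes $w \in \fcns(D)$ are padding leaves: binary children $\fcns(v)1$ or $\fcns(v)2$ of some $\fcns(v)$ with $v \in D$ that are not themselves of the form $\fcns(u)$ with $u \in D$. For these, the same three facts give $w \in \fcns(D_i) \iff v \in D_i \iff$ component $i$ of $t(v)$ is a genuine $\Sigma$-symbol. The three clauses of $\fcns'$ assign to $w$ the value $(\#,\#)$, $(\bot,\#)$, or $(\#,\bot)$ according as $t(v)$ lies in $\Sigma \times \Sigma$, $\{\#,\bot\} \times \Sigma$, or $\Sigma \times \{\#,\bot\}$ (an exhaustive trichotomy, since $v \in D$ forces at least one component of $t(v)$ into $\Sigma$), and in each case this matches $\fcns(t_1) \otimes \fcns(t_2)$, which writes $\#$ in component $i$ iff $w \in \fcns(D_i)$ and $\bot$ otherwise.

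The step I expect to be the main obstacle is the bridging lemma and, more generally, the careful interfacing of the two ``vocabularies'' for deciding $\#$ versus $\bot$: the adapted convolution $\otimes'$ uses the unranked adjacency notions first child and right sibling, whereas both encodings use membership in $\fcns(D_i)$, i.e.\ the binary parent--child relation. Getting this correspondence exactly right---and, in particular, accounting for the single exceptional position $2 = \fcns(\varepsilon)2$, the phantom right sibling of the root, which is the only element of $\fcns(D)$ outside the image of $\fcns$ and which both sides label $(\#,\#)$---is where all the bookkeeping lives; once the bridge is proved, both cases of the label comparison are routine.
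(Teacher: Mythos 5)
Your proof is correct and follows essentially the same route as the paper's (much terser) argument: both rest on the observation that the $\#$/$\bot$ distinction built into $\otimes'$ and $\fcns'$ is designed to coincide exactly with which positions $\fcns$ pads, and that $\fcns$ maps equal positions in $t_1$ and $t_2$ to equal positions in the encodings. Your explicit bridging lemma and the handling of the phantom position $2=\fcns(\varepsilon)2$ simply supply the bookkeeping that the paper leaves implicit.
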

\begin{proof}
The definitions of $\otimes'$ and $\fcns'$ ensure that the padding symbol $\#$ is used if the node would also be padded by $\fcns$ and otherwise the padding symbol $\bot$ is used. 
The result follows since nodes at the same position in $t_1$ and $t_2$ are mapped to the same position in the encodings $\fcns(t_1)$ and $\fcns(t_2)$.
\end{proof}

The following lemma shows that $\fcns$ preserves all properties of an unranked tree-regular relation and an NTA for the encoded relation over binary trees can be computed in polynomial time.

\begin{lemma}\label{lem:isomorphism}
The encoding $\fcns$ is an isomorphism from a graph $(\U_\Sigma,R)$ of unranked trees to a graph $(\fcns(\U_\Sigma),R')$ of binary trees where $R$ is an unranked tree-regular relation and $R'$ is a tree-regular relation over binary trees. 
Moreover, an NTA that recognizes $R'$ can be constructed in polynomial time given an NUTA recognizing $R$ and vice versa.
\end{lemma}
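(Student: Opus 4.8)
The plan is to obtain the isomorphism claim essentially for free and to concentrate all the work on the effective, polynomial-time translation between automata, using \cref{lem:convolution} and \cref{lem:commutes} as the two bridges. First I would \emph{define} the target relation as the image of $R$ under the encoding, $R' := \{(\fcns(t_1),\fcns(t_2)) \mid (t_1,t_2)\in R\}$. Since $\fcns$ is injective and maps $\U_\Sigma$ bijectively onto $\fcns(\U_\Sigma)$, this definition makes $\fcns$ a graph isomorphism from $(\U_\Sigma,R)$ onto $(\fcns(\U_\Sigma),R')$ with nothing left to verify for the isomorphism part. The genuine content is therefore that $R'$ is tree-regular over binary trees and that automata for $R$ and $R'$ are interconstructible in polynomial time.

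For the forward construction I would first reduce to the adapted convolution. Given an NUTA $\A$ recognizing $R$ with respect to $\otimes$, \cref{lem:convolution} produces in polynomial time an NUTA $\A'$ with $R = \{(t_1,t_2) \mid t_1 \otimes' t_2 \in L(\A')\}$. By \cref{lem:commutes} we have $\fcns'(t_1\otimes' t_2) = \fcns(t_1)\otimes\fcns(t_2)$, so the binary-tree language that an NTA for $R'$ must recognize, namely $\{\fcns(t_1)\otimes\fcns(t_2) \mid (t_1,t_2)\in R\}$, is exactly $\fcns'(L(\A'))$. Hence it suffices to translate the NUTA $\A'$ over the product alphabet into a binary NTA $\B$ recognizing the $\fcns$-encoding of its language.

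This translation is the standard regularity-preserving passage from unranked to binary tree automata; the only point requiring care is that the horizontal languages of $\A'$ are given by NFAs. Concretely, in a $\fcns$-encoded tree the left child of a node is its first child and the right child is its next sibling, so the children of a node are read along a maximal right-going chain. The state set of $\B$ is a disjoint sum over the transitions of $Q\times P$, where $Q$ is the vertical state set of $\A'$ and $P$ collects the states of the horizontal NFAs; a state $(q,p)$ records the vertical state $q$ guessed at the node together with the horizontal NFA state $p$ reached while scanning the current sibling chain up to that node. At a node labelled $a$, $\B$ guesses a transition $(q,a,L)$ of $\A'$, passes the initial state of the NFA for $L$ into the left (first-child) subtree to verify the horizontal word formed by that node's children, and passes into the right (next-sibling) subtree the horizontal state obtained by advancing $p$ on reading $q$, thereby continuing the parent's horizontal NFA; at a rank-$0$ leaf it accepts iff the incoming horizontal state is final. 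The padding symbols $(\#,\#)$, $(\bot,\#)$, $(\#,\bot)$ that $\fcns'$ places at these leaves are treated as ordinary end-of-chain markers, and their correctness is already subsumed by \cref{lem:commutes}. Since $|Q\times P|$ is polynomial, $\B$ has polynomial size.

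The converse is symmetric: from a binary NTA $\B$ for $R'$, reversing the encoding above yields in polynomial time an NUTA $\A'$ for $R$ with respect to $\otimes'$, whose horizontal languages are read off the right-sibling chains and represented by NFAs, and the converse half of \cref{lem:convolution} then converts $\A'$ into an NUTA $\A$ for $R$ with respect to $\otimes$. I expect the main obstacle to be purely the bookkeeping of this simulation: verifying that the horizontal NFA threads correctly through the $\fcns$ structure and that the rank-$0$ leaves are consistently typed, so that $L(\B)$ is \emph{exactly} $\{\fcns(t_1)\otimes\fcns(t_2)\mid (t_1,t_2)\in R\}$ and not a superset containing ill-formed encodings. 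Intersecting with the tree-regular set of well-formed $\fcns$-encodings, if necessary, closes this gap while keeping every construction polynomial.
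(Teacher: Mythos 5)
Your proposal is correct and follows essentially the same route as the paper's proof: both reduce to the adapted convolution via \cref{lem:convolution}, invoke \cref{lem:commutes} to identify the target binary-tree language as the $\fcns'$-image, apply the standard unranked-to-binary automaton translation (you merely spell out the horizontal-NFA threading that the paper delegates to a citation, and handle the padding leaves by intersecting with well-formed encodings where the paper stores parent-label types in the states), and obtain the isomorphism from injectivity of $\fcns$.
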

\begin{proof}
Let $R \subseteq \U_\Sigma \times \U_\Sigma$ be an unranked tree-regular relation given by the NUTA $\A$. By \cref{lem:convolution} one can construct an NUTA $\A'$ that recognizes $R$ but uses $\otimes'$ as convolution. From $\A'$ we can construct an NTA $\B'$ that accepts the language
\begin{multline*}
\{\fcns'(t_1 \otimes' t_2) \mid (t_1 \otimes' t_2) \in L(\A')\} \\= \{\fcns(t_1) \otimes \fcns(t_2) \mid (t_1,t_2) \in R\}
\end{multline*}
where the equality holds by \cref{lem:commutes}.
The automaton $\B'$ can be constructed in polynomial time in the usual way for the first-child next-sibling encoding (cf. \cite{gottlob2005complexity}) but we additionally store in states if the label of the parent of the current node is in $\Sigma \times \Sigma$, $\{\bot,\#\} \times \Sigma$, or $\Sigma \times \{\bot,\#\}$.
Let $R' := \{(\fcns(t_1),\fcns(t_2)) \in \T_{\Sigma_\#} \times \T_{\Sigma_\#} \mid (\fcns(t_1) \otimes \fcns(t_2)) \in L(\B')\}$ be the tree-regular relation over binary trees recognized by $\B'$. Then it hods that $(\fcns(t_1),\fcns(t_2)) \in R'$ iff $(t_1,t_2) \in R$ for any unranked trees $t_1,t_2 \in \U_\Sigma$. Since the first-child next-sibling encoding is injective, it follows that $\fcns$ is an isomorphism from $(\U_\Sigma,R)$ to $(\fcns(\U_\Sigma),R')$.

The construction of an NTA for $R$ given an NTA for $R'$ works analogously using $(\fcns')^{-1}$ and the reverse direction of \cref{lem:convolution}.
\end{proof}

Note that we can generalize the above constructions from the binary to the $n$-ary case such that the same statements hold.

\begin{proof}[Proof of \cref{thm:unranked-ramsey}]
By \cref{lem:isomorphism} we can compute an NTA $\A'$ that recognizes the relation 
\[R' := \{(\fcns(t_1),\dots,\fcns(t_{k+2})) \mid (t_1,\dots,t_{k+2}) \in R\}\] 
over binary trees in polynomial time.
Moreover, it holds that $(\U_\Sigma,R)$ is isomorphic to $(\fcns(\U_\Sigma),R')$.
By \cref{thm:tree-ramsey} we can construct an NTA $\B'$ that recognizes the relation
\[\{\bm{c} \in \fcns(\U_\Sigma)^k \mid \ram x,y \colon R'(x,y,\bm{c})\}\]
in exponential time given $\A'$.
If $R$ and therefore also $R'$ are transitive, \cref{thm:transitive-ramsey} implies that $\B'$ can be computed in polynomial time.
From $\B'$ one can compute an NUTA $\B$ that recognizes the relation
\begin{multline*}
\{(\fcns^{-1}(c_1),\dots,\fcns^{-1}(c_k)) \in (\U_\Sigma)^k \mid \ram x,y \colon R'(x,y,\bm{c})\} \\
= \{\bm{c} \in (\U_\Sigma)^k \mid \ram x,y \colon R(x,y,\bm{c})\}
\end{multline*}
in polynomial time by applying the reverse direction of \cref{lem:isomorphism}.
\end{proof}

\subsection{Definition of subtree and flat prefix rewriting systems}\label{sec:app-sfprs}
For a tree $t  \in \U_\Sigma$ and node $x \in \dom(t)$ we write $t_{\downarrow x}$ for the subtree of $t$ rooted in $x$. 
We denote by $t[x|s]$ the tree that is obtained from $t$ if we replace $t_{\downarrow x}$ by the tree $s \in \U_\Sigma$.
If $\mathrm{ht}(t) = 1$, we denote the sequence of leaves of $t$ read from left to right by $\mathrm{flatfront}(t)$.
Here, $\mathrm{ht}(t) := \max\{|x| \mid x \in \dom(t)\}$ is defined as the height of $t$.

\begin{definition}
A \emph{subtree and flat prefix rewriting system (SFPRS)} over unranked trees in $\U_\Sigma$ is of the form $\mathcal{R} = (\Sigma,\Gamma,R,t_{\mathrm{in}})$, with a finite unranked alphabet $\Sigma$, a finite transition alphabet $\Gamma$, an initial tree $t_{\mathrm{in}}$, and a finite set $R$ of rules of two types:
\begin{enumerate}
\item subtree substitution with rules of the form 
$r_j \colon s_j \rewrite{\sigma} s'_j$ for $j \in J$, $s_j, s'_j \in \U_\Sigma$, $\sigma \in \Gamma$, and
\item flat prefix substitution at the flat front of the tree with rules of the form
$r_i \colon u_i \rewrite{\sigma} u'_i$ for $i \in I$, $u_i, u'_i \in \Sigma^+$, $\sigma \in \Gamma$,
\end{enumerate}
with $I \cup J = \{1,\dots,|R|\}$ and $I \cap J = \emptyset$.

A tree $t'$ is derived from $t$ (denoted $t \to_\mathcal{R}^\sigma t'$) by applying a subtree rewrite rule $r_j$, if there is a node $x \in \dom(t)$ with $t_{\downarrow x} = s_j$ such that $t[x | s'_j] = t'$.

A tree $t'$ is derived from $t$ by applying a prefix rewrite rule $r_i$, if there is a node $x \in \dom(t)$ with $\mathrm{ht}(t_{\downarrow x}) = 1$ and $\mathrm{flatfront}(t_{\downarrow x}) = u_i v$ and a tree $s \in \U_\Sigma$ with $\mathrm{ht}(s) = 1$, $s(\varepsilon) = t(x)$, and $\mathrm{flatfront}(s) = u'_i v$ such that $t[x | s] = t'$ for some $v \in \Sigma^*$.
\end{definition}

The definition of an SFPRS can be extended to a \emph{regular SFPRS} by allowing subtree rewrite rules of the form $S_j \rewrite{\sigma} S'_j$ with unranked tree-regular languages $S_j, S'_j \subseteq \U_\Sigma$ and prefix rewrite rules of the form $L_i \rewrite{\sigma} L'_i$ with regular languages $L_i, L'_i \subseteq \Sigma^*$.
Clearly, SFPRSs are special regular SFPRSs where the rules only have singleton sets.

\label{endofdocument}
\newoutputstream{pagestotal}
\openoutputfile{main.pgt}{pagestotal}
\addtostream{pagestotal}{\getpagerefnumber{endofdocument}}
\closeoutputstream{pagestotal}

\end{document}